\newcommand{\hlc}[2][yellow]{ {\sethlcolor{#1} \hl{#2}} }
\providecommand{\keywords}[1]
{
  {\small	
  \textbf{{Keywords--}} #1}
}
\theoremstyle{plain}
\newtheorem{theorem}{Theorem}[section] 
    \newtheorem{lemma}[theorem]{Lemma} 
    \newtheorem{proposition}[theorem]{Proposition}
    \newtheorem{corollary}[theorem]{Corollary}
    \newtheorem{definition}[theorem]{Definition}
    \newtheorem{example}[theorem]{Example}
    \newtheorem{remark}[theorem]{Remark}
\def\dsum #1#2{\displaystyle{\sum_{#1}^{#2}}}
\newcommand{\Fq}{\mathbb{F}_{q}}
\DeclareMathOperator{\Z}{\mathcal{Z}}
\DeclareMathOperator{\rank}{rank}
\DeclareMathOperator{\ord}{ord}
\DeclareMathOperator{\aut}{Aut}
\DeclareMathOperator{\GL}{GL_{n}}
\DeclareMathOperator{\lcm}{lcm}
\DeclareMathOperator{\lclm}{lclm}        
\DeclareMathOperator{\gcrd}{gcrd}
\DeclareMathOperator{\id}{Id}
\title{\textbf{Bounds and Equivalence of Skew Polycyclic Codes over Finite Fields}}
 \author[1]{Hassan  Ou-azzou\thanks{E-mail: \rm{hassan.ouazzou@student.unisg.ch}. Supported by a Swiss Government Excellence Scholarship (ESKAS), no:  2024.0504.}}
  \author[2]{Anna-Lena Horlemann\thanks{E-mail: \rm{anna-lena.horlemann@unisg.ch}}}
   \author[3]{Nuh Aydin\thanks{E-mail: \rm{aydinn@kenyon.edu}}}
\affil[1,2]{School of Computer Science, University of St.Gallen, 9000 St.\ Gallen, Switzerland}
\affil[3]{Department of Mathematics and Statistics, Kenyon College, Gambier, OH 43022, USA}
\date{ \today}
\begin{document}

	\maketitle
    
    \begin{abstract}

    We study skew polycyclic codes over a finite field $\mathbb{F}_q$, associated with a skew polynomial $f(x) \in \mathbb{F}_q[x;\sigma]$, where $\sigma$ is an automorphism of $\mathbb{F}_q$.  We establish Roos-like bounds on the minimum Hamming and rank distances for these codes, generalizing several known distance bounds for various (consta)cyclic and skew (consta)cyclic code families. To construct codes with a prescribed minimum distance, we develop an approach based on $\mu$-closed sets, which serve as a non-commutative analog of cyclotomic cosets in the commutative case. We further study the equivalence of ambient spaces for skew polycyclic codes under both the Hamming and rank metrics. General conditions are derived under which two such ambient spaces yield equivalent code families, allowing a complete classification of equivalence classes. In the special case of skew trinomial codes, we determine when a code is equivalent to one defined by a trinomial of the form $x^n-x^{\ell}-1,$ and we describe a general procedure to determine the equivalence class of any skew polycyclic code. Several examples are provided to illustrate the application of the theoretical results on code equivalence.

\end{abstract}
   \keywords{Skew polycyclic codes,   skew  constacyclic codes, equivalent codes, irreducible polynomials, skew polynomials.}

\section{Introduction}
Coding theory plays a fundamental role in various applications, such as error detection and correction, data transmission, data storage and reliable communication. It involves the study of efficient encoding and decoding methods for transmitting data reliably over noisy channels. {Cyclic codes} are one of the most important families of linear codes for both theoretical and practical reasons. They establish a key link between coding theory and algebra and their structure often makes them convenient for implementations. Furthermore, many of the most important classes of codes are cyclic or related to cyclic codes. Cyclic codes were introduced in the 1950's by Prange in \cite{Prange1957} as linear codes with the property that the cyclic shift of any codeword is another codeword. They were later generalized to {constacyclic codes} in \cite{Berlekamp1968} and to {polycyclic codes} (also known as {pseudo-cyclic codes}) in \cite{10}.  Like cyclic (and constacyclic) codes, each polycyclic code over a finite field $\Fq$ can be described by an ideal of the polynomial ring $\Fq[x]/\langle f(x)\rangle,$ where $f(x)$ is a nonzero polynomial in $\Fq[x]$.  Polycyclic codes contain constacyclic codes as a special case when $f(x)=x^n-\lambda,$ for some non-zero $\lambda$ in $\Fq.$

One of the generalizations of cyclic codes are skew cyclic codes (\cite{Boucher2009,Boucher2014}). They can be viewed as left $\Fq[x;\sigma]$-submodules of $\Fq[x;\sigma] / \langle x^n-1\rangle,$ where $\sigma$ is an automorphism on $\mathbb{F}_q$, and $\langle x^n-1\rangle$ is the left ideal of $\Fq[x;\sigma]$ generated by $ x^n-1.$ Similarly to classical cyclic codes, skew cyclic codes are generalized in various ways, such as skew constacyclic codes, skew polycyclic  codes, and others. We refer to the following references for more details \cite{Almendras2018, Bag2025, Boulanouar2021, Ouazzou2023}.

In \cite{Chen2014}, an equivalence relation was introduced to study the equivalence of classes of constacyclic codes of length $n$ over $\mathbb{F}_q$, depending on the ideal defining the quotient ring. This was generalized to skew constacyclic codes over $\Fq$ \cite{Ouazzou2025}, and to the case of constacyclic codes over finite chain rings \cite{Chibloun2024}. In \cite{Boulanouar2021,Lobillo2025} some characterizations on equivalence of skew constacyclic codes were given. Regarding polycyclic codes,  several properties of trinomial codes and a number of conjectures related to their equivalence and duality are presented in \cite{Aydin2022, Shi2023}. The study of polycyclic codes was continued by extending the notion of $n$-equivalence to the case of polycyclic codes in \cite{Equiv2025} 
where the authors studied the properties of this equivalence relation, computed the number of $n$-equivalence classes, and provided conditions under which two ambient spaces of polycyclic (or $\ell$-trinomial) codes are equivalent. 

In this paper,  we study skew polycyclic codes over a finite field $\mathbb{F}_q$, associated with a skew polynomial $f(x) \in \mathbb{F}_q[x;\sigma]$, where $\sigma$ is an automorphism of $\mathbb{F}_q$. We start  by  proving  the Roos-like bound for both the Hamming and the rank metric for this class of codes. Next, we  focus on the Hamming and rank equivalence between two classes of polycyclic codes by introducing an equivalence relation and describing its equivalence classes. 
    Finally, we present examples that illustrate applications of the theory developed in this paper.

The remainder of this paper is organized as follows. Section 2 provides a review of the basic background on skew polynomial rings and skew polycyclic codes, plus the necessary background on the Hamming and the rank metric.  In Section 3, we prove the Roos-like bound for both metrics for this class of codes and study constructions of such codes with a designed distance. In Section 4 we study the equivalence of ambient spaces for skew polycyclic codes and provide conditions under which two skew polycyclic code families are equivalent -- again for both the Hamming and the rank metric. Finally, in Section 5, we provide examples for code equivalences studied in section 4.

\section{Preliminaries}
In this section we recall the necessary background about error correcting codes in the Hamming and rank metric, skew polynomials, and skew polycyclic codes.

Throughout we will denote a finite field with $q$ elements (for $q$ being a prime power) by $\Fq$. Similarly an extension field of $\Fq$ of degree $m$ is denoted by $\mathbb F_{q^m}$.  The general linear group $\GL(\mathbb{F}_{q}) $ is the group of invertible $n\times n$ matrices with entries from $\mathbb{F}_{q}$.

\subsection{Hamming and rank metric codes}

We start with definitions and results on error correcting codes in the Hamming and the rank distance. For more details  we refer the reader to \cite{augot2018generalized,gabidulin2025rank,Gabidulin1985,Huffman2003} 

\begin{definition}
      A $q$-ary \emph{code} of length $n$ is simply a subset of $\Fq^n$. A \emph{linear code} is a vector subspace of $\Fq^n$.
  \end{definition}

  \begin{definition}[Hamming distance]
  Let  $\Fq$  be a finite field.
  \begin{enumerate}
       \item The \emph{Hamming weight} of a vector $x=(x_0,x_1,\ldots, x_{n-1})\in \Fq^n$,  denoted by $  w_{H}(x),$ is the number of  nonzero coordinates of $x.$
       \item The \emph{Hamming distance} $d_H(x,y)$ between two vectors $x,y \in \Fq^n$ is defined as 
       $$ d_H(x,y):=w_H(x-y) .$$
\item  The   \emph{minimum Hamming distance} of a linear code $C\subseteq \Fq^n$, denoted $ d_H(C)$, is defined as the minimum Hamming weight $w_H(c)$ over all nonzero codewords $c \in C$, i.e.,
$$
d_{H}(C):=\min \left\{w_H(x): x \in C, x \neq 0\right\} .
$$
  \end{enumerate}
  \end{definition}

The Hamming distance satisfies the following bound, called the \emph{Singleton bound} (see, e.g., \cite{Huffman2003}): 
\begin{theorem}
Let $C\subseteq \Fq^n$ be a linear code of dimension $k$. Then
    $$ d_H(C)\leq n-k+1.$$
When equality is obtained, we say that $C$ is a  \emph{Maximum Distance Separable (MDS)} code.
\end{theorem}

  While codes in the Hamming metric are defined as sets of vectors, rank metric codes can be defined as subsets of the ring of $m\times n$ matrices over $\mathbb F_q$, $ \mathbb M_{m\times n}(\Fq).$ Such a code is called \emph{linear}\footnote{The notion for linearity differs in the literature, often depending on the largest field over which a code is linear.}, if it is a linear  subspace of $
\mathbb M_{m\times n}(\Fq).$ 
Via a vector space isomorphism $\mathbb M_{m\times n}(\Fq) \cong \mathbb{F}_{q^m}^n $ rank metric codes can also be represented as subsets of $\mathbb{F}_{q^m}^n$. A linear code in the above sense is then an $\Fq$-linear subspace of $\mathbb{F}_{q^m}^n$.



\begin{definition}[Rank distance]\label{rank_weight}
Consider an extension field $\mathbb{F}_{q^m}$.
\begin{enumerate}
\item  The \emph{rank weight} of $ x=(x_0, \ldots, x_{n-1}) \in \mathbb{F}_{q^m}^n$ over $ \mathbb{F}_{q},$ denoted $w_{r}(x)$, is the dimension of the $ \mathbb{F}_{q}$-vector space spanned by $x_0, \ldots, x_{n-1},$ i.e., $$ w_{r}(x):=\dim_{\mathbb{F}_{q}}(\langle x_0,\ldots,x_{n-1}\rangle).$$ 
    \item The relation  $d_R(x, y):=w_r(x-y)$ for $x, y \in \mathbb F_{q^m}^n$ defines a distance on $\mathbb F_{q^m}^n$ called the \emph{rank distance.}
    \item  The \emph{minimum rank distance} $d_R(C)$ of a linear code $C\subseteq \mathbb F_{q^m}^n$ is defined as 
$$ d_R(C):=\min \left\{ w_r(x) \ : \ x \in C, x \neq 0\right\} .$$
\end{enumerate}
\end{definition}

The rank distance $d_R$ of $C$ satisfies the following bound, called the \emph{Singleton-like (or rank Singleton) bound} (see, e.g., \cite{gabidulin2025rank}):
\begin{theorem}
Let $C\subseteq \mathbb{F}_{q^m}^n$ be a linear code. Then
    $$
\dim_{\mathbb{F}_{q}}(C) \leq \max \{n, m\}(\min \{n, m \}-d_R(C)+1) .
$$
 When equality in the Singleton-like bound occurs, we say  that $C$ is a  \emph{Maximum Rank Distance (MRD)} code.
\end{theorem}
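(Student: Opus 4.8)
The plan is to prove the bound by the classical puncturing (Singleton-type) argument, carried out in the matrix representation of rank-metric codes. Using the vector space isomorphism $\mathbb{M}_{m\times n}(\Fq)\cong \mathbb{F}_{q^m}^n$ fixed by a choice of $\Fq$-basis of $\mathbb{F}_{q^m}$, each codeword $x\in C$ corresponds to an $m\times n$ matrix $M_x$ over $\Fq$ whose $n$ columns are the $\Fq$-coordinate vectors of $x_0,\dots,x_{n-1}$. Under this identification the rank weight $w_r(x)$ equals the $\Fq$-rank of $M_x$, and $C$ becomes an $\Fq$-linear subspace of $\mathbb{M}_{m\times n}(\Fq)$ of the same $\Fq$-dimension. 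Writing $d:=d_R(C)$, the goal is to delete $d-1$ columns (or rows) so that no nonzero codeword survives in the kernel of the resulting coordinate projection.

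First I would treat the case $m\ge n$, so that $\min\{n,m\}=n$ and $\max\{n,m\}=m$. Since the rank weight is always at most $\min\{n,m\}=n$, we have $d\le n$, hence $n-d+1\ge 1$ and the target below is well defined. Consider the projection $\pi\colon \mathbb{F}_{q^m}^n\to \mathbb{F}_{q^m}^{\,n-d+1}$ deleting the last $d-1$ coordinates (equivalently, the column-deletion map on matrices). I claim $\pi$ is injective on $C$: if $\pi(x)=\pi(y)$ for $x,y\in C$, then $x-y$ is supported on at most $d-1$ coordinates, so $M_{x-y}$ has at most $d-1$ nonzero columns and therefore $w_r(x-y)=\rank(M_{x-y})\le d-1<d$. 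By the definition of minimum rank distance this forces $x-y=0$.

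Injectivity of $\pi$ on $C$ then gives $\dim_{\Fq}(C)=\dim_{\Fq}(\pi(C))\le \dim_{\Fq}\bigl(\mathbb{F}_{q^m}^{\,n-d+1}\bigr)=m(n-d+1)$, which is exactly $\max\{n,m\}(\min\{n,m\}-d+1)$ in this regime. For the complementary case $n\ge m$ I would exploit the transpose symmetry of the rank function: since $\rank(M)=\rank(M^{\mathsf T})$, the same argument applied to the transposed matrices---now deleting $d-1$ rows, using that a matrix with at most $d-1$ nonzero rows has rank at most $d-1$---yields $\dim_{\Fq}(C)\le n(m-d+1)=\max\{n,m\}(\min\{n,m\}-d+1)$. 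Combining the two regimes proves the bound, and the MRD terminology is just the equality case.

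As for difficulty, this is a routine Singleton-type estimate with no genuine obstacle; the only points requiring care are (i) checking that the chosen $\Fq$-basis identification really sends rank weight to matrix rank and preserves $\Fq$-dimension, and (ii) organizing the two regimes $m\ge n$ and $n\ge m$ through transpose symmetry so that the single uniform bound $\max\{n,m\}(\min\{n,m\}-d+1)$ emerges in both. I would also flag the boundary check $n-d+1\ge 1$ (resp. $m-d+1\ge 1$), which follows from $d\le\min\{n,m\}$ and guarantees the projection target is nontrivial; the degenerate case $d=1$ gives the trivial bound $\dim_{\Fq}(C)\le mn$.
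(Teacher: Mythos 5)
The paper does not prove this statement at all: it is recalled in the preliminaries as a known result with a citation to the literature, so there is no in-paper argument to compare against. Your puncturing proof is the standard one and it is correct. In the regime $m\ge n$ the coordinate-deletion map is injective on $C$ because a difference of codewords supported on at most $d-1$ positions has matrix rank at most $d-1<d$, giving $\dim_{\mathbb{F}_q}(C)\le m(n-d+1)$; in the regime $n\ge m$ your row-deletion argument works verbatim (a codeword matrix with at most $d-1$ nonzero rows has rank at most $d-1$), giving $\dim_{\mathbb{F}_q}(C)\le n(m-d+1)$, and the two cases assemble into $\max\{n,m\}(\min\{n,m\}-d_R(C)+1)$. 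The only cosmetic remark is that invoking ``transpose symmetry'' is an unnecessary detour: you do not need to transpose anything, since deleting rows of the $m\times n$ matrices directly yields an $\mathbb{F}_q$-linear projection onto $\mathbb{M}_{(m-d+1)\times n}(\mathbb{F}_q)$ whose restriction to $C$ is injective by the same rank estimate. Your boundary checks ($d\le\min\{n,m\}$, so the target space is nontrivial) are also in order.
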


In the following proposition from \cite[Proposition 7]{Alfarano2021}, we recall a connection between the minimum rank distance and the minimum Hamming distance of a code.
\begin{proposition}\label{Prop_Hamming_rank}
\cite[Proposition 7]{Alfarano2021}
 Let $C$ be a linear code of length $n$ over $ \mathbb{F}_{q^m}.$ Then
$$ d_R(C)=\min \left\{d_H(C \cdot M): M \in \GL(\mathbb{F}_{q})\right\} .$$
\end{proposition}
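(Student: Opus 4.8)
The plan is to reduce the claimed identity to a pointwise statement about single vectors, namely that for every $x\in\mathbb{F}_{q^m}^n$ one has $w_r(x)=\min\{w_H(xM):M\in\GL(\mathbb{F}_q)\}$, and then to lift this to codes by exploiting the bijection $x\mapsto xM$.

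First I would record two elementary facts. The first is that the rank weight is invariant under right multiplication by an invertible matrix over the base field: if $M\in\GL(\mathbb{F}_q)$, then the coordinates of $xM$ are $\mathbb{F}_q$-linear combinations of the coordinates of $x$, and since $M$ is invertible these two coordinate families span the same $\mathbb{F}_q$-subspace of $\mathbb{F}_{q^m}$; hence $w_r(xM)=w_r(x)$. The second is that $w_r(x)\le w_H(x)$ for every $x$, because the span $\langle x_0,\dots,x_{n-1}\rangle$ is generated by the nonzero coordinates of $x$, so its $\mathbb{F}_q$-dimension is at most the number of such coordinates. Combining these gives $w_r(x)=w_r(xM)\le w_H(xM)$ for all $M$, and therefore $w_r(x)\le\min_M w_H(xM)$.

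The heart of the argument, and the step I expect to be the main obstacle, is the reverse inequality: producing an $M$ for which the bound is attained. For this I would fix an $\mathbb{F}_q$-basis of $\mathbb{F}_{q^m}$ and identify $x$ with the $m\times n$ matrix $X$ over $\mathbb{F}_q$ whose $i$-th column is the coordinate vector of $x_i$. Under this identification, $w_r(x)=\rank(X)$ (the column space of $X$ is exactly the span of the coordinates), and right multiplication of $x$ by $M\in\GL(\mathbb{F}_q)$ corresponds to the column operation $X\mapsto XM$. A column of $XM$ vanishes precisely when the corresponding coordinate of $xM$ is zero, so $w_H(xM)$ equals the number of nonzero columns of $XM$. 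Choosing $M$ to put $X$ into reduced column-echelon form leaves exactly $\rank(X)=w_r(x)$ nonzero columns, whence $w_H(xM)=w_r(x)$. This establishes the pointwise equality $w_r(x)=\min_M w_H(xM)$.

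Finally I would pass to the code. Since $M$ is invertible, $x\mapsto xM$ is a bijection from $C$ onto $C\cdot M$ fixing the zero vector, so $d_H(C\cdot M)=\min_{0\ne x\in C}w_H(xM)$. Taking the minimum over $M$ and swapping the two (finite) minima then yields
$$\min_{M\in\GL(\mathbb{F}_q)}d_H(C\cdot M)=\min_{0\ne x\in C}\ \min_{M}w_H(xM)=\min_{0\ne x\in C}w_r(x)=d_R(C),$$
which is the claimed identity. The only subtlety in this last step is the legitimacy of interchanging the two minima, which is immediate as both range over finite sets.
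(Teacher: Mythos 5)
Your argument is correct. Note, however, that the paper does not actually prove this statement: it is quoted verbatim from \cite[Proposition 7]{Alfarano2021}, so there is no in-paper proof to compare against. Your proof is essentially the standard one for this fact (and matches the argument in the cited reference): the two key observations are that $w_r$ is invariant under right multiplication by $\GL_n(\mathbb{F}_q)$ while $w_r(y)\le w_H(y)$ always holds, and that the bound is attained by choosing $M$ to put the $m\times n$ coordinate matrix of $x$ into column-echelon form; the final interchange of the two finite minima is legitimate exactly as you say.
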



\vspace{0.2cm}

The notion of equivalence is usually defined via linear or semi-linear isometries of the ambient metric space. For more information on the Hamming and rank isometries we refer the interested reader to \cite{Berger2003Isometries,Huffman2003,Neri2020}. In our work, we will restrict ourselves to linear isometries and hence get the following two types of equivalence:

\begin{definition} Let $C_1$ and $C_2$ be two linear codes of length $n$ over $\mathbb{F}_{q^m}$. Then 
\begin{enumerate}
    \item   $C_1$ and $C_2$ are called \emph{Hamming equivalent} if there exists a $n \times n$ monomial matrix $P$ with entries in $ \mathbb{F}_{q^m}$ such that $C_1=C_2 P$.
    \item  $C_1, C_2 $ are \emph{rank equivalent}  if there exists an invertible  matrix $P\in \GL(\mathbb{F}_{q}),$ and a scalar $\lambda \in \mathbb{F}_{q^m}^* $ such that $C_1= \lambda C_2 P.$
\end{enumerate}
\end{definition} 

\subsection{Skew polynomial rings}
In this subsection we give necessary background we need in skew polynomials, for more details see \cite{Ore33,Lam1988,Lam2004-I,Leroy2004,Leroy2008}. Let $\Fq$ be the finite field with $q$ elements, where $q=p^{s}$ for some prime number $p$ and $s \in \mathbb{Z}_{\geq 1}$.  Let $\sigma : \Fq \rightarrow \Fq$ be an automorphism of $\Fq$. Recall that the  \emph{Frobenius  automorphism} of  $\Fq$   is the automorphism $ \tau : \Fq \rightarrow \Fq$ defined by $$\tau(a):=a^{p}.$$ 
 Note that $\tau $ generates the cyclic group of automorphisms of $  \Fq,$ denoted by $\aut(\Fq)$, hence each automorphism $\sigma$ is of the form $ \sigma=\tau^k, \ 0\leq k\leq  s-1.$   Since $\tau $ is a generator of the cyclic group $ \aut(\Fq)$ its order is $s:=[\Fq:\mathbb{F}_p],$ and other generators of $\aut(\Fq)$ are of the form $ \sigma = \tau^{k},$ where $ \gcd(s,k)=1.$  
 
 Recall also that an element $a\in \Fq$ is \emph{fixed} by $\sigma$ if $\sigma(a)=a,$ and the set of fixed elements by $\sigma,$ denoted by $\Fq^{\sigma},$ is a subfield of $\Fq.$   Note that if  $\sigma=\tau^r,$  then its \emph{fixed subfield} is  $\Fq^{\sigma}= \mathbb{F}_{p^{\gcd(r,s)}}.$ Hence the fixed subfield of $\Fq$ by any generator of  $\aut(\Fq)$ is $\mathbb{F}_p.$ 

 Throughout  this paper we fix the following notations: we consider an automorphism  $\sigma \in \aut(\Fq),$ such that $$\sigma(a)=a^{p^r},$$ of order $ \mu:=\frac{s}{\gcd(r,s)},$ and its fixed subfield $\Fq^{\sigma}=\mathbb{F}_{q_0}$, for $\ q_0:=p^{\gcd(s,r)}.$

\begin{definition}
The \emph{$\sigma$-skew polynomial ring} $\Fq [x;\sigma]$ is the set 
$$ \Fq [x;\sigma]:=\left\{a_{0}+a_{1} x+\cdots+a_{n-1} x^{n-1}: a_{i} \in \Fq, n \in \mathbb{N}\right\}$$
endowed with  the usual polynomial addition and   multiplication defined by 
$$ xa=  \sigma(a) x ,$$
extended to all polynomials by distributivity. 
\end{definition}

Recall that an element $f\in \Fq[x;\sigma]$ is called   \emph{central} if $f(x) g(x) = g(x) f(x)$ for all $f(x) \in \Fq[x;\sigma]$, i.e., $f$ commutes with every element of $\Fq[x;\sigma]$. The set of all elements of $\Fq[x;\sigma]$ that commute with every element of $\Fq[x;\sigma]$ is called the  \emph{center} of $\Fq[x;\sigma]$ which is defined and  denoted by
\begin{center}
 $ \Z(\Fq[x;\sigma]):=\left\lbrace f(x) \in \Fq[x;\sigma] \ : \  f(x) g(x) = g(x) f(x), \ \text{ for all} \ \ g(x) \in \Fq[x;\sigma] \right\rbrace .$
\end{center}

In the following we collect some properties of  the  skew polynomial ring $\Fq[x;\sigma]$, see also \cite[p. 483-486]{Ore33}.

\begin{proposition} \label{PPP.1}
\begin{enumerate}
\item The ring $\Fq[x;\sigma]$  is non-commutative unless $\sigma $ is the identity automorphism of $\Fq.$ 
\item   An element $f \in \Fq[x;\sigma]$ is central if and only if $f$ belongs to $ \Fq^{\sigma}[x^{\mu}]$ i.e., $  \Z(\Fq[x;\sigma])= \Fq^{\sigma}[x^{\mu}],$ where  $ \Fq^{\sigma}$ is the fixed subfield of $\Fq$ by $\sigma$ and $\mu$ is the order  of $\sigma.$
\item The ring $\Fq[x;\sigma]$ is a right (respectively left)   Euclidean domain. By  "$x \mid_r y$"   we denote that $x$ right divides $y$.
\item Left and right ideals of $\Fq[x;\sigma]$ are principal.
\end{enumerate}
\end{proposition}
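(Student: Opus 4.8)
The plan is to prove Proposition~\ref{PPP.1} by treating each of the four items in turn, since they are standard facts about Ore extensions that follow from the skew multiplication rule $xa=\sigma(a)x$.

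\begin{proof}[Proof sketch]
\textbf{Item (1).} First I would observe that $\Fq[x;\sigma]$ fails to be commutative precisely when the multiplication rule introduces a twist. For any $a\in\Fq$ we have $xa=\sigma(a)x$ while $ax=ax$, so $x$ and $a$ commute if and only if $\sigma(a)=a$ for all $a\in\Fq$. Since $\Fq$ generates the ring together with $x$, commutativity of the whole ring is equivalent to $\sigma=\id_{\Fq}$. The only direction requiring a remark is that if $\sigma$ is nontrivial there exists some $a$ with $\sigma(a)\neq a$, witnessing noncommutativity.

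\textbf{Item (2).} This is the substantive part, and I expect it to be the main obstacle. The plan is to show both inclusions of $\Z(\Fq[x;\sigma])=\Fq^{\sigma}[x^{\mu}]$. For the easy inclusion, I would check directly that any monomial $c\,x^{j\mu}$ with $c\in\Fq^{\sigma}$ commutes with every $a\in\Fq$ and with $x$: using $x^{\mu}a=\sigma^{\mu}(a)x^{\mu}=a\,x^{\mu}$ (because $\sigma$ has order $\mu$) and $\sigma(c)=c$, one verifies centrality, and by linearity this extends to all of $\Fq^{\sigma}[x^{\mu}]$. For the reverse inclusion, I would take a general $f=\sum_i a_i x^i$ in the center and extract constraints from two commutation requirements. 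Imposing $f a = a f$ for all $a\in\Fq$ forces $a_i\sigma^{i}(a)=a\,a_i$ for each $i$, i.e. $a_i(\sigma^{i}(a)-a)=0$; if $a_i\neq 0$ this means $\sigma^{i}$ fixes every element of $\Fq$, hence $\mu\mid i$. Next, imposing $fx=xf$ and comparing coefficients yields $\sigma(a_i)=a_i$, so each surviving coefficient lies in $\Fq^{\sigma}$. Combining, $f\in\Fq^{\sigma}[x^{\mu}]$.

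\textbf{Items (3) and (4).} For the Euclidean domain property I would invoke the degree function $\deg$, which satisfies $\deg(fg)=\deg f+\deg g$ because $\sigma$ is an automorphism (so leading coefficients multiply to a nonzero leading coefficient), showing $\Fq[x;\sigma]$ is a domain. Right division is then established by the usual leading-term cancellation: given $f,g$ with $g\neq 0$, I subtract an appropriate left multiple $c\,x^{\deg f-\deg g}g$ from $f$ to strictly lower the degree, and iterate; the twist only affects the coefficient $c$, which is recoverable since $\sigma$ is invertible. The left-Euclidean statement follows symmetrically using the rule $ax=x\sigma^{-1}(a)$. Finally, item (4) is a formal consequence of the Euclidean property: given a left ideal $I$, I would pick a nonzero element $g\in I$ of minimal degree and use left division to show every element of $I$ is a left multiple of $g$, so $I=\Fq[x;\sigma]g$; the argument for right ideals is analogous. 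The only care needed throughout is to keep the sides of multiplication consistent with the twist, which is where the noncommutativity could cause slips.
\end{proof}
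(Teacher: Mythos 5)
Your proof sketch is correct; all four arguments are the standard ones for Ore extensions and each step you outline goes through. For comparison: the paper itself gives \emph{no} proof of this proposition — it is stated as collected background with a citation to Ore's 1933 paper (pp.~483--486) — so your self-contained sketch is strictly more than what the authors provide. The one place to be careful is in item (4): to show a \emph{left} ideal $I$ is principal you take $g\in I$ of minimal degree and write an arbitrary $f\in I$ as $f=qg+r$ with $\deg r<\deg g$, i.e.\ you use the division in which $g$ sits on the \emph{right} of the quotient (what the paper calls right division, giving right divisors); then $r=f-qg\in I$ forces $r=0$. Your phrasing "use left division ... for left ideals" risks pairing the wrong division with the wrong ideal, though this is a naming convention issue rather than a mathematical gap — the leading-term cancellation you describe works on either side because $\sigma$ is invertible.
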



Since $\Fq[x,\sigma]$ is a right (left) Euclidean domain, we have the notion of the greatest common right divisor of polynomials ($\gcrd$) in $\Fq[x,\sigma]$ and  least common left multiple ($\lclm$).\footnote{The analog holds for greatest common left divisor and least common right multiple.} Note that they are unique and can be calculated using the known left and right division algorithms, for more details see \cite{Ore33}.

\begin{definition}
 Let $f(x), g(x) \in \Fq[x;\sigma]$. 
 \begin{enumerate}
     \item A monic polynomial $d(x)$ is called the \emph{greatest common right divisor} ($ \gcrd $) of $f(x)$ and $g(x)$ if  $d(x)$ is a right divisor of $f(x)$ and $g(x)$, and any right divisor  $d^{'}(x)$ of $f(x)$ and $g(x)$ is also a  right divisor of $d(x)$.
\item In particular, $f(x)$ and $ g(x) $ are said to be \emph{right coprime} if $\gcrd(f(x), g(x))=1$. 
In this case, there exist polynomials  $a(x), b(x) \in  \Fq[x;\sigma]$ such that $a(x) f(x)+b(x) g(x)=1.$
 \item Similarly, the \emph{least common left multiple} $(\lclm)$ of $f(x)$ and $g(x)$   is defined as the monic polynomial  $h(x)\in  \Fq[x;\sigma]$  of least degree such that $f(x)$ and $ g(x)$ are both  right  divisors of $h(x).$
\end{enumerate}
\end{definition}

\begin{definition}[\cite{Ore33}] Let $a(x), b(x) \in \Fq[x,\sigma]$ two skew polynomials. We say  $a(x)$ is \emph{similar} to $b(x)$ if there exists $u(x) \in \Fq[x,\sigma] $  such that $\lclm(a(x), u(x)) = b(x) u(x)$  and $\gcrd(a(x), u(x)) = 1.$
\end{definition}

\begin{theorem} \cite[Theorem 1]{Ore33}  \label{Factorization}
Every non-zero polynomial $f(x) \in \Fq[x;\sigma]$ factorizes as $f(x)=f_1(x) \cdots f_n(x)$ where $f_i(x) \in \Fq[x;\sigma]$ is irreducible for all $i \in\{1, \ldots, n\}$. Furthermore, if $f(x)=g_1(x) \cdots g_s(x)$ is any other factorization of $f(x)$ as a product of irreducible polynomials $g_i(x) \in \Fq[x;\sigma]$, then $s=n$ and there exists a permutation $\pi:\{1, \ldots, n\} \rightarrow$ $\{1, \ldots, n\}$ such that $f_i$ is similar to $g_{\pi(i)}$. In particular, $f_i$ and $g_{\pi(i)}$ have the same degree for all $i \in\{1, \ldots, n\}$.
\end{theorem}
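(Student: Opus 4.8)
The plan is to split the claim into an existence statement and a uniqueness-up-to-similarity statement, proving the latter by translating factorizations into composition series and invoking the Jordan--Hölder theorem. Write $R=\Fq[x;\sigma]$, which by Proposition~\ref{PPP.1} is a left and right Euclidean domain whose one-sided ideals are principal. Existence I would handle by strong induction on $\deg f$: if $f$ is irreducible there is nothing to prove, and otherwise $f=gh$ with $0<\deg g,\deg h<\deg f$ --- here I use that $\deg(gh)=\deg g+\deg h$ since $\sigma$ is an automorphism (leading terms do not cancel) and $R$ is a domain --- so the induction hypothesis applied to $g$ and $h$ together with concatenation gives a factorization of $f$ into irreducibles.

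For uniqueness, given $f=f_1\cdots f_n$, set $g_i=f_if_{i+1}\cdots f_n$ for $1\le i\le n$ and $g_{n+1}=1$, so that $g_i=f_ig_{i+1}$ and hence $Rg_i=Rf_ig_{i+1}\subseteq Rg_{i+1}$, giving the ascending chain $Rf=Rg_1\subseteq Rg_2\subseteq\cdots\subseteq Rg_{n+1}=R$. Right multiplication by $g_{i+1}$ together with cancellation in the domain $R$ induces an isomorphism of successive quotients $Rg_{i+1}/Rg_i\cong R/Rf_i$, and since $f_i$ is irreducible its only right divisors are units and associates, so no left ideal lies strictly between $Rf_i$ and $R$ and $R/Rf_i$ is a simple left $R$-module. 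Passing to $M=R/Rf$, the chain becomes a composition series of $M$ with composition factors $R/Rf_1,\dots,R/Rf_n$.

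Since $\dim_{\Fq}M=\deg f<\infty$, the module $M$ has finite length and the Jordan--Hölder theorem applies: any two composition series share the same length and the same multiset of composition factors up to isomorphism. Comparing the series attached to $f=f_1\cdots f_n$ and to $f=g_1\cdots g_s$ therefore yields $s=n$ together with a permutation $\pi$ such that $R/Rf_i\cong R/Rg_{\pi(i)}$ as left $R$-modules. The main obstacle is then the dictionary between such a module isomorphism and Ore's combinatorial notion of similarity used in the statement.

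To close this gap I would prove the lemma that, for nonzero $a,b\in R$, one has $R/Ra\cong R/Rb$ as left $R$-modules if and only if $a$ and $b$ are similar, i.e.\ $\lclm(a,u)=bu$ and $\gcrd(a,u)=1$ for some $u$. Given such $u$, the relation $\lclm(a,u)=bu$ gives $bu\in Ra$, so right multiplication by $u$ induces a map $R/Rb\to R/Ra$; the identity $Ra+Ru=R\gcrd(a,u)=R$ forces surjectivity, while $Ra\cap Ru=R\lclm(a,u)=Rbu$ pins the kernel down to $Rb$, making the map an isomorphism. Conversely, an isomorphism $R/Ra\to R/Rb$ is right multiplication by the image $u$ of $1$, and tracking its well-definedness, image and kernel recovers the $\gcrd$ and $\lclm$ conditions. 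Applying this equivalence to $R/Rf_i\cong R/Rg_{\pi(i)}$ gives the similarity of $f_i$ and $g_{\pi(i)}$, and equality of degrees follows at once from $\dim_{\Fq}(R/Ra)=\deg a$ (a right-division basis being $1,x,\dots,x^{\deg a-1}$), so that isomorphic cyclic modules have equal $\Fq$-dimension.
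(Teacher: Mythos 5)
Your proof is correct, but there is nothing in the paper to compare it against: Theorem~\ref{Factorization} is stated with a citation to \cite[Theorem 1]{Ore33} and no proof is given in the text. What you have written is the standard modern module-theoretic proof of Ore's theorem, and every step checks out: degree additivity (valid because $\sigma$ is an automorphism, so leading coefficients cannot cancel) gives existence by induction; the chain $Rf=Rg_1\subseteq\cdots\subseteq Rg_{n+1}=R$ with $Rg_{i+1}/Rg_i\cong R/Rf_i$ (via right multiplication by $g_{i+1}$ and cancellation) is a genuine composition series because irreducibility of $f_i$ makes $R/Rf_i$ simple, using that all left ideals are principal (Proposition~\ref{PPP.1}); finite $\Fq$-dimension of $R/Rf$ licenses Jordan--H\"older; and your dictionary lemma $R/Ra\cong R/Rb \iff a\sim b$ is proved correctly, the key identities being $Ra+Ru=R\gcrd(a,u)$ and $Ra\cap Ru=R\lclm(a,u)$, both of which follow from the definitions in the paper. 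The only cosmetic wrinkle is in the converse direction of that lemma, where tracking an isomorphism $R/Ra\to R/Rb$ literally yields ``$b$ is similar to $a$'' rather than ``$a$ is similar to $b$''; since similarity is symmetric this is harmless, but it deserves a remark. By contrast, Ore's original argument avoids explicit module language and runs a direct induction on the number of factors using the $\lclm$/$\gcrd$ exchange property; your route buys a cleaner uniqueness statement at the cost of importing Jordan--H\"older and the similarity-versus-isomorphism dictionary, while Ore's is self-contained within the polynomial arithmetic the paper actually develops.
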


It is important to note that in the skew polynomial ring $\Fq[x;\sigma]$, there are different polynomial evaluations, out of which we will use the following. 

\begin{definition} \cite[p. 310]{Lam1988}\label{def.2}
For $\alpha \in \Fq,$ and $f(x)\in \Fq[x;\sigma]$,
  the \emph{(right) evaluation} (also called  the \emph{(right) remainder evaluation})  of $f(x)$ at $\alpha$ is the element $f(\alpha ):=r \in \Fq$ such that $f(x)= q(x)(x-\alpha)+r.$
\end{definition}

In \cite[Proposition 2.9]{Lam1988}, Lam and Leroy  proved that, for any $\alpha\in \Fq$ we have   
\begin{equation}\label{e1}
f(\alpha)= a_0 + a_1 N^{\sigma}_1(\alpha) +\ldots+ a_{n-1} N^{\sigma}_{n-1}(\alpha)=  \dsum{i=0}{n} a_i N^{\sigma}_i(\alpha)
\end{equation}
where $  N^{\sigma}_{0}(\alpha):=1 ,$ and  for  every $i$ in $\mathbb{N}^*$  
$$
N^{\sigma}_i(\alpha):=\sigma^{i-1}(\alpha) \ldots \sigma(\alpha) \alpha.
$$
By taking  $\sigma(a)=a^{p^r} $ for all $a\in \Fq,$ we obtain  for each $ i \in \mathbb{N}^*$  that
\begin{equation}\label{Eqsigma}
N_i^{\sigma}(\alpha)=\alpha^{p^{r (i-1)}} \alpha^{p^{r (i-2)}} \ldots  \alpha^{p^{r}} \alpha =  \alpha^{ \sum_{j=0}^{i-1} p^{rj}}= \alpha^{{\frac{ p^{ri}-1}{p^r-1}}}= \alpha^{[i]_r},
\end{equation}
where $[i]_{r}:=  \dfrac{ p^{ri}-1}{p^r-1}$. Then the  right evaluation of $f(x)=\dsum{i=0}{n} a_i x^i$ at $\alpha\in \Fq$ is given by 
\begin{equation*}
f(\alpha)= \dsum{i=0}{n} a_i N^{\sigma}_i(\alpha)= \dsum{i=0}{n} a_i \alpha^{[i]_r}.
\end{equation*}

A skew polynomial $f(x) \in \Fq[x;\sigma]$ is called an \emph{invariant polynomial} if $f(x)\cdot \Fq[x;\sigma] = \Fq[x;\sigma]\cdot f(x),$ i.e., there exist $g(x)$ and $ g'(x) $ such that $ f(x)g(x)= g'(x)f(x).$ In particular,   each polynomial in  $ \Z(\Fq[x;\sigma])=\Fq^{\sigma}[x^{\mu}]$ is   invariant, where $\mu$ is the order of $\sigma$. Note that when $f$ is an invariant polynomial in $\Fq[x;\sigma]$  then $R_f:=\Fq[x;\sigma]/\langle f(x)\rangle$ is an associative algebra over  $ \Fq^{\sigma} $ of dimension $ \deg(f) \mu.$   If $f$ is not invariant, then the center of $ R_f$ is $ \Fq.$

\begin{remark}
There is a well-known connection between skew polynomials and linearized polynomials (precisely $\sigma$-polynomials), and the latter have been studied extensively in the context of rank-metric codes, see, e.g.,
\cite{Gabidulin1985,ko08,ore1933special,wu2013linearized}. 
For this let $\sigma$ be an automorphism of order $\mu,$ and $\mathbb{F}_{q_0}=\Fq^{\sigma}$ its fixed subfield.
The (non-commutative) ring of   $\sigma$-polynomials, $ (\mathcal{L}_{q,\sigma}[x],+,\circ),$  is  the subset  of the polynomial ring $\Fq[x]$ defined by
$$
\mathcal{L}_{q,\sigma}[x]:=\left\{\sum_{i=0}^n a_i x^{\sigma^i} := \sum_{i=0}^n a_i x^{q_0^{i}}\mid n \in \mathbb{N}, a_i \in \Fq \right\},
$$
endowed with  the usual addition $"+"$ and  composition $"\circ"$ of polynomials. Note that  $\Fq[x ; \sigma]$ and $\mathcal{L}_{q,\sigma}[x]$ are
isomorphic rings 
via 
$$
\Lambda: \Fq[x ; \sigma] \longrightarrow \mathcal{L}_{q,\sigma}[x], \sum_{i=0}^n a_i x^i \longmapsto \sum_{i=0}^n a_i x^{\sigma^i}=\sum_{i=0}^n a_i x^{q_0^{i}} .
$$
Analogously, we get the following isomorphism 
$$ \Fq[x ; \sigma]/\langle x^{\mu}-1\rangle \cong \mathcal{L}_{q,\sigma}[x]/\langle x^{q_0^{\mu}}-x\rangle .$$ 
The above correspondence allows us to characterize skew cyclic codes of length $\mu $ over $ \Fq$ as left ideals of $\mathcal{L}_{q,\sigma}[x]/\langle x^{q_0^{\mu}}-x\rangle,$ for more details see \cite{gabidulin2009rank} and more recently \cite[Section 2.4]{martinez2017roots}. More generally, we can prove 
for any  skew polynomial $f \in \mathbb{F}_{q_0}[x,\sigma]$ of degree $ \mu$  ($f$ is defined over $\mathbb F_{q_0}$ to generate a two sided ideal in $\Fq[x;\sigma])$ the following   ring isomorphism 
$$ \Fq[x ; \sigma]/\langle f(x)\rangle \cong \mathcal{L}_{q,\sigma}[x]/\langle f(x^{q_0}) \rangle.$$ 
Note that, seen as an algebra isomorphism, this relation holds over $ \mathbb{F}_{q_0}.$

\end{remark}

\vspace{0.5cm}

Next we recall some important facts about algebraic sets and the class of \emph{Wedderburn (\textnormal{or} W-)polynomials}. For more details, we refer the reader to \cite{Leroy2004, Leroy2008}. 

\begin{definition}
\begin{enumerate}
    \item 
    For a skew polynomial $g \in \mathbb{F}_q[x, \sigma]$, the \emph{right vanishing set} of $g$, also called the \emph{set of right roots} of $g$, is defined by  
$
V(g) := \left\{ a \in \mathbb{F}_q : g(a) = 0 \right\}.
$
\item
For any subset $A \subseteq \mathbb{F}_q$, the set $I(A) := \{ g \in \mathbb{F}_q[x, \sigma] : g(A) = 0 \}$ is a left ideal of $\mathbb{F}_q[x, \sigma]$, and the set $A$ is called \emph{$\sigma$-algebraic} if $I(A) \neq 0$. In particular, the set $V(g)$ is $\sigma$-algebraic because $g(V(g)) = 0$. Moreover, any finite set is $\sigma$-algebraic \cite[Section 6]{Lam2000}.
\item If $A$ is $\sigma$-algebraic, then the monic generator of $I(A)$ is called the \emph{minimal polynomial} of $A$, and is denoted by $m_A$. 
\end{enumerate}
\end{definition}

The \emph{minimal polynomial} $m_A$ is the monic least left common multiple of the linear polynomials $\{x - a : a \in A\}$. Thus, it has the form  
$
m_A = (x - a_0)(x - a_1) \cdots (x - a_{n-1}),
$
where each $a_i$ is $\sigma$-conjugate to some element of $A$, and $n = \deg m_A$. Furthermore, any right root of $m_A$ is also $\sigma$-conjugate to some element of $A$.
The \emph{rank} of a set $A \subseteq \mathbb{F}_q$ is defined by $\operatorname{rank}(A) := \deg(m_A)$.

\begin{definition}
\begin{enumerate}
    \item 
A monic polynomial $g \in \mathbb{F}_q[x, \sigma]$ is called a \emph{W-polynomial} if the minimal polynomial of $V(g) = \{ a \in \mathbb{F}_q : g(a) = 0 \}$ is $g$ itself. 
\item 
An element $a \in \mathbb{F}_q$ is called \emph{$P$-dependent} on an algebraic set $A\subseteq \Fq$ if $g(a) = 0$ for every $g \in I(A)$. A $\sigma$-algebraic set $A$ is \emph{$P$-independent} if no element $b \in A$ is $P$-dependent on $A \setminus \{b\}$. 
\end{enumerate}    
Note that the minimal polynomial of a $P$-independent set is a W-polynomial.
\end{definition}

Lastly, we define \emph{exponents} of skew polynomials corresponding to \cite[Definition 3.1.]{Cherchem2016}, which will be crucial to prove the Roos-like bound on the Hamming and rank distance in the next section. (For the definition of exponents of skew polynomials over period rings, see \cite{bouzidi2021exponents}.)

\begin{definition}\cite[Definition 3.1.]{Cherchem2016}
    Let $ f(x) \in \Fq[x;\sigma], $ be a polynomial of degree $n$ with $ f(0)\neq 0 $.  Then there exists  a positive integer $ e $ such that $ f(x) $ right divides $ x^{e}-1.$  The smallest integer $e$ with this property is called the \emph{right exponent}  of $f(x),$ and we denote it by $ \ord_r(f(x))=e.$ 
    If $f(0)=0$, we write $f(x)=x^h g(x)$, where $h \in \mathbb{N}$ and $g \in \mathbb{F}_q[x ; \sigma]$ with $g(0) \neq 0$ are uniquely determined, and define $ \ord_r(f(x)):=\ord_r(g(x))$. The \emph{left exponent} is defined analogously.
\end{definition}

\subsection{Skew polycyclic codes}

We will now turn to polycyclic codes over skew polynomial rings, defining some of the necessary notions for the rest of this paper. We begin with the definition of skew polycyclic codes, based on \cite[Definition 8]{Ouazzou2023}.

  \begin{definition}
  Let $ \vec{a}= (a_0,a_1,\ldots, a_{n-1})\in \Fq^n$   and  $\sigma \in \aut(\Fq)$  be an automorphism of $\Fq.$  A linear code $C\subseteq \Fq^n$ is called a
  \begin{enumerate}
      \item  right \emph{skew  $(\sigma,\vec{a})$-polycyclic   code}   if for each codeword  $(c_0,c_1,\ldots, c_{n-1})$ of $C,$  we have 
$$
\left( 0, \sigma(c_0),\sigma(c_1),\ldots, \sigma(c_{n-2})\right) + \sigma(c_{n-1}) \vec{a}  \in C.
$$
\item left \emph{skew  $(\sigma,\vec{a})$-polycyclic   code}   if for each codeword  $(c_0,c_1,\ldots, c_{n-1})$ of $C,$  we have 
$$
\left(\sigma(c_1),\ldots, \sigma(c_{n-1},0)\right) + \sigma(c_{0}) \vec{a}  \in C.
$$
 \end{enumerate}
    If $\vec{a}=\left(a_{_0},0,\ldots, 0,a_{\ell}, 0, \ldots, 0\right)\in \Fq^{n},$ with $a_0\neq 0 $ and $ a_{\ell}\neq 0$, for some $ 0<\ell <n $, we call a right (resp. left)    $(\sigma,\vec{a})$-polycyclic code with  associated vector $\vec{a}$  a \emph{skew $(\ell,\sigma)$-trinomial code}  of length $n$ over $\Fq$.
 \end{definition}
 In this paper we will focus on right skew polycyclic codes, however, the same result can be proved for the case of left polycyclic codes. For more details about the difference between left and right (skew) polycyclic codes see \cite{Lopez2009,Poly,ou2024linear,Ouazzou2023, Hassan2025Advances}.
 
 Let us recall that a map $T: \Fq^n \to \Fq^n,$ is called a \emph{($\sigma$-)semi-linear transformation} if it is an additive map satisfying $ T(\alpha u)=\sigma( \alpha ) T(u)$ for $\alpha \in \Fq$ and $u\in \Fq^n$. In the particular case $\sigma=\id, \ T$ is a linear transformation. 
 
 \begin{remark}\label{remark1}
  Let $\vec{a}=(a_0,a_1,\ldots,a_{n-1}) \in \Fq^n$   and  $\sigma \in \aut(\Fq)$  be an automorphism of $\Fq.$ The following follows directly from the definition of (skew) polycyclic codes.
 \begin{enumerate}
 \item  Skew $(\sigma,\vec{a})$-polycyclic   codes are invariant under the $\sigma$-semi-linear transformation $  T_{\sigma,\vec{a}} ,$  called the \emph{$(\sigma,\vec{a})$-polycyclic  shift}, defined by:
 $$  T_{\sigma,\vec{a}} (v_0,v_1,\ldots,v_{n-1} )=  \left( 0, \sigma(v_0), \sigma(v_1), \ldots, \sigma(v_{n-2})\right)+ \sigma(v_{n-1})\vec{a}, $$
 for all $ v=(v_0,v_1,\ldots, v_{n-1}) \in \Fq^n.$
 \item Polycyclic  codes are invariant under the linear transformation $ T_{\vec{a}}$ defined by:
$$ T_{\vec{a}}(v_0,v_1,\ldots,v_{n-1} )=  \left( 0, v_0, v_1, \ldots, v_{n-2}\right)+ v_{n-1}\vec{a}, $$

for all $ v=(v_0,v_1,\ldots, v_{n-1}) \in \Fq^n.$
 \item  When $\sigma= \id,$ we get  $ T_{\vec{a}}= T_{\id,\vec{a}}.$
 \end{enumerate}
 \end{remark}

 Let $R_f:=\Fq[x;\sigma]/ \langle f(x)\rangle,$ with $ \langle f(x)\rangle $ being the left ideal of $\Fq[x,\sigma]$ generated by $g(x).$ In this work, we mainly work with right polycyclic codes, which we henceforth simply refer to as polycyclic codes. Under the usual identification of vectors with polynomials, 
   	\begin{equation}\label{Realization}
  	\begin{array}{rccc}
  	\Phi \ : \ & \ \Fq^n &\longrightarrow & R_f \\ & & & \\
  	  &v=(v_0, v_{_1},\ldots,v_{_{n-1}}) & \longmapsto &  v(x)=\dsum{i=0}{n-1} v_{_i} x^{i}  
  	\end{array}
  	\end{equation} 
each polycyclic code $C$ of length $n$  associated with a vector $\vec{a}$ is seen as a left $ \Fq[x;\sigma]$-submodule (or as a left ideal if $f$ is central)   in  $ R_{f}.$

In the following result we  recall more  characterizations of skew $(\sigma,\vec{a})$-polycyclic   codes from \cite[Theorem 4, Theorem 5]{Ouazzou2023}.

 \begin{theorem}\label{T2.3}
 A linear code  $ C\subseteq \Fq^{n} $ is a skew $(\sigma,\vec{a})$-polycyclic   code if and only if $\Phi(C)$  is a left $\Fq[x;\sigma]$-submodule of $  R_f.$ Moreover:
 \begin{enumerate}
 \item There is a monic polynomial of least degree $ g(x)\in \Fq[x;\sigma] $ such that $ g(x) $  right divides $ f(x) $
 and $ \Phi(C)=\langle g(x)\rangle. $
 \item The set $ \{ g(x),x g(x),\ldots,x^{k-1}g(x)\} $ forms a basis of $ C $ and the dimension of $ C $ is $ k=n-\deg(g). $

 \item A generator matrix $ G $ of $ C $ is given by:
 \begin{equation}\label{eq10}
 G= \left(
 \begin{array}{cccccccc}
 g_{_0} &g_{_1} &\cdots&g_{_{n-k}}& 0 &\cdots &\cdots & 0\\
 0 & \sigma(g_{_0}) & \sigma(g_{_1}) &\cdots & \sigma(g_{_{n-k}}) & 0 &\cdots & 0\\
 \vdots &\ddots &\ddots &\ddots & &\ddots & &\vdots\\
 \vdots & &\ddots &\ddots &\ddots & &\ddots &\vdots\\
 0 &\ldots & &0 & \sigma^{k-1}(g_{_0}) & \sigma^{k-1}(g_{_1}) &\ldots & \sigma^{k-1}(g_{_{n-k}})\\
 \end{array}
 \right) 
 \end{equation}
 where $k=n-\deg(g) $ and $ g(x)= \displaystyle\sum_{i=0}^{n-k} g_ix^i. $
\item There is a one-to-one correspondence between  right divisors of $f(x)$ and skew $(\sigma,\vec{a})$-polycyclic   codes of length $n$ over $\Fq$.
\end{enumerate}
 \end{theorem}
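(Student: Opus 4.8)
The plan is to reduce the whole statement to a single algebraic observation: under the identification $\Phi$ of \eqref{Realization}, the polycyclic shift $T_{\sigma,\vec{a}}$ is nothing but left multiplication by $x$ in $R_f$. Recall that the vector $\vec{a}$ determines $f$ via $f(x)=x^n-\sum_{i=0}^{n-1}a_i x^i$, so that $x^n\equiv\sum_{i=0}^{n-1}a_i x^i\pmod{f}$. First I would compute, for $v=(v_0,\ldots,v_{n-1})$, that $x\cdot\Phi(v)=\sum_{i=0}^{n-1}\sigma(v_i)x^{i+1}$ using the commutation rule $xa=\sigma(a)x$, and then reduce the single leading term $\sigma(v_{n-1})x^n$ modulo $f$. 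Matching coefficients gives $x\cdot\Phi(v)\equiv\Phi(T_{\sigma,\vec{a}}(v))\pmod{f}$, which is precisely the content of Remark~\ref{remark1}. This is the one identity that encodes the entire skew-polycyclic structure.

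Given this, the main equivalence is immediate. Since $\Phi$ is an $\Fq$-linear isomorphism, $\Phi(C)$ is automatically closed under addition and under left multiplication by constants of $\Fq$; hence $\Phi(C)$ is a left $\Fq[x;\sigma]$-submodule of $R_f$ if and only if it is in addition closed under left multiplication by $x$, which by the key identity holds if and only if $C$ is invariant under $T_{\sigma,\vec{a}}$, i.e. if and only if $C$ is skew polycyclic. For part (1) I would invoke Proposition~\ref{PPP.1}, which says that $\Fq[x;\sigma]$ is left and right principal: the left submodules of $R_f$ correspond to left ideals $J$ of $\Fq[x;\sigma]$ containing $\langle f\rangle$, and writing $J=\langle g\rangle$ with $g$ monic of least degree, the containment $f\in J$ gives $f=hg$, i.e. $g\mid_r f$, with $\Phi(C)=\langle g\rangle$.

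For parts (2) and (3) I would use right cancellation in the domain $\Fq[x;\sigma]$ to obtain a left-module isomorphism $\Fq[x;\sigma]/\langle h\rangle\cong\langle g\rangle\subseteq R_f$ via $a(x)\mapsto a(x)g(x)\bmod f$, where $f=hg$ and $\deg h=k=n-\deg g$. Indeed $ag\equiv a'g$ modulo $f$ means $(a-a')g=chg$ for some $c$, and cancelling $g$ gives $a\equiv a'\pmod{\langle h\rangle}$. Since right division by $h$ shows $\{1,x,\ldots,x^{k-1}\}$ is an $\Fq$-basis of $\Fq[x;\sigma]/\langle h\rangle$, transporting through this isomorphism yields the basis $\{g,xg,\ldots,x^{k-1}g\}$ of $C$ and the dimension $k=n-\deg g$. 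Because $\deg(x^j g)=(n-k)+j\le n-1$ for $0\le j\le k-1$, no reduction modulo $f$ is needed, so the coordinate vector of $x^j g$ is just $(\sigma^j(g_0),\ldots,\sigma^j(g_{n-k}))$ shifted $j$ places to the right (using $x^j g_i=\sigma^j(g_i)x^{i+j}$); stacking these rows reproduces exactly the matrix in \eqref{eq10}.

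Finally, part (4) is a bookkeeping consequence: part (1) assigns to each skew polycyclic code a unique monic right divisor $g$ of $f$, and conversely each monic right divisor $g$ generates a left submodule $\langle g\rangle$ of $R_f$, hence a skew polycyclic code; uniqueness of the monic generator makes these assignments mutually inverse. The main obstacle I anticipate is not any single hard step but getting the key identity and the associated $f$ exactly right — in particular tracking $\sigma$ through the commutation rule and pinning down $f(x)=x^n-\sum_{i=0}^{n-1}a_i x^i$ so that the reduction of $x^n$ precisely reproduces the term $\sigma(v_{n-1})\vec{a}$; everything afterward is a routine application of the Euclidean and principal-ideal structure of $\Fq[x;\sigma]$ recalled in Proposition~\ref{PPP.1}.
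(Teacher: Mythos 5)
Your proof is correct, and it is the standard argument: the paper itself does not prove Theorem~\ref{T2.3} but recalls it from \cite{Ouazzou2023}, so there is no internal proof to compare against. Your key identity $x\cdot\Phi(v)=\Phi(T_{\sigma,\vec{a}}(v))$ in $R_f$ with $f(x)=x^n-\vec{a}(x)$, the reduction of parts (1)--(4) to the principal left-ideal structure of $\Fq[x;\sigma]$, and the cancellation argument giving the isomorphism $\Fq[x;\sigma]/\langle h\rangle\cong\langle g\rangle$ are all sound and are exactly how this result is established in the cited source.
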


From here to the end of this paper, instead of saying ''$\Phi(C)$ is generated by $g(x)$'', we will just say $C$ is generated by $g(x)$ and write $ C= \Fq[x;\sigma] g(x).$ 
     Moreover, for any  $ \sigma \in \aut(\Fq)$ of order $ \mu $ such that $ n = \mu m$, we define the  set
$$
\mathcal{A}_{\sigma} := \left\{ (a_0, 0, \ldots, 0, a_{\mu}, 0, \ldots, 0, a_{2\mu}, \ldots, a_{(m-1)\mu}, 0, \ldots, 0) \in \mathbb{F}_q^n \mid a_i \in \mathbb{F}_{q}^{\sigma}, \ i = 0,  \ldots, (m-1)\mu \right\}.
$$

\begin{theorem}\cite[Theorem 5.9.]{Bag2025}\\
 Let $ \sigma \in \aut(\Fq)$ be of order $ \mu $ such that $ n = \mu m .$ Then a linear code $ C \subseteq \Fq^n$ is a skew polycyclic code induced by $ \vec{a} \in \mathcal{A}_{\sigma} $  if and only if $ C $ is a left ideal of $ \mathbb{F}_q[x; \sigma] / \langle x^n - \vec{a}(x) \rangle .$
\end{theorem}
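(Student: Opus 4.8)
The plan is to show that membership $\vec{a} \in \mathcal{A}_{\sigma}$ is exactly the condition that makes $f(x) := x^n - \vec{a}(x)$ a \emph{central} skew polynomial, and then to read off the statement from Theorem \ref{T2.3} together with the fact that quotients by central (two-sided) polynomials are genuine rings. First I would write $\vec{a}(x)$ explicitly: since the nonzero entries of $\vec{a}$ sit only in positions that are multiples of $\mu$, we have $\vec{a}(x) = \sum_{j=0}^{m-1} a_{j\mu}\, x^{j\mu}$ with every $a_{j\mu} \in \Fq^{\sigma}$, and hence, using $n = \mu m$,
$$f(x) = x^{\mu m} - \sum_{j=0}^{m-1} a_{j\mu}\, x^{j\mu}.$$

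The key step is the centrality observation. Writing $f$ as a polynomial in $x^{\mu}$, namely $f = (x^{\mu})^{m} - \sum_{j=0}^{m-1} a_{j\mu}\, (x^{\mu})^{j}$, all of its coefficients (the leading $1$ and the $a_{j\mu}$) lie in the fixed field $\Fq^{\sigma}$. Thus $f \in \Fq^{\sigma}[x^{\mu}]$, which by Proposition \ref{PPP.1}(2) is precisely the center $\Z(\Fq[x;\sigma])$. Conversely, any $\vec{a}$ for which $x^n - \vec{a}(x)$ is central must, by the same description of the center, have its support confined to multiples of $\mu$ with coefficients in $\Fq^{\sigma}$, i.e.\ $\vec{a} \in \mathcal{A}_{\sigma}$; so the two conditions are equivalent.

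Since $f$ is central, the left ideal $\langle f \rangle = \Fq[x;\sigma]\, f$ coincides with the two-sided ideal generated by $f$, so $R_f = \Fq[x;\sigma]/\langle f \rangle$ is a (noncommutative) quotient ring. In such a ring the induced left $\Fq[x;\sigma]$-module structure is nothing but left multiplication by residues of skew polynomials, and the quotient map $\Fq[x;\sigma] \to R_f$ is surjective; hence a subset of $R_f$ is a left $\Fq[x;\sigma]$-submodule if and only if it is a left ideal of $R_f$. Combining this with Theorem \ref{T2.3}, which identifies skew $(\sigma,\vec{a})$-polycyclic codes with left $\Fq[x;\sigma]$-submodules of $R_f$, I would conclude that $C$ is a skew $(\sigma,\vec{a})$-polycyclic code with $\vec{a} \in \mathcal{A}_{\sigma}$ if and only if $\Phi(C)$ is a left ideal of $\Fq[x;\sigma]/\langle x^n - \vec{a}(x)\rangle$.

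The only substantive point is the centrality computation of the second paragraph; everything else is formal bookkeeping. The one place demanding care is correctly unwinding the definition of $\mathcal{A}_{\sigma}$ --- both the spacing of the admissible nonzero positions by $\mu$ and the constraint that the entries lie in $\Fq^{\sigma}$ --- and matching these two features against the explicit description $\Z(\Fq[x;\sigma]) = \Fq^{\sigma}[x^{\mu}]$. The forward implication then rests on ``submodule $\Rightarrow$ ideal'' once $R_f$ is a ring, and the backward implication on the trivial ``ideal $\Rightarrow$ submodule''.
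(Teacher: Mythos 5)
Your argument is correct. Note, however, that the paper does not prove this statement at all: it is imported verbatim from \cite[Theorem 5.9]{Bag2025}, so there is no internal proof to compare against. What you have written is a self-contained derivation from the paper's own preliminaries, and it hangs together: the identification $\vec{a}(x)=\sum_{j=0}^{m-1}a_{j\mu}x^{j\mu}$ with $a_{j\mu}\in\Fq^{\sigma}$ places $x^n-\vec{a}(x)$ in $\Fq^{\sigma}[x^{\mu}]=\Z(\Fq[x;\sigma])$ by Proposition~\ref{PPP.1}(2), centrality makes $\langle f\rangle$ two-sided so that $R_f$ is an honest ring, surjectivity of the quotient map gives the equivalence between left $\Fq[x;\sigma]$-submodules and left ideals of $R_f$, and Theorem~\ref{T2.3} supplies the link to the polycyclic shift condition. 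This is essentially the route the paper itself gestures at in the sentence preceding Theorem~\ref{T2.3} (``\dots or as a left ideal if $f$ is central''), so your proof makes explicit what the paper leaves to a citation. Two small remarks: the ``conversely'' in your second paragraph (that centrality of $x^n-\vec{a}(x)$ forces $\vec{a}\in\mathcal{A}_{\sigma}$) is true but not needed, since $\vec{a}\in\mathcal{A}_{\sigma}$ is a standing hypothesis on both sides of the equivalence; and your appeal to Theorem~\ref{T2.3} silently uses that it holds for arbitrary $\vec{a}$, which is indeed how the paper states it.
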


From the above theorem we can easily deduce the following result regarding skew trinomial codes.

\begin{corollary}
Let $0<\ell<n$ be an integer such that $\ell$ and $n$ are both multiples of $\mu,$ and $a, b$ two non-zeros elements of $ \Fq^{\sigma}.$ Then $C\subseteq \Fq^n$ is a skew $(\ell,\sigma)$-trinomial code associated with $x^n-ax^{\ell}-b$  if and only if $ C $ is a left ideal of $ \mathbb{F}[x;\sigma]/\langle x^n-ax^{\ell}-b \rangle. $
\end{corollary}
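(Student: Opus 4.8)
The plan is to recognize the corollary as the direct specialization of the cited Theorem 5.9 of \cite{Bag2025} to a trinomial associated vector, so that the entire argument reduces to matching hypotheses. First I would identify the vector $\vec{a}$ whose associated polycyclic polynomial is $x^n-ax^{\ell}-b$. Under the identification $\Phi$ of \eqref{Realization}, a skew polycyclic code induced by a vector $\vec{a}=(a_0,\ldots,a_{n-1})$ has defining polynomial $x^n-\vec{a}(x)=x^n-\sum_{i=0}^{n-1}a_i x^i$. Equating this with $x^n-ax^{\ell}-b$ forces $a_0=b$, $a_{\ell}=a$, and $a_i=0$ for all other $i$, i.e.\ $\vec{a}=(b,0,\ldots,0,a,0,\ldots,0)$ with $b$ in position $0$ and $a$ in position $\ell$.

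Next I would check that this $\vec{a}$ simultaneously lies in $\mathcal{A}_{\sigma}$ and defines a trinomial code. The support of $\vec{a}$ is contained in $\{0,\ell\}$: position $0$ is trivially a multiple of $\mu$, and position $\ell$ is a multiple of $\mu$ by hypothesis, with $0<\ell<n=\mu m$, so $\ell\in\{\mu,2\mu,\ldots,(m-1)\mu\}$, exactly the admissible support positions appearing in the definition of $\mathcal{A}_{\sigma}$. Since also $a,b\in\Fq^{\sigma}$ by hypothesis, the nonzero entries lie in the fixed field, and therefore $\vec{a}\in\mathcal{A}_{\sigma}$. Moreover $a_0=b\neq 0$ and $a_{\ell}=a\neq 0$ with $0<\ell<n$, which is precisely the shape required in the definition of a skew $(\ell,\sigma)$-trinomial code; hence the $(\sigma,\vec{a})$-polycyclic codes for this particular $\vec{a}$ are exactly the skew $(\ell,\sigma)$-trinomial codes associated with $x^n-ax^{\ell}-b$.

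Finally, I would invoke Theorem 5.9 of \cite{Bag2025} verbatim: since $\vec{a}\in\mathcal{A}_{\sigma}$, a linear code $C\subseteq\Fq^n$ is a skew polycyclic code induced by $\vec{a}$ if and only if $C$ is a left ideal of $\Fq[x;\sigma]/\langle x^n-\vec{a}(x)\rangle=\Fq[x;\sigma]/\langle x^n-ax^{\ell}-b\rangle$. Combining this equivalence with the identification of the previous paragraph yields the corollary. There is no genuine obstacle here; the only point requiring care is the bookkeeping that translates the trinomial $x^n-ax^{\ell}-b$ into its associated vector and confirms that the divisibility of $\ell$ (and $n$) by $\mu$ together with $a,b\in\Fq^{\sigma}$ are exactly the conditions that place $\vec{a}$ in $\mathcal{A}_{\sigma}$, which is the precise hypothesis under which the cited theorem applies.
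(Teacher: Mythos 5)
Your proposal is correct and matches the paper's intent exactly: the paper states this corollary as an immediate consequence of the cited Theorem 5.9 of \cite{Bag2025}, and your argument — identifying $\vec{a}=(b,0,\ldots,0,a,0,\ldots,0)$, verifying that the hypotheses on $\ell$, $n$, $a$, $b$ place $\vec{a}$ in $\mathcal{A}_{\sigma}$, and then applying that theorem verbatim — is precisely the deduction the authors have in mind.
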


Studying the equivalence between two classes of codes with a cyclic structure reduces to analyzing possible isometries, with respect to the desired metric, between the ambient spaces. In the case of skew polycyclic codes, this amounts to studying isometries between 
$
R_{f_1} := \mathbb{F}_q[x, \sigma]/ \langle f_1(x)\rangle $ and $ R_{f_2} := \mathbb{F}_q[x, \sigma]/\langle f_2(x)\rangle,
$
which preserve the parameters of the codes (length, dimension, and minimum distance) as well as their algebraic structure. For skew polycyclic codes, various approaches can be employed depending on the nature of the ambient spaces under consideration. In the following remark we collect some possible approaches, out of which we will use the second in this work.

\begin{remark}
Let $R_{f_i} := \mathbb{F}_q[x, \sigma]/\langle f_i(x)\rangle$ for $i = 1,2$. 
\begin{enumerate}
    \item $R_{f_1}$ and $R_{f_2}$ are non-associative $\mathbb{F}_q^{\sigma}$-algebras (they are associative if $f_i \in \mathcal{Z}(\mathbb{F}_q[x, \sigma])$). We say that $f_1$ and $f_2$ are \emph{equivalent} if there exists an $ \mathbb{F}_{q_0}$-algebra isomorphism between $R_{f_1}$ and $R_{f_2}$ that preserves the desired distance. In the case of skew constacyclic codes, where $x^n - \lambda \in \mathcal{Z}(\mathbb{F}_q[x, \sigma])$ this was done in \cite[Section 6]{Lobillo2017}. For the commutative case $\sigma = \mathrm{id}$, see \cite{Chen2014,Chen2012,Equiv2025}.
    
    \item Another approach is to study equivalence more simply by requiring only the existence of an $\mathbb{F}_q$-vector space isomorphism $\varphi$ between $R_{f_1}$ and $R_{f_2}$ that preserves the desired distance and satisfies $\varphi(a(x)  b(x)) = \varphi(a(x)) \varphi(b(x))$. This approach was used in \cite{Boulanouar2021} to study equivalence between classes of skew constacyclic, cyclic, and negacyclic codes. We will adopt this approach in our setting; see Definition~\ref{FqIso}.
    \item We can also characterize $\sigma$-skew polycyclic codes induced by a polynomial $f(x)$ as left $\mathcal{L}_{q,\sigma}[x]/\langle f(x^{q_0}) \rangle$-modules, or as left ideals if $f \in \mathcal{Z}(\mathcal{L}_{q,\sigma}[x])$. Studying the equivalence between two such classes reduces to studying isometries between algebras with respect to the desired distance, namely between the $ \mathbb{F}_{q_0}$-algebras $\mathcal{L}_{q,\sigma}[x]/\langle f_1(x^{q_0}) \rangle$ and $\mathcal{L}_{q,\sigma}[x]/\langle f_2(x^{q_0}) \rangle$. For more details on the algebraic structure of skew cyclic codes from this point of view, see \cite{gabidulin2009rank,martinez2017roots}.
\end{enumerate}
\end{remark}

We end this section by recalling the definition of the component-wise product of two vectors that we use throughout this paper. Given two vectors of length $n$, $x = (x_0, x_1, \ldots, x_{n-1})$ and $y = (y_0, y_1, \ldots, y_{n-1})$, the \emph{Schur product} is defined as
$$
x \star y := (x_0 y_0, x_1 y_1, \ldots, x_{n-1} y_{n-1}) .
$$

\section{ Skew Roos-like bound for skew polycyclic codes  }
In \cite{Alfarano2021}, the authors proved a Roos-like bound for the minimum Hamming distance and the minimum rank distance of skew cyclic codes. In their work, they took  $f=x^n-1 \in \Fq[x;\sigma]$, where  $n$ is a multiple of the order of $\sigma$, which means that $x^n-1 $ is a central polynomial.
In this section, we prove this bound for skew \emph{poly}cyclic codes (with respect to the Hamming distance and the rank distance) without assuming that the modulus $f(x)$ is a two-sided polynomial. 

Let us first fix some notation that we use throughout this section.  To decompose a commutative polynomial into a product of linear factors, we usually search for a big extension field which contains all the roots of that polynomial. The same idea is used to decompose skew polynomials, but  in addition we need to define an automorphism over the big field extension, for more details we refer the interested reader to \cite{Lobillo2017}.   Let $\sigma\in \aut(\mathbb{F})$ be an automorphism of $ \mathbb{F},$ of order $\mu.$ 
We say that $\sigma$ has an extension $\theta$ of degree $s$ if there exists a field extension $\mathbb{F} \subseteq \mathbb{L}$ and $\theta \in \aut(\mathbb{L})$ of order $s \mu$ such that $\theta_{\mid \mathbb{F}}=\sigma$ and $ \mathbb{L}^{\theta}=\mathbb{F}^{\sigma}.$ 
Note that if $\mathbb{F}$ is a finite field, extensions of any degree always exist.\footnote{However,  if $\mathbb{F}$ is the field of rational functions over a finite field, they exist when $\mu$  order of $\sigma,$ and $s$ are coprime, see \cite[Examples 2.4 and 2.5]{Lobillo2017}.}

Let  $f(x)$ be a skew polynomial of degree $n$, with right exponent $e ,$ such that $f(0)\ne 0.$ To construct the  extension $\theta$ of $\sigma,$ we have to consider the following two cases; 
        \begin{itemize}
            \item[•] \textbf{If $e$ is a multiple of the order  $\mu,$} i.e., $e=m\mu$ then the   $\mathbb{F}_{q_0}$-Frobenius automorphism of $ \mathbb{F}_{q_0^e}$ is an   extension of degree $m$ of $\sigma.$ 
          
           \item[•] \textbf{If  $e$ is not a multiple of $\mu,$}  we set  $e':=\lcm(e,\mu)$ and  verify that $ f(x)$ is a right divisor of $ x^{e'}-1,$ since $x^e-1$ right divides $ x^{e'}-1.$  So in this case  the   $\mathbb{F}_{q_0}$-Frobenius automorphism of $ \mathbb{F}_{q_0^{e'}}$ is  an extension of degree 
           $m =\frac{e'}{\mu}$ of $\sigma.$ 
            \end{itemize}
In the sequel, we will assume that $ e$ is a multiple of the order  $\mu,$   and the  $\mathbb{F}_{q_0}$-Frobenius automorphism of $ \mathbb{F}_{q_0^e},$ say $\theta,$ is an extension of degree $m:= \frac{e}{\mu}$ of $\sigma.$  According to  Hilbert's Theorem 90 (see \cite[Chapter VI; Theorem 6.1]{Lang2002}),  $\beta \in \mathbb{F}_{q_0^e} $ is a root of $x^e-1$ if and only if there exist $ \alpha \in \mathbb{F}_{q_0^{e}}^* $ such that  
   $\beta=\theta(\alpha) \alpha^{-1},$ i.e.,   $\beta$ is a $\sigma$-conjugate of $1$.   
   Let us now assume that  $\alpha \in \mathbb{F}_{q_0^e}$ is a \emph{normal} element, i.e., $ B:=\left\{\alpha, \theta(\alpha), \ldots, \theta^{e-1}(\alpha)\right\}$ is a basis of $\mathbb{F}_{q_0^e}$ as an $\mathbb F_{q_0}$ vector space.  For $\beta =\theta(\alpha)\alpha^{-1},$ we can prove that the set
   $$  \{ \beta, \theta(\beta), \ldots,  \theta^{e-1}(\beta)\}$$ forms a $P$-independent set and  a $P$-basis  of  the vanishing set  $V(x^e-1) $  of $x^e-1.$ It follows that the rank of $V(x^e-1) $ equals $e,$ and so we can decompose $x^e-1$ over $ \mathbb{F}_{q_0^{e}} $ as follows:     
      $$
x^e-1=\lclm(\left\{x-\theta^i(\beta) \mid 0 \leq i \leq e-1\right\}).
$$

Now we recall the idea used in \cite{Cathryn2024, Lobillo2025} to decompose the skew polynomial $x^n-a.$

\begin{remark} \label{RmqConsta}
In \cite{Cathryn2024, Lobillo2025} the authors studied 
$a$-constacyclic codes of length $n$ over $\Fq,$ which correspond in our case to skew polycyclic codes associated 
    with the polynomial $f(x)=x^n-a$. 
    Their  idea was to find a decomposition of $x^n-a$ in a big extension  $ \mathbb{F}_{q_0^{\mu t}},$ where $ n=\mu t$ under the  condition that $ a=N_n^{\sigma}(\gamma)$ for some $\gamma\in \Fq.$ This latter condition ensures that $x^n-a$ has at least a root in $\Fq. $ Let $\alpha$ be a normal element in $\mathbb{F}_{q^t}$ and  $ \beta=\theta(\alpha) \alpha^{-1},$ with $\theta \in \aut(\mathbb{F}_{q^t})$ an extension of $\sigma$ of degree $t.$ Then  $\{ \gamma \beta, \gamma \theta(\beta),\ldots,  \gamma \theta^{t-1}(\beta )  \}$ forms a $P$-independent set and  a $P$-basis  of  the vanishing set  $V(x^n-a) $  of $x^n-a.$ It follows that the rank of $V(x^e-1) $ equals $n,$ and so they  decomposed $x^n-a$ over $ \mathbb{F}_{q^{t}} $ as follows:     
      $$
x^n-a=\lcm(\left\{x- \gamma \theta^i(\beta) \mid 0 \leq i \leq n-1\right\}).
$$
\end{remark}

Keeping the same notation as above,  we give the definition of the $\beta$-defining set of a given skew polycyclic code associated with a polynomial  $f\in \Fq[x;\sigma].$
\begin{definition} 
Let $g \in \Fq[x;\sigma]$ be  a right divisor of $f(x),$ denoted $ g(x) ~|_r~ f(x),$ and $ C= \Fq[x,\sigma] g(x)$ and $\widehat{C}= \mathbb{F}_{q_0^e}[x,\sigma] g(x) .$
The \emph{$\beta$-defining set} of $g$ is
$$
T_\beta(g)=\left\{ 0 \leq i \leq e-1  ~:~ x-\theta^i(\beta) ~~|_r~~ g \right\}.
$$
\end{definition}
In particular, $\lclm(\left\{x-\theta^i(\beta) ~ \mid~ i \in T_\beta(g) \right\}) ~ ~|_r~~ g$ in $ \mathbb{F}_{q_0^e}[x;\sigma].$ In the sequel we will prove that this division is in  $ \mathbb{F}_{q}[x;\sigma].$

The following lemma will be useful to prove Theorem \ref{thm:Roos} and Theorem \ref{Rank_bound}. 
\begin{lemma}[Lemma 12, \cite{Alfarano2021}]\label{Circulant_lemma}
Let $E/K$ be an extension field of finite degree $n$. Let $\alpha_1, \ldots, \alpha_{t+r} \in E$ be linearly independent elements over $K$, and  $\left\{k_0, \ldots, k_r\right\}$ $\subseteq\{0, \ldots, n-1\}$ be such that $k_r-k_0 \leq t+r-1$ and $k_{j-1}<k_j$ for $1 \leq j \leq r$. Let

$$
A_0=\left(\begin{array}{cccc}
\theta^{k_0}\left(\alpha_1\right) & \theta^{k_1}\left(\alpha_1\right) & \cdots & \theta^{k_r}\left(\alpha_1\right) \\
\vdots & \vdots & \ddots & \vdots \\
\theta^{k_0}\left(\alpha_{t+r}\right) & \theta^{k_1}\left(\alpha_{t+r}\right) & \cdots & \theta^{k_r}\left(\alpha_{t+r}\right)
\end{array}\right) .
$$
Let $b \in\{0, \ldots, n-1\}$ such that $(b, n)=1$ and
$$
A_i=\left(A_0\left|\theta^b\left(A_0\right)\right| \ldots \mid \theta^{b i}\left(A_0\right)\right)
$$
for $0 \leq i \leq r$. Then $\rank\left(A_{t-1}\right)=t+r$.
\end{lemma}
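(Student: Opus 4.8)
The plan is to prove full row rank by showing that the left kernel of $A_{t-1}$ is trivial, i.e.\ that the rows $R_1,\dots,R_{t+r}$, where $R_i=(\theta^{e}(\alpha_i))_{e\in S}$ and $S=\{bs+k_j: 0\le s\le t-1,\ 0\le j\le r\}$ is the multiset of column exponents, are linearly independent over $E$. Since applying the field automorphism $\theta$ entrywise to a matrix over $E$ carries $E$-linear relations to $E$-linear relations bijectively, it preserves the $E$-rank; hence I may normalize $k_0=0$ by applying $\theta^{-k_0}$ to every entry, after which $0=k_0<k_1<\dots<k_r\le t+r-1$. I would then argue by induction on $r$.

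For the base case $r=0$ the matrix $A_{t-1}$ is the $t\times t$ matrix $(\theta^{bs}(\gamma_i))_{1\le i\le t,\,0\le s\le t-1}$ with $\gamma_i=\theta^{k_0}(\alpha_i)$, and the $\gamma_i$ remain $K$-linearly independent. Because $\gcd(b,n)=1$, the automorphism $\theta^{b}$ again generates $\mathrm{Gal}(E/K)$ and has fixed field $K$, so this is a genuine Moore matrix in $\theta^{b}$ and is nonsingular by the classical Moore/Dedekind independence criterion; thus $\rank(A_{t-1})=t=t+r$.

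For the inductive step I would apply the induction hypothesis to two sub-configurations, each obtained by deleting an extreme row together with an extreme column block. Dropping the last element $\alpha_{t+r}$ and the column family $j=r$ leaves $t+r-1$ independent elements and exponents $k_0<\dots<k_{r-1}$ of width $k_{r-1}-k_0\le k_r-1-k_0\le t+r-2=t+(r-1)-1$, which are exactly the hypotheses for parameters $(t,r-1)$; hence those $t+r-1$ rows, restricted to the columns with $j\le r-1$, are independent. Symmetrically, dropping $\alpha_1$ and the block $j=0$ (using $k_r-k_1\le t+r-2$) shows the rows $R_2,\dots,R_{t+r}$ restricted to the columns $j\ge 1$ are independent. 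Consequently, any nonzero left-kernel vector $v=(v_1,\dots,v_{t+r})$ must satisfy both $v_1\ne 0$ and $v_{t+r}\ne 0$.

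The heart of the argument — and the step I expect to be the main obstacle — is to turn ``$v_1,v_{t+r}\ne 0$'' into a contradiction. Here I would exploit the near-closure of $S$ under the shift $e\mapsto e+b$: writing $\phi(e)=\sum_i v_i\theta^{e}(\alpha_i)$, one has $\theta^{-b}(\phi(e+b))=\sum_i\theta^{-b}(v_i)\theta^{e}(\alpha_i)$, so the shifted vector $\theta^{-b}(v)$ is a left-kernel vector for the exponent set $S-b$, which overlaps $S$ in the blocks $s=0,\dots,t-2$. Iterating this shift against the width bound $k_r\le t+r-1$ over-determines the $t$ blocks and, combined with the two boundary non-vanishing conclusions above, forces $v=0$. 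This bookkeeping, which interlocks the ``$k$-direction'' (step $1$) with the ``$b$-direction'' (step $b$, coprime to $n$), is the delicate part, and it is precisely where both the coprimality $\gcd(b,n)=1$ and the consecutiveness condition $k_r-k_0\le t+r-1$ are essential.
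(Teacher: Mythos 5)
First, a remark on the comparison you asked for: the paper does not prove this lemma at all --- it is imported verbatim as \cite[Lemma 12]{Alfarano2021} and used as a black box in Theorems \ref{thm:Roos} and \ref{Rank_bound} --- so there is no in-paper proof to measure your argument against; it has to stand on its own. Parts of it do. The normalization $k_0=0$ is legitimate (an automorphism applied entrywise preserves the $E$-rank), the base case $r=0$ is a correct Moore-matrix argument (here $\gcd(b,n)=1$ is exactly what makes the fixed field of $\theta^b$ equal to $K$), and the two applications of the inductive hypothesis are properly checked: dropping an extreme row together with an extreme column block does land in the hypotheses for parameters $(t,r-1)$, and the conclusion that any nonzero left-kernel vector $v$ of $A_{t-1}$ has $v_1\neq 0$ and $v_{t+r}\neq 0$ follows.

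The problem is that the proof stops exactly where the lemma actually lives. Turning ``$v_1\neq 0$ and $v_{t+r}\neq 0$'' into a contradiction is not a routine bookkeeping step that can be deferred with ``iterating this shift \ldots forces $v=0$''; it is the entire content of the induction step, and nothing you have written implies it. Concretely, the shift observation only says that $w:=\theta^{-b}(v)$ annihilates the columns of $A_{t-1}$ whose exponents lie in $S\cap(S-b)$, i.e.\ $w$ lies in the left kernel of $A_{t-2}$ --- a matrix with $(t-1)(r+1)$ columns and $t+r$ rows, whose left kernel is not a priori trivial, so no contradiction follows as written. Moreover, the same ``drop one row and one block'' argument applies to \emph{every} index $i$, so all it can ever yield is that every entry of $v$ is nonzero; some genuinely new mechanism (this is where the width condition $k_r-k_0\le t+r-1$ must interact with the $b$-shifts) is needed to exclude such a $v$, and you have not supplied it. As it stands this is a plan for a proof with its central step missing, not a proof.
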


\subsection{Hamming distance}

The following theorem gives a skew Roos-like bound on Hamming distance for skew polycyclic codes.
    \begin{theorem}[Skew Roos-like bound for the Hamming distance]\label{thm:Roos}
    Let $g \in \Fq[x;\sigma]$ be such that $ g(x) ~|_r~ f(x)$, let $ C:= \Fq[x;\sigma] g(x)$ be a skew polycyclic code generated by $g(x)$ and $\widehat{C}:= \mathbb{F}_{q_0^e}[x;\sigma] g(x) $ be its extension code over $ \mathbb{F}_{q_0^e}.$
    Let $\beta:= \theta(\alpha)\alpha^{-1}$ where $\alpha$ is a normal element of $ \mathbb{F}_{q_0^e}$. 
    Suppose that there exist integers $ a, b, \delta, r, k_0, \dots , k_r$ such that $\gcd( b,e)=1,\  k_j<k_{j+1}\ \text { for } 0 \leq j \leq r-1, \ k_r-k_0 \leq \delta+r-2, $ and  
    $$ (x-{\theta^{a+ib+k_j}(\beta)}) \mid_r g(x)  ~\text{for}~ i=0,\ldots,{\delta-2}, j=0,\ldots, r-1,$$ 
    i.e., $$ \left\{ a+ib+k_j   ~:~ i=0,\ldots,{\delta-2}, j=0,\ldots, r-1 \right\} \subseteq T_{\beta}(g). $$ Then  the minimum Hamming distance $d_H(C)$ of  $C$ satisfies     $d_H(C)\geq   d_H(\widehat{C}) \geq {\delta}+{r}. $
\end{theorem}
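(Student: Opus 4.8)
The plan is to pass to the extended code and then delegate the genuine rank content to the circulant estimate of Lemma~\ref{Circulant_lemma}. Since $\mathbb{F}_q \subseteq \mathbb{F}_{q_0^e}$, every codeword of $C$ is also a codeword of $\widehat{C}$, so $C \subseteq \widehat{C}$ as sets of vectors and hence $d_H(C) \geq d_H(\widehat{C})$; it therefore suffices to prove $d_H(\widehat{C}) \geq \delta + r$. Fix a nonzero $c(x) = \sum_t c_t x^t \in \widehat{C}$ with support $S = \{t : c_t \neq 0\}$ and weight $w = |S|$, and suppose toward a contradiction that $w \leq \delta + r - 1$.

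Next I would convert the divisibility hypotheses into linear equations on the coefficients $c_t$. Because $g \mid_r c$ and each $x - \theta^{a+ib+k_j}(\beta)$ right divides $g$, transitivity of right divisibility shows that every $\theta^{a+ib+k_j}(\beta)$ is a right root of $c$, i.e. $c(\theta^{a+ib+k_j}(\beta)) = 0$. Writing $\beta = \theta(\alpha)\alpha^{-1}$ and telescoping the product defining the norm in \eqref{e1} (now taken with respect to $\theta$) gives $N^\theta_t(\theta^s(\beta)) = \theta^{t+s}(\alpha)\,\theta^s(\alpha)^{-1}$, so that after cancelling the nonzero factor $\theta^s(\alpha)^{-1}$,
$$ c(\theta^s(\beta)) = 0 \iff \sum_{t \in S} c_t\, \theta^{s+t}(\alpha) = 0. $$
Applying this with $s = a + ib + k_j$ yields the system $\sum_{t\in S} c_t\, \theta^{ib+k_j}(\alpha_t) = 0$, where $\alpha_t := \theta^{a+t}(\alpha)$, for $i = 0,\dots,\delta-2$ and the $r+1$ exponents $k_0,\dots,k_r$.

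The key structural observation is that this system is precisely the vanishing of $\mathbf{c}^{\top} A_{\delta-2}$, where $A_{\delta-2} = (A_0 \mid \theta^b(A_0)\mid \cdots \mid \theta^{b(\delta-2)}(A_0))$ is the matrix built in Lemma~\ref{Circulant_lemma} from the exponents $k_0,\dots,k_r$, the shift $b$, and the elements $\alpha_t$, taken with the lemma's parameter $t = \delta - 1$. To legitimately invoke the lemma I first record that $\deg f = n \le e = \ord_r(f)$ forces $S \subseteq \{0,\dots,n-1\} \subseteq \{0,\dots,e-1\}$, so the exponents $a+t \pmod e$ are distinct and, by normality of $\alpha$, the $\alpha_t$ are $\mathbb{F}_{q_0}$-linearly independent; if $w < \delta + r - 1$ I enlarge $S$ by $\delta+r-1-w$ further indices in $\{0,\dots,e-1\}$ (possible since $\delta + r - 1 \le e$ in the non-trivial range) and set the corresponding coefficients to zero, so that exactly $\delta + r - 1 = (\delta-1) + r$ linearly independent elements $\alpha_t$ are available while the system stays valid. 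The hypotheses $\gcd(b,e) = 1$ and $k_r - k_0 \le \delta + r - 2 = (\delta-1) + r - 1$ are exactly those of Lemma~\ref{Circulant_lemma} with $t = \delta-1$, so it gives $\rank(A_{\delta-2}) = \delta + r - 1$. Thus $A_{\delta - 2}$ has full row rank, its left kernel is trivial, and $\mathbf{c}^{\top} A_{\delta-2} = 0$ forces $\mathbf{c} = 0$, contradicting $c \neq 0$. Hence $w \ge \delta + r$, which is the claim.

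I expect the main obstacle to be the bookkeeping in this last step: correctly matching the indexing of the root conditions to the block-circulant column structure of $A_{\delta-2}$, verifying that the identification $t = \delta - 1$ turns the hypothesis $k_r - k_0 \le \delta + r - 2$ into the lemma's requirement $k_r - k_0 \le t + r - 1$, and justifying the linear independence (hence the padding) of the $\alpha_t$ from the normality of $\alpha$ together with $n \le e$. The telescoping evaluation formula and the subfield-subcode inequality are routine once these identifications are in place; all the real rank content is handled by Lemma~\ref{Circulant_lemma}.
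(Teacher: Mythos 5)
Your proposal is correct and follows essentially the same route as the paper: evaluate the codeword at the prescribed right roots via the telescoping norm identity $N^\theta_t(\theta^s(\beta)) = \theta^{t+s}(\alpha)\theta^s(\alpha)^{-1}$, assemble the resulting linear system into the block matrix of Lemma~\ref{Circulant_lemma}, and conclude the kernel is trivial. Your padding of the support to exactly $\delta+r-1$ indices so that the lemma applies with parameter $t=\delta-1$ (making $k_r-k_0\le \delta+r-2$ match the lemma's hypothesis exactly) is a small technical refinement of a step the paper treats more loosely, but it is not a different argument.
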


\begin{proof}
Let $c$ be a codeword of weight $ w <\delta+r, $ then 
				$$c(x) =\dsum{h=0}{w-1} c_{h}x^{l_h}\;\mbox{ for some }\;\{ l_{0},\ldots,l_{w-1}\}\subseteq\{ 0,\ldots,n-1\}. $$
As $ (x-{\theta^{a+ib+k_j}(\beta)}) \mid_r g(x)$, we have 
				$$ 0=c(\beta^{a+ib+k_j}) = \dsum{h=0}{w-1} c_{h} N_{l_h}(\theta^{a+ib+k_j}(\beta)) = \theta^{a+ib+k_j}(\alpha)^{-1}\dsum{h=0}{w-1} c_{h} \theta^{a+ib+k_j+l_h}(\alpha)$$
It follows that  $ c=(c_1,\ldots,c_w)$ is in the kernel of the matrix $B$ where 

            $$ B= \left[A, \theta^b(A),\theta^{2b}(A),\ldots, \theta^{b(\delta-2)}(A) \right], $$ 
            and  
            $$ 
            A=\left( \theta^{k_j+l_h}(\theta^a(\beta)) \right)_{\substack{0 \leq h \leq w-1 \\ 0 \leq j \leq r-1}}.$$
According to Lemma \ref{Circulant_lemma} , we obtain  that $\rank(B)=w$, and so, $c=0.$  
Therefore, $d_H(C) \geq d_H(\widehat{C}) \geq \delta+r.$ 
\end{proof} 


Then general bound in Theorem \ref{thm:Roos} includes the known bounds for special cases like cyclic and constacyclic codes. 
 For this let  $f\in \Fq[x;\sigma]$ be a skew polynomial of degree $n$ and of right exponent $e ,$ with $f(0)\ne 0.$ 
\begin{enumerate}
    \item If $f=x^n-1$, with $n$ being a multiple of $ \mu ,$  we find the case of skew cyclic codes \cite{Alfarano2021,Lobillo2017}.
      \item If $f=x^n-\lambda $, with $n$ being a multiple of $ \mu $ and $\lambda \in \Fq^{\sigma},$  we find the case of skew constacyclic codes \cite{Lobillo2025}.
      \item  If $f=x^n-\lambda $, with $\lambda= N_n^{\sigma}(\gamma) $ for some $\gamma \in \Fq, $  we find the case of skew constacyclic  codes  \cite{Cathryn2024}.
\end{enumerate}

The above bound includes particular cases, namely the skew BCH-like bound and the Hartmann–\,Tzeng-like bound (in the Hamming distance), as summarized in the following: 

\begin{corollary}[\cite{Alfarano2021, Lobillo2017, Cathryn2024, Lobillo2025}]\label{CorBch}
Let $g \in \Fq[x,\sigma]$ be such that $ g(x) ~|_r~ f(x),$ let $ C= \Fq[x,\sigma] g(x)$ be a skew polycyclic code generated by $g(x)$ and let $\widehat{C}= \mathbb{F}_{q_0^e}[x,\sigma] g(x) $ be its extension code over $ \mathbb{F}_{q_0^e}.$ Let $\beta= \theta(\alpha)\alpha^{-1}$ where $\alpha$ is a normal element of $ \mathbb{F}_{q_0^e}$. 
\begin{enumerate}
    \item (Skew BCH-like bound) If there exist integers $a,b$ such that  
    $$ \{(x-\theta^{a+jb}(\beta)) \mid_r g(x) ~:~ j=0,\ldots, \delta-2\},$$ 
 i.e., $$ \left\{ a+jb   ~:~ j=0,\ldots,{\delta-2},  \right\} \subseteq T_{\beta}(g), $$
 with  $\gcd(e, b)=1$,  then $C$ has  minimum Hamming distance  $ d_H(C)\geq   d_H(\widehat{C}) \geq {\delta}.$
 
 \item  (Skew Hartmann-Tzeng-like bound)      If there exist integers $a,b$ such that 
 $$ \{ (x-{\theta^{a+ib+jc}(\beta)}) \mid_r g(x)  ~:~ i=0,\ldots,{\delta-2}, j=0,\ldots, {r}\},$$ 
  i.e., $$ \left\{ a+ib+jc  ~:~ i=0,\ldots,{\delta-2}, j=0,\ldots, {r} \right\} \subseteq T_{\beta}(g), $$
	with   $\gcd(e, b)=1 $ and $\gcd(e, c)<\delta$, then  the minimum Hamming distance $d_H(C)$ of  $C$ satisfies   $d_H(C)\geq   d_H(\widehat{C})  \geq {\delta}+ {r}. $ 
\end{enumerate}
\end{corollary}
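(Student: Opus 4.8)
The plan is to obtain both bounds as instances of the skew Roos-like bound (Theorem~\ref{thm:Roos}), by specializing the offset parameters $k_0,\dots,k_r$ while keeping the same arithmetic progression of common step $b$. For the BCH-like bound I would take $r=0$ together with the single trivial offset $k_0=0$. Then the defining-set hypothesis of Theorem~\ref{thm:Roos}, namely $\{a+ib+k_0 : i=0,\dots,\delta-2\}\subseteq T_\beta(g)$, becomes literally the BCH hypothesis $\{a+jb : j=0,\dots,\delta-2\}\subseteq T_\beta(g)$; the coprimality $\gcd(b,e)=1$ is inherited verbatim, and the gap condition $k_r-k_0=0\le\delta+r-2=\delta-2$ holds whenever $\delta\ge 2$ (for $\delta\le 1$ the asserted bound is vacuous). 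Theorem~\ref{thm:Roos} then yields $d_H(C)\ge d_H(\widehat{C})\ge\delta+0=\delta$, as required.

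For the Hartmann--Tzeng-like bound the natural choice is $k_j=jc$ for $j=0,\dots,r$. With this assignment the defining-set hypothesis of Theorem~\ref{thm:Roos} reads $\{a+ib+jc\}\subseteq T_\beta(g)$, which is exactly the HT hypothesis; the coprimality $\gcd(b,e)=1$ is again inherited, and the target conclusion $d_H(C)\ge d_H(\widehat{C})\ge\delta+r$ matches. So, at the level of the defining set, the HT statement sits inside the Roos statement for the offsets $k_j=jc$.

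The hard part will be the \emph{admissibility of these offsets}: with $k_j=jc$ the Roos gap condition becomes $k_r-k_0=rc\le\delta+r-2$, which fails as soon as $c$ is large, so Theorem~\ref{thm:Roos} cannot be invoked as a black box. This is precisely where the weaker HT hypothesis $\gcd(c,e)<\delta$ must take over. Concretely, I would re-run the proof of Theorem~\ref{thm:Roos}: for a putative codeword $u$ of weight $w\le\delta+r-1$ with support $\{l_0,\dots,l_{w-1}\}$, the right-root conditions $u\big(\theta^{a+ib+jc}(\beta)\big)=0$ telescope (using $N_{l}\big(\theta^{m}(\beta)\big)=\theta^{m}(\alpha)^{-1}\theta^{m+l}(\alpha)$ for $\beta=\theta(\alpha)\alpha^{-1}$) to
\[
\sum_{h=0}^{w-1} u_h\,\theta^{ib+jc}\!\big(\theta^{a+l_h}(\alpha)\big)=0, \qquad i=0,\dots,\delta-2,\ j=0,\dots,r,
\]
so that $(u_0,\dots,u_{w-1})$ lies in the left kernel of $B=\big[A,\theta^b(A),\dots,\theta^{b(\delta-2)}(A)\big]$ with $A=\big(\theta^{jc+l_h}(\theta^{a}(\alpha))\big)_{h,j}$. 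The elements $\theta^{a+l_h}(\alpha)$ are $\mathbb F_{q_0}$-linearly independent because $\alpha$ is normal and the $l_h$ are distinct modulo $e$, so everything reduces to proving $\rank(B)=w$.

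The main obstacle, then, is to establish this full-rank statement from $\gcd(c,e)<\delta$ rather than from a gap bound: one must show that the $c$-shifted columns (indexed by $j$) remain linearly independent, so that the circulant argument behind Lemma~\ref{Circulant_lemma} still applies. I expect this to require a variant of Lemma~\ref{Circulant_lemma} in which the offset-gap hypothesis $k_r-k_0\le t+r-1$ is replaced by the condition $\gcd(c,e)<\delta$ governing the second progression --- essentially the content carried by the cited references \cite{Alfarano2021,Lobillo2017,Cathryn2024,Lobillo2025}. Isolating and proving that rank statement is the crux: once it is in hand, $\rank(B)=w$ forces $u=0$, contradicting $w\ge 1$, and the bound $d_H(C)\ge d_H(\widehat{C})\ge\delta+r$ follows exactly as in Theorem~\ref{thm:Roos}.
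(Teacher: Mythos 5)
Your treatment of part (1) is correct and is exactly what the paper intends: taking $r=0$ with the single offset $k_0=0$ in Theorem~\ref{thm:Roos} reproduces the skew BCH-like bound verbatim, and the gap condition is vacuous for $\delta\ge 2$. The paper itself offers no written proof of the corollary (it is presented as an immediate specialization of Theorem~\ref{thm:Roos}, with the statement attributed to the cited references), so for this part your argument supplies precisely the missing one-line verification.

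For part (2) your diagnosis is correct and points at a genuine issue: the substitution $k_j=jc$ does not satisfy the hypothesis $k_r-k_0\le\delta+r-2$ of Theorem~\ref{thm:Roos} in general, and reducing the offsets modulo $e$ does not rescue it. For instance, with $e=7$, $c=3$, $r=2$, $\delta=3$ one has $\gcd(e,c)=1<\delta$, yet the residues $\{0,3,6\}\subset\mathbb{Z}_7$ admit no system of integer representatives of amplitude at most $\delta+r-2=3$ (no run of four consecutive residues modulo $7$ contains all of $0,3,6$), so the Roos hypothesis simply cannot be met with these offsets. Hence the Hartmann--Tzeng-like bound is not a black-box corollary of Theorem~\ref{thm:Roos}; it needs its own rank argument in which the hypothesis $\gcd(e,c)<\delta$ replaces the offset-gap condition, exactly as you say. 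Your reduction of the problem to showing $\rank(B)=w$ for $B=\bigl[A,\theta^b(A),\dots,\theta^{b(\delta-2)}(A)\bigr]$ with $A=\bigl(\theta^{jc+l_h}(\theta^a(\alpha))\bigr)$ is correctly set up: the telescoping of $N_{l}(\theta^m(\beta))$ for $\beta=\theta(\alpha)\alpha^{-1}$ and the $\mathbb{F}_{q_0}$-linear independence of the elements $\theta^{a+l_h}(\alpha)$ both check out. But the proposal stops at the crux: the variant of Lemma~\ref{Circulant_lemma} in which $\gcd(e,c)<\delta$ guarantees full rank is stated as a target, not proved. Until that lemma is supplied (it is the actual substance of the skew Hartmann--Tzeng bound in the cited works), part (2) of your argument is a correct reduction but not a complete proof.
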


\subsection{Rank distance}

We now turn to the rank distance and prove the analog bound. In this section  we consider the rank distance $ \Fq/ \mathbb{F}_{q_0}$ where $ \mathbb{F}_{q_0} =\Fq^{\sigma}$ is the fixed subfield of $ \Fq$ by $\sigma$.

\begin{theorem}[Skew Roos-like bound for the rank distance]\label{Rank_bound}
Let $g \in \Fq[x,\sigma]$ be such that $ g(x) ~|_r~ f(x),$ let $ C:= \Fq[x,\sigma] g(x)$ be a skew polycyclic code generated by $g(x)$ and let $\widehat{C}:= \mathbb{F}_{q_0^e}[x,\sigma] g(x) $ be its extension code over $ \mathbb{F}_{q_0^e}.$ Let $\beta: = \theta(\alpha)\alpha^{-1}$ where $\alpha$ is a normal element of $ \mathbb{F}_{q_0^e}$. 
    Suppose there exists integers $ a, b, \delta, r, k_0, \dots , k_r$ such that $\gcd( b,e)=1,\  k_j<k_{j+1}\ \text { for } 0 \leq j \leq r-1, \ k_r-k_0 \leq \delta+r-2, $ and  
    $$\{ (x-{\theta^{a+ib+k_j}(\beta)}) \mid_r g(x)  ~\text{ for }~ i=0,\ldots,{\delta-2}, j=0,\ldots, r-1.\}$$ 
    Then  the minimum rank distance of $C$ satisfies     $d_R(C)\geq   d_R(\widehat{C}) \geq {\delta}+ {r}. $
\end{theorem}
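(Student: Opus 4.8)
The plan is to deduce the rank bound from the Hamming bound of Theorem~\ref{thm:Roos} via the characterization in Proposition~\ref{Prop_Hamming_rank}. Since here the rank is measured over $\mathbb{F}_{q_0}=\Fq^{\sigma}$, that proposition reads $d_R(\widehat{C})=\min\{d_H(\widehat{C}M):M\in\GL(\mathbb{F}_{q_0})\}$, where the minimum runs over invertible $n\times n$ matrices with entries in the \emph{fixed field} $\mathbb{F}_{q_0}$. Moreover, since $C\subseteq\widehat{C}$ and the rank weight of a vector (the $\mathbb{F}_{q_0}$-dimension of the span of its coordinates) does not depend on the ambient field, we automatically get $d_R(C)\geq d_R(\widehat{C})$. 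Thus it suffices to prove $d_H(\widehat{C}M)\geq\delta+r$ for every fixed $M\in\GL(\mathbb{F}_{q_0})$, and the whole point is that twisting by such an $M$ leaves the root structure used in Theorem~\ref{thm:Roos} intact.

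Second, I would run the counting argument of Theorem~\ref{thm:Roos} on the twisted code. Suppose $d=\widehat{c}\,M\in\widehat{C}M$ is nonzero of Hamming weight $w<\delta+r$, with support positions $l_0<\cdots<l_{w-1}$ and nonzero entries $d_0,\dots,d_{w-1}$. Writing $\widehat{c}=d\,M^{-1}$ and substituting into the defining relations $\widehat{c}(\theta^{a+ib+k_j}(\beta))=0$ exactly as in the Hamming proof, each relation becomes $\sum_{h=0}^{w-1} d_h\sum_{p=0}^{n-1}(M^{-1})_{l_h,p}\,\theta^{a+ib+k_j+p}(\alpha)=0$. Because the entries of $M^{-1}$ lie in $\mathbb{F}_{q_0}=\mathbb{L}^{\theta}$ and are therefore fixed by $\theta$, I can pull the automorphism through the inner sum and rewrite each relation as $\sum_{h=0}^{w-1} d_h\,\theta^{ib+k_j}(\xi_h)=0$, where $\xi_h:=\theta^{a}\!\left(\sum_{p=0}^{n-1}(M^{-1})_{l_h,p}\,\theta^{p}(\alpha)\right)$. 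This is formally identical to the system appearing in the proof of Theorem~\ref{thm:Roos}, with the elements $\theta^{a+l_h}(\alpha)$ replaced by the $\xi_h$.

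Third, I would verify that the $\xi_h$ satisfy the hypothesis of Lemma~\ref{Circulant_lemma}, namely that $\xi_0,\dots,\xi_{w-1}$ are $\mathbb{F}_{q_0}$-linearly independent. Indeed, $\{\theta^{p}(\alpha):0\leq p\leq n-1\}$ is part of a normal basis and hence $\mathbb{F}_{q_0}$-linearly independent (using $n\leq e$), the $w$ rows $l_0,\dots,l_{w-1}$ of the invertible matrix $M^{-1}$ are $\mathbb{F}_{q_0}$-linearly independent, and $\theta^{a}$ is an $\mathbb{F}_{q_0}$-linear automorphism; composing these facts gives the independence of the $\xi_h$. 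With the $\theta^{ib}$-block structure preserved and $\gcd(b,e)=1$, Lemma~\ref{Circulant_lemma} then applies verbatim (padding $\{\xi_h\}$ to $\delta-1+r$ linearly independent elements when $w<\delta-1+r$, exactly as in the Hamming case): the associated matrix has rank $w$, so the vector $(d_0,\dots,d_{w-1})$ must vanish, contradicting $d\neq 0$. Hence $d_H(\widehat{C}M)\geq\delta+r$ for all $M$, and Proposition~\ref{Prop_Hamming_rank} yields $d_R(\widehat{C})\geq\delta+r$, whence $d_R(C)\geq d_R(\widehat{C})\geq\delta+r$.

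The main obstacle, and the reason the statement holds only for the subfield-linear isometries defining the rank metric, is the commutation step in the second paragraph: pulling $\theta^{ib+k_j}$ out of the inner sum requires that the entries of $M^{-1}$ be fixed by $\theta$, i.e. that $M$ be defined over $\mathbb{F}_{q_0}=\Fq^{\sigma}$. For a general $M\in\GL(\Fq)$ this fails, the generalized-circulant block structure of the matrix collapses, and Lemma~\ref{Circulant_lemma} no longer applies; it is precisely the restriction $M\in\GL(\mathbb{F}_{q_0})$ coming from the rank metric that makes the twisted system inherit the form needed for the Roos--Hartmann--Tzeng counting argument.
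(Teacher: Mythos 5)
Your proposal is correct and follows essentially the same route as the paper's proof: reduce to $d_H(\widehat{C}M)\geq\delta+r$ for $M\in\GL(\mathbb{F}_{q_0})$ via Proposition~\ref{Prop_Hamming_rank}, exploit that the entries of $M$ (equivalently $M^{-1}$) are fixed by $\theta$ to rewrite the root conditions in the block-circulant form of Lemma~\ref{Circulant_lemma} with new elements $\xi_h$ (the paper's $\beta_h$, built from rows of $M$ rather than $M^{-1}$), and prove their $\mathbb{F}_{q_0}$-linear independence from the normal basis and the invertibility of $M$. The only differences are the cosmetic reparametrization $d=\widehat{c}M$ versus $c=\widehat{c}M^{-1}$ and your slightly more explicit remark about padding before invoking the lemma.
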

\begin{proof}
According to Proposition \ref{Prop_Hamming_rank}, it suffices to prove $d_H(C M) \geq \delta+r$ for all $M \in \GL(\mathbb{F}_{q_0})$. Let $M \in \GL(\mathbb{F}_{q_0})$, and assume that there is a codeword $c \in C M^{-1}$ of Hamming weight $w < \delta+r$, i.e., $\widehat{c} := cM \in C$. Let $\widehat{c}(x) := \dsum{h=0}{n-1} \widehat{c}_{h}x^{h}\in C $. For  $i = 0, \ldots, {\delta - 2}, j = 0, \ldots, r,$ we have  that  $x - {\theta^{a+ib+k_j}(\beta)}~|_r~ g(x).$ Then it right divides $\widehat{c}(x)$, and so
$$
0 = \widehat{c}(\beta^{a+ib+k_j}) = \dsum{h=0}{n-1} \widehat{c}_{h} N_{h}(\theta^{a+ib+k_j}(\beta)) = \theta^{a+ib+k_j}(\alpha)^{-1} \dsum{h=0}{n-1} \widehat{c}_{h} \theta^{a+ib+k_j+h}(\alpha) = 0.
$$
It follows that $c = \widehat{c}M^{-1}$ is in the kernel of the matrix $MB$, where
$$
B = \left[A, \theta^b(A), \theta^{2b}(A), \ldots, \theta^{b(\delta-2)}(A)\right],
$$
and 
$$
A = \left( \theta^{k_j+h}(\theta^a(\beta)) \right)_{\substack{0 \leq h \leq n-1 \\ 0 \leq j \leq r}}.
$$
Now let $S := \{l_0, \ldots, l_{w-1}\} \subseteq \{0, \ldots, n-1\}$ be the set of the non-zero components of $c$, since we assumed that $w_H(c) = w$. Set $\tilde{c} = (c_{l_0}, \ldots, c_{l_{w-1}}) \in \Fq^{w}$ and let $M_S$ be the rows of $M$ indexed by $S$.  
Then $cM = \tilde{c} M_{S}$, and so $\tilde{c}$ is in the kernel of $M_{S} B$. As $M_S \in \mathbb{M}_{w\times n}(\mathbb{F}_{q_0})$, we obtain that
$$
M_{S} B = \left[M_{S}A, \theta^b(M_{S} A), \theta^{2b}(M_{S}A), \ldots, \theta^{b(\delta-2)}(M_{S}A)\right].
$$
Next, we observe that the matrix $M_SA$ is of the form
$$
M_SA = \left( \theta^{k_j}(\theta^a(\beta_h)) \right)_{\substack{0 \leq h \leq w-1 \\ 0 \leq j \leq r}},
$$
where $\beta_h := M_{\{l_h\}} \star\alpha^{[\theta]},$ with $\star$ means component-wise product of the vectors $ M_{\{l_h\}}$ and $ \alpha^{[\theta]}, $ and  $\alpha^{[\theta]} := (\alpha, \theta(\alpha), \ldots, \theta^{n-1}(\alpha)) \in \mathbb{F}_{q_0^e}^n, $ are linearly independent over $\mathbb{F}_{q_0}$. In fact,
let $t_0, \ldots, t_{w-1} \in \mathbb{F}_{q_0}$ be such that $t_0 \beta_0 + t_1 \beta_1 + \ldots + t_{w-1} \beta_{w-1} = 0$,
then
$$
0 = \dsum{h=0}{w-1} t_h \beta_h 
= \dsum{h=0}{w-1} t_h \left(\dsum{i=0}{n-1} m_{l_h, i} \theta^{i}(\alpha) \right) 
= \dsum{h=0}{w-1} \dsum{i=0}{n-1} t_h m_{l_h, i} \theta^{i}(\alpha) 
= \dsum{i=0}{n-1} \left(\dsum{h=0}{w-1} t_h m_{l_h, i} \right) \theta^{i}(\alpha).
$$
Since $\left\{\alpha, \theta(\alpha), \ldots, \theta^{e-1}(\alpha)\right\}$ is a basis of $\mathbb{F}_{q_0^e}/\mathbb{F}_{q_0}$ and $n \leq e$,  $\alpha, \theta(\alpha), \ldots, \theta^{n-1}(\alpha)
$ are linearly independent over $\mathbb{F}_{q_0}$. It follows that $\dsum{h=0}{w-1} t_h m_{l_h, i} = 0$ for all $i = 0, \ldots, n-1$.  This forces $\left(t_0, \ldots, t_{w-1}\right) M_S = 0$. Now, since $M$ is invertible, we know $\rank(M_S) = w$, so $t_h = 0$ for all $0 \leq h \leq w-1$. Thus, $\beta_0, \ldots, \beta_{w-1}$ are linearly independent over $\mathbb{F}_{q_0}$. Therefore, by Lemma \ref{Circulant_lemma}, we have $\rank(M_S B) = w$. This forces $\tilde{c} = 0$ and hence $c = 0$. Therefore, $d_R(C) \geq d_R(\widehat{C}) \geq \delta + r$. 
\end{proof}

\begin{remark}\label{Rmq_gamma}
    In \cite[Example 5.1.5]{Cathryn2024}, K. Hechtel gave an example where the Roos bound on the rank metric for skew $ a $-constacyclic codes does not hold for $ \gamma \in (\mathbb{F}_{q_0} \setminus \mathbb{F}_{q}) $ such that $ a = N_n^{\sigma}(\gamma) .$ Following our approach, which uses the decomposition of $ x^e - 1 $ instead of $ x^n - a, $ where $e$ is the right order of $x^n-a$ (see Remark \ref{RmqConsta}), we obtain that the bound holds without any restriction. The difference between our approach and hers lies in the proof of the above theorem. Specifically, in the proof that  $ \beta_0, \beta_1, \ldots, \beta_{w-1} $ are linearly independent. However, in her argument, linear independence holds only if $ \gamma \in \mathbb{F}_{q_0} ;$ for more details, see \cite[ Theorem 5.1.4]{Cathryn2024}.

\end{remark}

As above we can recover the particular cases of the skew BCH-like bound and the skew Hartmann-Tzeng-like bound on the rank metric for skew polycyclic codes. 
\begin{corollary} [\cite{Alfarano2021, Cathryn2024,  Lobillo2025}] \label{CorRank_Bch}
Let $C$ be a skew polycyclic code associated with $f,$ of length $n$ over $\Fq$ generated by $  g(x) \in \mathbb{F}_{q}[x;\sigma]$  and let $ \widehat{C} $ be its extension code of the same length and generator over $ \mathbb{F}_{q_0^e}$ i.e., $\widehat{C}= \mathbb{F}_{q_0^e}[x;\theta] g$.  Let $\beta= \theta(\alpha)\alpha^{-1}$ where $\alpha$ is a normal element of $ \mathbb{F}_{q_0^e}$.
\begin{enumerate}
    \item (Skew BCH-like bound) If   $$ \{(x-\theta^{a+jb}(\beta)) \mid_r g(x) ~\text{ for }~ j=0,\ldots, \delta-2\},$$
 such that  $\gcd(e, b)=1$,  then $ d_R(C)\geq   d_R(\widehat{C}) \geq {\delta}.$ 
 
 \item  (Skew Hartmann-Tzeng-like bound)    If   $$ \{ (x-{\theta^{a+ib+jc}(\beta)}) \mid_r g(x)  ~\text{ for }~ i=0,\ldots,{\delta-2}, j=0,\ldots, {r}\},$$ 
	where   $\gcd(e, b)=1 $ and $\gcd(e, c)<\delta$, 
    then  the minimum rank distance $d_R(C)$ of  $C$ satisfies $d_R(C)\geq   d_R(\widehat{C})  \geq {\delta}+ {r}. $ 
\end{enumerate}
\end{corollary}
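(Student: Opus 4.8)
The plan is to read both statements off the general Skew Roos-like bound for the rank distance, Theorem \ref{Rank_bound}, by choosing the offsets $k_0, \dots, k_r$ appropriately; no fresh argument is required beyond matching hypotheses, exactly as in the Hamming case (Corollary \ref{CorBch}). Throughout I adopt the convention that the $r+1$ offsets are indexed $k_0, \dots, k_r$, so that the conclusion of Theorem \ref{Rank_bound} reads $d_R(C) \ge d_R(\widehat{C}) \ge \delta + r$.

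For part (1), the skew BCH-like bound, I would apply Theorem \ref{Rank_bound} in the degenerate case of a single offset, namely $r = 0$ and $k_0 = 0$. The defining-set hypothesis then becomes $\{a + jb : j = 0, \dots, \delta - 2\} \subseteq T_\beta(g)$, which is precisely the assumption of part (1); the density requirement $k_r - k_0 \le \delta + r - 2$ degenerates to $0 \le \delta - 2$ and is automatic for $\delta \ge 2$, while $\gcd(e,b) = 1$ is assumed. Theorem \ref{Rank_bound} then yields $d_R(C) \ge d_R(\widehat{C}) \ge \delta$.

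For part (2), the skew Hartmann--Tzeng-like bound, I would take the arithmetic offsets $k_j := jc$ for $j = 0, \dots, r$, so that the two-dimensional grid $\{a + ib + jc\}$ of the hypothesis coincides with the configuration $\{a + ib + k_j\}$ of Theorem \ref{Rank_bound}, and $\gcd(e,b) = 1$ holds. The delicate point is the rank condition: with $k_j = jc$ the interval bound $k_r - k_0 = rc \le \delta + r - 2$ need not hold, so the density hypothesis of Theorem \ref{Rank_bound} cannot simply be invoked. Instead, the structural condition $\gcd(e,c) < \delta$ is exactly what guarantees that the circulant-type matrix $B = [A, \theta^b(A), \dots, \theta^{b(\delta-2)}(A)]$, whose columns now carry the offsets $\theta^{jc}$, still attains full row rank. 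I would therefore establish this via a variant of Lemma \ref{Circulant_lemma} in which the interval hypothesis on the $k_j$ is replaced by $\gcd(e,c) < \delta$, and then conclude as in the proof of Theorem \ref{Rank_bound} that any codeword of weight $w < \delta + r$ lies in the kernel of a full-rank matrix and hence vanishes, giving $d_R(C) \ge d_R(\widehat{C}) \ge \delta + r$.

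The main obstacle is precisely this Hartmann--Tzeng step: unlike the BCH case it is not a pure substitution into Theorem \ref{Rank_bound}, because the arithmetic offsets $jc$ violate the interval/density condition in general. The work lies in showing that $\gcd(e,c) < \delta$ forces the relevant Moore/circulant matrix to have full rank; this is the skew analogue of the classical passage from the Roos bound to the Hartmann--Tzeng bound, transported to the extension $\mathbb{F}_{q_0^e}$ via the automorphism $\theta$ and the normal element $\alpha$. As a consistency check, specializing $f = x^n - 1$ or $f = x^n - \lambda$ recovers the skew-cyclic and skew-constacyclic instances of \cite{Alfarano2021, Cathryn2024, Lobillo2025}.
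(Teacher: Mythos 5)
Your part (1) is essentially the paper's own (implicit) derivation: the paper gives no written proof of this corollary, but the intended argument is exactly the specialization $r=0$, $k_0=0$ of Theorem \ref{Rank_bound}, and your indexing convention (so that $r+1$ offsets yield $d_R\geq \delta+r$) is the correct reading of that theorem, whose proof indexes the matrix $A$ by $0\leq j\leq r$ even though its statement writes $j\leq r-1$. So the BCH-like case is fine.

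For part (2) you have correctly identified a real issue that the paper passes over in silence. With $k_j=jc$ the hypothesis $k_r-k_0\leq \delta+r-2$ of Theorem \ref{Rank_bound} becomes $rc\leq \delta+r-2$, which is strictly stronger than $\gcd(e,c)<\delta$ (take $\delta=3$, $r=2$, $c=2$ and $e$ odd: then $rc=4>3$ while $\gcd(e,c)=1<\delta$), and the exponent set $a+b\{0,\dots,\delta-2\}+c\{0,\dots,r\}$ need not obviously contain any Roos configuration of the required size, so the Hartmann--Tzeng case is genuinely not a substitution instance of the Roos-like theorem. The problem is that your proposed repair --- ``a variant of Lemma \ref{Circulant_lemma} in which the interval hypothesis on the $k_j$ is replaced by $\gcd(e,c)<\delta$'' --- is only announced, never proved, and that full-rank statement \emph{is} the entire mathematical content of the Hartmann--Tzeng bound: classically it requires a separate rank argument on the bivariate Vandermonde/Moore matrix in which $\gcd(e,c)$ controls how many columns can share the same second coordinate, and the same work must be redone here for the matrix $M_SB$ with the $\mathbb{F}_{q_0}$-independent $\beta_h$. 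As written, part (2) of your proof therefore has a gap: you must either state and prove the modified lemma or carry out that direct rank computation. (For comparison, the paper itself offers no argument and simply cites \cite{Alfarano2021, Cathryn2024, Lobillo2025}; your diagnosis of where the difficulty actually sits is more careful than the paper's ``as above we can recover'' framing, but a diagnosis is not yet a proof.)
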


\subsection{Constructions of skew polycyclic codes with designed distance. }

To construct classical cyclic (constacyclic, polycyclic) codes of length $n$ over $\mathbb{F}_q$, one studies the factorization of $x^n - 1$ over $\mathbb{F}_q$, obtained by considering its roots in an extension field of $\mathbb{F}_q$. These roots form a multiplicative group of order $n$, generated by a primitive element $\zeta$. The irreducible factors of $x^n - 1$ correspond to cyclotomic cosets that are $\langle q \rangle$-orbits on $\mathbb{Z}_n$. Each coset $C_i$ defines a minimal polynomial  $
M_{\zeta^i}(x) = \prod_{k \in C_i} (x - \zeta^k), $ which is irreducible over $\mathbb{F}_q$.

In the case of skew codes, an alternative to the classical concept of cyclotomic cosets is the notion of  \emph{$\mu$-closed sets}, introduced by Gómez-Torrecillas et al.\ in \cite{Lobillo2017} and further developed in \cite{Alfarano2021, Cathryn2024, Lobillo2025}. We adapt this concept to construct skew polycyclic codes in our setting. Let $\mathbb{Z}_e = \{0,1,\ldots,e-1\}$ be the cyclic group of order $e$, and suppose $e = m\mu$. Then,
$$
\mu \mathbb{Z}_e = \{0, \mu, \ldots, (m-1)\mu\}
$$
is a subgroup of order $m$, and the quotient group $\mathbb{Z}_e / \mu \mathbb{Z}_e$ partitions $\mathbb{Z}_e$ as
$$
\mathbb{Z}_e = \bigcup_{i=0}^{m-1} (i + \mu \mathbb{Z}_e).
$$
Let $\alpha \in \mathbb{F}_{q_0^e}$ be such that $\{\alpha, \theta(\alpha), \ldots, \theta^{e-1}(\alpha)\}$ forms a normal basis of $\mathbb{F}_{q_0^e}/\mathbb{F}_{q_0}$, and set $\beta := \alpha^{-1} \theta(\alpha)$. Then, as previously discussed, the polynomial $x^e - 1$ factors over $\mathbb{F}_{q_0^e}$ as:
$$
x^e - 1 = \lclm\left\{ (x - \beta), (x - \theta(\beta)), \ldots, (x - \theta^{e-1}(\beta)) \right\}.
$$
Grouping roots according to $\mu$-closed sets, we obtain
$$
x^e - 1 = \lclm\left\{ M_{\beta}(x), M_{\theta(\beta)}(x), \ldots, M_{\theta^{\mu - 1}(\beta)}(x) \right\},
$$
where for each $i \in \{0, \ldots, \mu - 1\}$,
$$
M_{\theta^i(\beta)}(x) := \lclm \left\{ x - \theta^{i + j\mu}(\beta) \mid 0 \leq j < m \right\}.
$$

We now recall the following result, which will be used in the construction of skew polycyclic codes with some designed distance.

\begin{lemma}(\cite[Lemma 5]{Lobillo2025}).\label{Skew_Roots}
Let $ \gamma  \in \mathbb{F}_{q_0^e} $ and $g(x)\in \mathbb{F}_{q}[x;\sigma].$ If $ \gamma   $ is a right root of $g,$ then $ \theta^{\mu}(\gamma)$ is also a right root of $g.$ 
\end{lemma}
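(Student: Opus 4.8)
The plan is to reduce everything to the single structural fact that $\theta^{\mu}$ fixes the coefficient field $\mathbb{F}_q$ pointwise. Indeed, $\theta$ is an extension of $\sigma$, so $\theta_{\mid \mathbb{F}_q} = \sigma$, and since $\sigma$ has order $\mu$ we get $\theta^{\mu}_{\mid \mathbb{F}_q} = \sigma^{\mu} = \mathrm{id}$. (Equivalently, because $q_0^{\mu} = q$, the map $\theta^{\mu}$ is just the $q$-Frobenius of $\mathbb{F}_{q_0^e} = \mathbb{F}_{q^m}$, which fixes $\mathbb{F}_q$ elementwise.) With this in hand the goal is simply to show that applying $\theta^{\mu}$ to the argument of $g$ is the same as applying $\theta^{\mu}$ to the value $g(\gamma)$, so that a right root is sent to a right root.

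Concretely, I would write $g(x) = \sum_i a_i x^i$ with $a_i \in \mathbb{F}_q$, regard $g$ as an element of $\mathbb{F}_{q_0^e}[x;\theta]$, and use the Lam--Leroy right-evaluation formula \eqref{e1} in the extension, namely $g(\gamma) = \sum_i a_i N_i^{\theta}(\gamma)$ with $N_i^{\theta}(\gamma) = \theta^{i-1}(\gamma)\cdots\theta(\gamma)\,\gamma$. The crucial computation is that the norms commute with $\theta^{\mu}$: since $\mathbb{F}_{q_0^e}$ is commutative and $\theta^{\mu}$ is a field automorphism,
\[
N_i^{\theta}\big(\theta^{\mu}(\gamma)\big) = \prod_{j=0}^{i-1}\theta^{j}\big(\theta^{\mu}(\gamma)\big) = \prod_{j=0}^{i-1}\theta^{\mu+j}(\gamma) = \theta^{\mu}\!\left(\prod_{j=0}^{i-1}\theta^{j}(\gamma)\right) = \theta^{\mu}\big(N_i^{\theta}(\gamma)\big).
\]

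From here the conclusion is immediate: using $\theta^{\mu}(a_i) = a_i$ (this is exactly where $g \in \mathbb{F}_q[x;\sigma]$ is used) together with the displayed identity,
\[
g\big(\theta^{\mu}(\gamma)\big) = \sum_i a_i\,N_i^{\theta}\big(\theta^{\mu}(\gamma)\big) = \sum_i \theta^{\mu}(a_i)\,\theta^{\mu}\big(N_i^{\theta}(\gamma)\big) = \theta^{\mu}\!\left(\sum_i a_i N_i^{\theta}(\gamma)\right) = \theta^{\mu}\big(g(\gamma)\big).
\]
If $\gamma$ is a right root, then $g(\gamma)=0$, hence $g(\theta^{\mu}(\gamma)) = \theta^{\mu}(0) = 0$, so $\theta^{\mu}(\gamma)$ is also a right root.

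There is no genuine obstacle here; the argument is a one-line computation once the evaluation formula is available. The only points that deserve care are the two hypotheses that make it work, and which I would state explicitly: first, that $g$ has coefficients in $\mathbb{F}_q$ (so they are $\theta^{\mu}$-invariant — the statement would fail for coefficients outside $\mathbb{F}_q$), and second, that $\theta$ is a genuine extension of $\sigma$ of order a multiple of $\mu$, so that $\theta^{\mu}_{\mid \mathbb{F}_q} = \mathrm{id}$. Both are part of the standing setup, so the proof is essentially the verification that right evaluation is equivariant with respect to the coefficient-fixing automorphism $\theta^{\mu}$.
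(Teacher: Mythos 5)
Your proof is correct. Note that the paper does not prove this lemma itself; it simply cites it as Lemma~5 of \cite{Lobillo2025}, so there is no in-paper argument to compare against. Your reduction to the equivariance identity $g(\theta^{\mu}(\gamma)) = \theta^{\mu}(g(\gamma))$ — obtained from the Lam--Leroy evaluation formula, the multiplicativity of the field automorphism $\theta^{\mu}$ on the norms $N_i^{\theta}$, and the fact that $\theta^{\mu}$ restricts to $\sigma^{\mu}=\mathrm{id}$ on $\mathbb{F}_q$ — is exactly the standard argument for this fact, and you correctly isolate the two hypotheses (coefficients in $\mathbb{F}_q$, and $\theta$ extending $\sigma$) that make it work.
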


\begin{proposition}\label{mu-closed}
Let $f\in \Fq[x;\sigma]$ be of order $e$ such that $e=m \mu,$ and  let $\left\{\alpha, \theta(\alpha), \ldots, \theta^{e-1}(\alpha)\right\}$ be a normal basis of $ \mathbb{F}_{q_0^e}/\mathbb{F}_{q_0}$. Define $\beta:=\theta(\alpha) \alpha^{-1}.$ 
\begin{enumerate}
    \item For each $ i=0,\ldots, \mu-1,$   the polynomial $ M_{\theta^i(\beta)}(x)\in \Fq[x;\sigma] $ and $ \deg(M_{\theta^i(\beta)}(x) )= m.$
   Moreover it is  the minimal polynomial of $ T_{\beta}^{(i)}:= \{ \theta^i(\alpha), \theta^{i+\mu}(\alpha), \ldots, \theta^{i+\mu(m-1)}(\alpha)\}$ over $\Fq.$ 
    \item If $ \theta^{i}(\beta)$ is a right root of $g\in \Fq[x,\sigma]$ then $ M_{\theta^i(\beta)}(x) $ right divides $ g(x)$ in $\Fq[x,\sigma].$
    \end{enumerate}
\end{proposition}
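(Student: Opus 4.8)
The plan is to prove both claims by working in the large skew polynomial ring $\mathbb{F}_{q_0^e}[x;\theta]$, where $M_{\theta^i(\beta)}$ is genuinely the minimal polynomial of the root set
$$S_i := \{\theta^{i+j\mu}(\beta) : 0 \le j \le m-1\},$$
and then descending to $\mathbb{F}_q[x;\sigma]$. For the degree claim in Part (1), recall that by definition $M_{\theta^i(\beta)} = \lclm\{x - \gamma : \gamma \in S_i\} = m_{S_i}$, so $\deg M_{\theta^i(\beta)} = \rank(S_i)$. Since $\{\beta, \theta(\beta), \ldots, \theta^{e-1}(\beta)\}$ is a $P$-independent set (established in the construction preceding Remark \ref{RmqConsta}) and $S_i$ is a subset of it, $S_i$ is itself $P$-independent: if some element were $P$-dependent on the other elements of $S_i$, it would a fortiori be $P$-dependent on the other elements of the larger set, contradicting $P$-independence. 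Hence $\rank(S_i) = m$, and the identification of $M_{\theta^i(\beta)}$ as the minimal polynomial of the $\mu$-closed root set $S_i$ (the $\theta^{\mu}$-orbit of $\theta^i(\beta)$) is immediate from the definition of $m_{S_i}$.

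The crux of Part (1) is showing the coefficients lie in $\mathbb{F}_q$ rather than merely in $\mathbb{F}_{q_0^e}$. I would introduce the map $\Theta$ on $\mathbb{F}_{q_0^e}[x;\theta]$ that applies $\theta^{\mu}$ coefficientwise and fixes $x$. A direct check using the rule $xa = \theta(a)x$ and the fact that $\theta^{\mu}$ commutes with $\theta$ gives $\Theta(xa) = \theta^{\mu+1}(a)x = \Theta(x)\Theta(a)$, so $\Theta$ is a ring automorphism. The key compatibility is $\Theta(h)(\theta^{\mu}(\gamma)) = \theta^{\mu}(h(\gamma))$ for all $h$ and $\gamma$, which follows from the evaluation formula \eqref{e1} together with $N_k^{\theta}(\theta^{\mu}(\gamma)) = \theta^{\mu}(N_k^{\theta}(\gamma))$ (as $\theta^{\mu}$ is a field automorphism commuting with $\theta$). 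Because $\theta$ has order $e = m\mu$, the map $\theta^{\mu}$ permutes $S_i$ cyclically, so $S_i$ is $\theta^{\mu}$-invariant; hence for every $\delta = \theta^{\mu}(\gamma) \in S_i$ we obtain $\Theta(M_{\theta^i(\beta)})(\delta) = \theta^{\mu}(M_{\theta^i(\beta)}(\gamma)) = 0$. Thus $\Theta(M_{\theta^i(\beta)})$ is a monic element of $I(S_i)$ of degree $m = \deg M_{\theta^i(\beta)}$, and by uniqueness of the monic generator of $I(S_i)$ it equals $M_{\theta^i(\beta)}$. Therefore every coefficient is fixed by $\theta^{\mu}$, i.e. lies in $\Fix(\theta^{\mu}) = \mathbb{F}_q$ (here $\theta^{\mu}$ is the $q$-Frobenius of $\mathbb{F}_{q_0^e} = \mathbb{F}_{q^m}$, whose fixed field is $\mathbb{F}_q$), giving $M_{\theta^i(\beta)} \in \mathbb{F}_q[x;\sigma]$.

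For Part (2), suppose $\theta^i(\beta)$ is a right root of $g \in \mathbb{F}_q[x;\sigma]$. By Lemma \ref{Skew_Roots}, $\theta^{\mu}(\gamma)$ is a right root of $g$ whenever $\gamma$ is; iterating, every element of $S_i$ is a right root of $g$, so $g \in I(S_i)$. Since $M_{\theta^i(\beta)} = m_{S_i}$ generates $I(S_i)$ in $\mathbb{F}_{q_0^e}[x;\theta]$, we get $M_{\theta^i(\beta)} \mid_r g$ there. To conclude the division already holds in $\mathbb{F}_q[x;\sigma]$, I would perform the right Euclidean division of $g$ by $M_{\theta^i(\beta)}$ inside $\mathbb{F}_q[x;\sigma]$ (Proposition \ref{PPP.1}) — both polynomials have $\mathbb{F}_q$-coefficients by Part (1) — obtaining $g = u\, M_{\theta^i(\beta)} + r$ with $u, r \in \mathbb{F}_q[x;\sigma]$ and $\deg r < m$. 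This identity also holds in the large ring, where the remainder of right division by $M_{\theta^i(\beta)}$ is uniquely $0$; hence $r = 0$ and $M_{\theta^i(\beta)} \mid_r g$ in $\mathbb{F}_q[x;\sigma]$.

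The main obstacle is precisely the coefficient-descent in Part (1): the factor $M_{\theta^i(\beta)}$ is a priori defined over the large field, and showing it is genuinely defined over $\mathbb{F}_q$ is what makes the statement non-trivial. The $\Theta$-invariance argument is the efficient route; the only delicate points are verifying the evaluation-compatibility identity and that $\Theta$ respects the skew multiplication, after which uniqueness of the monic minimal polynomial closes the gap. The degree count and Part (2) are then comparatively routine, and Part (2) reuses the $\mathbb{F}_q$-rationality of $M_{\theta^i(\beta)}$ to promote the divisibility from the large ring to $\mathbb{F}_q[x;\sigma]$.
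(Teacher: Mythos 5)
Your proof is correct and follows essentially the same route as the paper's: both establish $\mathbb{F}_q$-rationality of $M_{\theta^i(\beta)}$ by showing it is fixed under coefficientwise application of $\theta^{\mu}$ (the paper by permuting the linear factors inside the $\lclm$, you by checking that $\Theta(M_{\theta^i(\beta)})$ annihilates the $\theta^{\mu}$-invariant root set and invoking uniqueness of the monic minimal polynomial), obtain the degree from $P$-independence of the subset, and deduce Part (2) from Lemma \ref{Skew_Roots} together with minimality. Your version merely makes explicit two steps the paper leaves implicit — the compatibility of $\Theta$ with evaluation, and the descent of the right-division from $\mathbb{F}_{q_0^e}[x;\theta]$ to $\mathbb{F}_q[x;\sigma]$ via uniqueness of the Euclidean remainder.
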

\begin{proof}
\begin{enumerate}
    \item  We extend $\theta $ to $\mathbb{F}_{q_0}[x;\theta]$ in the natural way i.e., $ \theta\left( \dsum{i=0}{d-1} a_i x^i\right)= \dsum{i=0}{d-1} \theta(a_i) x^i.$ As the fixed field of $ \mathbb{F}_{q_0^e}$ by $ \theta^{\mu}$ is $ \Fq,$  it suffices  to prove that $ \theta^{\mu}( M_{\theta^i(\beta)}(x))=  M_{\theta^i(\beta)}(x).$ For each $ i=0,\ldots, \mu-1,$  we have 
    $$
    \begin{array}{rl}
        \theta^{\mu}\left( M_{\theta^i(\beta)}(x)\right)
        &= \theta^{\mu}\left( \lclm\left((x-\theta^{i}(\beta)), (x-\theta^{i+\mu}(\beta)), \ldots, (x-\theta^{i+(m-1)\mu}(\beta)) \right)  \right) \\
         & = \lclm\left((x-\theta^{i+\mu}(\beta)), (x-\theta^{i+2 \mu}(\beta)), \ldots, (x-\theta^{i+m\mu}(\beta)) \right)  \\
         &= M_{\theta^i(\beta)}(x), \ \text{since the order of $\theta$ is $e=m\mu.$} \\  
    \end{array}
     $$
     Then $ M_{\theta^i(\beta)}(x) \in \Fq[x;\sigma].$ As the set $ T_{\beta}^{(i)} $ is a subset of the $P$-independent set $ \left\{\alpha, \theta(\alpha), \ldots, \theta^{e-1}(\alpha)\right\}$, $\deg(M_{\theta^i(\beta)}(x))=m.$ Therefore, it is the minimal polynomial of $ T_{\beta}^{(i)}.$ 
     \item Let  $ \theta^{i}(\beta)$ be a right root of $g\in \Fq[x;\sigma]$ then according to  Lemma \ref{Skew_Roots}, $ \theta^{\mu}(\theta^i(\beta))=\theta^{i+\mu}(\beta)$ is also a right root of $g.$ Consequently, for any $j=0,\ldots,m-1,$ $\theta^{j\mu}(\theta^i(\beta))= \theta^{i+j\mu}(\beta) $ is a root of $g.$ Hence $ M_{\theta^i(\beta)}(x) $ right divides $ g(x)$ in $\Fq[x;\sigma],$
     since it is the minimal polynomial of $\{ \theta^i(\beta),  \theta^{\mu}(\theta^i(\beta)),  \ldots, \theta^{(m-1)\mu}(\theta^i(\beta)) \}. $
    \end{enumerate}

\end{proof}
Now we recall the definition of $ \mu$-closed set and its representative  elements. 
\begin{definition}
Let $ T\subseteq \mathbb{Z}_e$ be a non-zero set. 
\begin{enumerate}
    \item We say that   $ T\subset \mathbb{Z}_e$ is \emph{$\mu$-closed} if   $i \in T$ implies $i+\mu \in T$  for every  $i \in T$. 
    \item If $T$ is $\mu$-closed  and $\{i_1,i_2,\ldots, i_l\} \subseteq T$  are integers such that
    $$T=T^{i_1} \cup \cdots \cup T^{i_l} , \text{ and } T^{i_k}\cap  T^{i_j}= \{0\} $$
    then $ S_{T}:= \{i_1,i_2,\ldots, i_l\}$ is  called the \emph{$\mu$-representative set} of $T.$ 
   \end{enumerate} 
\end{definition}
 Following  \cite[Proposition 17]{Alfarano2021}, we can easily prove the following result.

\begin{proposition}
 Let $g \in \mathbb{F}_q[x; \sigma]$ be such that $g ~|_r~ f(x)$. Let $\alpha \in \mathbb{F}_{q_0^e}$ be such that $\left\{ \alpha, \theta(\alpha), \ldots, \theta^{e-1}(\alpha) \right\}$ forms a normal basis of $\mathbb{F}_{q_0^e}/\mathbb{F}_{q_0}$. Set  $\beta := \theta(\alpha) \alpha^{-1}$, and let $ T_\beta(g) := \left\{ i \in \mathbb{Z}_e : x - \theta^i(\beta) \mid_r g \right\}$ the $\beta$-defining set of $g.$
 \begin{enumerate}
     \item If $ g \in \Fq[x;\sigma]$ then $ T_\beta(g) $ is $\mu$-closed.
     \item Conversely, if $ T_{\beta}(g) \subseteq \mathbb Z_e$ is $\mu$-closed, then $ 
    g':= \lclm( \{x-\theta^i(\beta) ~:~ i \in T_{\beta}\}) \in \Fq[x,\sigma].$
 \end{enumerate}
\end{proposition}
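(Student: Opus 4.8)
The plan is to prove the two parts separately. Part (1) will follow almost immediately from Lemma~\ref{Skew_Roots}, while part (2) is a Galois-descent argument relative to the fixed field of $\theta^{\mu}$, mirroring the computation already carried out in Proposition~\ref{mu-closed}. For part (1), I would first translate membership in the defining set into a statement about right roots: by the remainder evaluation (Definition~\ref{def.2}), $i \in T_\beta(g)$ holds precisely when $g(\theta^i(\beta)) = 0$, i.e. when $\theta^i(\beta)$ is a right root of $g$, where $g$ is viewed inside the larger ring $\mathbb{F}_{q_0^e}[x;\theta]$ via the inclusion induced by $\theta|_{\Fq} = \sigma$. Since $g \in \Fq[x;\sigma]$, Lemma~\ref{Skew_Roots} then applies with $\gamma = \theta^i(\beta)$ and gives that $\theta^{\mu}(\theta^i(\beta)) = \theta^{i+\mu}(\beta)$ is again a right root of $g$. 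Hence $i+\mu \in T_\beta(g)$ (indices read modulo $e$), which is exactly the $\mu$-closedness of $T_\beta(g)$.

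For part (2), the key observation is that the coefficient-wise action of $\theta^{\mu}$, say $\Theta : \sum a_i x^i \mapsto \sum \theta^{\mu}(a_i) x^i$, is a ring automorphism of $\mathbb{F}_{q_0^e}[x;\theta]$ fixing $x$. Indeed, using the twist relation $xa = \theta(a)x$ one checks $\Theta(a x^i \cdot b x^j) = \Theta(ax^i)\,\Theta(bx^j)$, since both sides equal $\theta^{\mu}(a)\,\theta^{i+\mu}(b)\,x^{i+j}$; its inverse is the coefficient-wise $\theta^{e-\mu}$. Because $q = q_0^{\mu}$, the map $\theta^{\mu}$ is the $\Fq$-Frobenius of $\mathbb{F}_{q_0^e}$, whose fixed field is exactly $\Fq$; consequently the polynomials fixed by $\Theta$ are precisely those in $\Fq[x;\sigma]$. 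It therefore suffices to prove $\Theta(g') = g'$.

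To do this I would use that a monic- and degree-preserving ring automorphism fixing $x$ commutes with $\lclm$: since $\lclm$ is characterised purely by right divisibility, monicity and least degree (all preserved by $\Theta$ and $\Theta^{-1}$), one gets $\Theta(\lclm(\{h_k\})) = \lclm(\{\Theta(h_k)\})$; this is the same manipulation as in Proposition~\ref{mu-closed}. Applying $\Theta$ to the linear factors of $g'$ yields
$$\Theta(g') = \lclm\bigl(\{x - \theta^{i+\mu}(\beta) : i \in T_\beta\}\bigr).$$
Now, since $T_\beta$ is $\mu$-closed, the shift $i \mapsto i+\mu$ sends $T_\beta$ into itself; being an injection of the finite set $T_\beta \subseteq \mathbb{Z}_e$, it is a bijection of $T_\beta$, so $\{i+\mu : i \in T_\beta\} = T_\beta$. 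The right-hand side therefore equals $g'$, giving $\Theta(g') = g'$ and hence $g' \in \Fq[x;\sigma]$.

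I expect the only delicate point to be the commutation of $\Theta$ with $\lclm$: one must verify carefully that $\Theta$ is genuinely a ring automorphism (the twist-compatibility check above) and that preservation of right divisibility together with monicity forces the least-degree common left multiple to be sent to the least-degree common left multiple. Everything else is bookkeeping with the exponent arithmetic modulo $e$ and the identification $q = q_0^{\mu}$.
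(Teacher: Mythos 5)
Your proof is correct. Part (1) is exactly the paper's argument: translate $i\in T_\beta(g)$ into ``$\theta^i(\beta)$ is a right root of $g$'' and apply Lemma~\ref{Skew_Roots}. For part (2) your route differs slightly in organization, though not in substance. The paper first groups $T_\beta(g)$ into $\mu$-orbits via a representative set, writes $g'$ as the $\lclm$ of the minimal polynomials $M_{\theta^{i_j}(\beta)}(x)$ of those orbits, and then cites Proposition~\ref{mu-closed}, whose own proof is precisely the coefficient-wise $\theta^{\mu}$-invariance computation you perform. You instead apply that Galois-descent argument directly to the full $\lclm$ over $T_\beta$, using that $\mu$-closedness makes the index set invariant under the shift $i\mapsto i+\mu$ and that the coefficient-wise action $\Theta$ commutes with $\lclm$. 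Your version is marginally more self-contained and avoids introducing the representative set, at the cost of having to justify explicitly that $\Theta$ is a ring automorphism and that it commutes with $\lclm$ --- both of which you do correctly (the twist-compatibility check and the least-degree argument via $\Theta^{-1}$ are exactly the points that need care). The paper's version buys reuse of Proposition~\ref{mu-closed} and makes the factorization of $g'$ into the irreducible-over-$\Fq$ pieces $M_{\theta^{i_j}(\beta)}$ visible, which is then exploited in the subsequent construction of codes with designed distance.
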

\begin{proof}
\begin{enumerate}
     \item  Let $  g \in \Fq[x;\sigma] $ and  $j\in T_\beta(g) ,$  then  $g(\theta^j(\beta))=0$, i.e.,  $ \gamma:=\theta^{j}(\beta)$ is a right root of $g.$  By Lemma \ref{Skew_Roots},  $ \theta^{\mu}(\gamma)=\theta^{j+\mu}(\beta) $   is also a right root of $g,$ and so $ j+\mu \in  T_{\beta}(g). $   Hence  $ T_\beta(g) $ is $\mu$-closed.
     \item Suppose that  $ T_{\beta} \subseteq \mathbb Z_e$  is $\mu$-closed.   Let $ \{i_1,i_2,\ldots, i_l\}$  be  the representative set of $T_{\beta}(g),$ and $ M_{\theta^{i_j}(\beta)}(x):= \lclm\{(x-\theta^{i_1}(\beta)), (x-\theta^{i_1+\mu}(\beta)),\ldots, (x-\theta^{i_1+(m-1)\mu}(\beta))$ for $ j=1,\ldots,r.$ It follows that $g'=\lclm\{ M_{\theta^{i_1}(\beta)}(x), M_{\theta^{i_2}(\beta)}(x),\ldots, M_{\theta^{i_r}(\beta)}(x) \}. $ So by Proposition \ref{mu-closed},  we deduce that $g'\in\Fq[x;\sigma].$  
 \end{enumerate}
\end{proof}

We know that if $g \in \mathbb{F}_{q}[x ; \sigma]$ is a right  divisor of $f$,  then it is in fact the minimal polynomial for its $\beta$-defining set $T_{\beta}(g)$. In the following result, we will prove that the size of the $\mu$-representative set of $T_{\beta}(g)$ has implications for when  $ C := \mathbb{F}_q[x;\sigma] g(x) $  is an MRD code. The proof of the following proposition is analogous to \cite[Proposition 26]{Alfarano2021} (the case of skew cyclic codes) and \cite[Proposition 5.2.4]{Cathryn2024} (the case of skew constacyclic codes), so we omit it.

\begin{proposition}\label{SKEW_MRD}
 Let $g \in \mathbb{F}_q[x; \sigma]$ be such that $g ~|_r~ f(x)$. 
 Suppose that the defining set $T_\beta(g)$ satisfies a skew Roos bound as in Theorem \ref{Rank_bound} for some $\delta \geq 2$ and $r \geq 0$. Then the minimum rank distance of the code $C=\mathbb{F}_{q}[x;\sigma] g$ satisfies $$\delta+r \leq d_R(C) \leq \left| S_{T_{\beta}(g)} \right|+1$$ where $  S_{T_{\beta}(g)} \subset \mathbb{Z}_{\mu} $ is the $\mu$-representative set of $T_{\beta}(g).$ 
 In particular, if $\left|  S_{T_{\beta}(g)} \right|=\delta+r-1$, then $ C$ is an MRD code.
\end{proposition}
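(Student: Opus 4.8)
The plan is to establish the two inequalities separately and then read off the MRD conclusion from the equality case of the rank Singleton bound. The left-hand inequality $\delta + r \leq d_R(C)$ is immediate: by hypothesis the defining set $T_\beta(g)$ meets the conditions of Theorem \ref{Rank_bound} for the given $\delta$ and $r$, so that theorem gives directly $d_R(C) \geq d_R(\widehat{C}) \geq \delta + r$.

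For the right-hand inequality I would first express $\dim_{\Fq}(C)$ through $|S_{T_\beta(g)}|$. Writing $S_{T_\beta(g)} = \{i_1, \dots, i_l\}$ with $l = |S_{T_\beta(g)}|$, the generator $g$ is the $\lclm$ of the minimal polynomials $M_{\theta^{i_1}(\beta)}, \dots, M_{\theta^{i_l}(\beta)}$ of the distinct $\mu$-orbits making up $T_\beta(g)$; since these orbits are disjoint subsets of the $P$-independent set $\{\beta, \theta(\beta), \dots, \theta^{e-1}(\beta)\}$ and each has cardinality $m = e/\mu$, Proposition \ref{mu-closed} gives $\deg(M_{\theta^{i_j}(\beta)}) = m$ and hence $\deg(g) = ml = m\,|S_{T_\beta(g)}|$. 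By Theorem \ref{T2.3} this yields $\dim_{\Fq}(C) = n - \deg(g) = n - m\,|S_{T_\beta(g)}|$, so that $\dim_{\mathbb{F}_{q_0}}(C) = \mu\bigl(n - m\,|S_{T_\beta(g)}|\bigr)$, where $\mu = [\Fq : \mathbb{F}_{q_0}]$.

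Next I would substitute this into the rank Singleton (Singleton-like) bound with base field $\mathbb{F}_{q_0}$, extension $\Fq = \mathbb{F}_{q_0^{\mu}}$ of degree $\mu$, and length $n = e = m\mu$. Since $n = m\mu \geq \mu$ we have $\min\{n,\mu\} = \mu$ and $\max\{n,\mu\} = m\mu$, so the bound reads
\[
\mu\bigl(n - m\,|S_{T_\beta(g)}|\bigr) \;\leq\; m\mu\bigl(\mu - d_R(C) + 1\bigr).
\]
Dividing by $m\mu$ and simplifying gives $d_R(C) \leq |S_{T_\beta(g)}| + 1$, which together with the lower bound proves the displayed chain of inequalities. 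Finally, if $|S_{T_\beta(g)}| = \delta + r - 1$, the lower and upper bounds coincide at $\delta + r$, forcing $d_R(C) = |S_{T_\beta(g)}| + 1$; this is exactly equality in the rank Singleton bound, i.e.\ $C$ is an MRD code.

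The routine part is the dimension bookkeeping. The step demanding the most care is the clean identification $\deg(g) = m\,|S_{T_\beta(g)}|$ --- that is, that $g$ is the minimal (W-)polynomial of its $P$-independent defining set and thus has degree equal to the number of its defining zeros, grouped into $\mu$-orbits of common size $m$ --- together with the correct matching of parameters when invoking the rank Singleton bound: it is precisely the relation $n = e = m\mu$ (the length equalling the right exponent) that makes the bound collapse to the clean value $|S_{T_\beta(g)}| + 1$, rather than the weaker estimate $1 + e\,|S_{T_\beta(g)}|/n$ that one obtains when $n < e$.
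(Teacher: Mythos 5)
Your overall strategy is exactly the one the paper intends: the authors omit the proof and defer to \cite[Proposition 26]{Alfarano2021} and \cite[Proposition 5.2.4]{Cathryn2024}, and those arguments are precisely what you reproduce --- the lower bound read off from Theorem~\ref{Rank_bound}, the dimension count $\dim_{\Fq}(C)=n-\deg(g)=n-m\,|S_{T_\beta(g)}|$ via Proposition~\ref{mu-closed} and Theorem~\ref{T2.3}, and the rank Singleton bound over $\mathbb{F}_{q_0}$ to get the upper bound and the MRD criterion. The bookkeeping you do in that regime is correct.

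The genuine gap is the line ``Since $n=m\mu\geq\mu$'': you have silently imposed $n=e$, i.e.\ that the length of the code equals the right exponent of $f$. Nothing in the hypotheses of the proposition grants this. In this paper the decomposition into linear factors takes place inside $x^e-1$, the $\mu$-orbits live in $\mathbb{Z}_e$ and have size $m=e/\mu$, while the code has length $n=\deg(f)$ with $e\geq n$ possibly strict (the paper's own example in this section has $n=10$ and $e=12$). When $e>n$ your Singleton computation gives only $d_R(C)\leq 1+e\,|S_{T_\beta(g)}|/n$, which is strictly weaker than the claimed $|S_{T_\beta(g)}|+1$, and the MRD conclusion likewise requires equality in the Singleton bound, which again forces $e=n$. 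You flag this yourself in your closing remark, but a remark is not a fix: as written, the proof only establishes the proposition under the unstated hypothesis $n=e$. In your defense, the defect appears to sit in the statement rather than in your argument --- in the paper's Example one has $|S_{T_\beta(g)}|=2$ while $\delta+r=4$, so the claimed chain $\delta+r\leq d_R(C)\leq |S_{T_\beta(g)}|+1$ cannot hold there --- but a complete write-up would either add the hypothesis $n=e$ (the situation of the skew cyclic and constacyclic prototypes being generalized) or replace the upper bound by the one your computation actually yields.
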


In the following remark, we discuss how to construct an MRD code from the class of skew polycyclic codes using Proposition \ref{SKEW_MRD}, and present some existing results in the cases of skew cyclic and constacyclic codes.

Keeping the same notations as in Theorem \ref{Rank_bound}, we define
$T_{\beta}(f) := \{ i \in \mathbb{Z}_e \mid (x - \theta^i(\beta)) \mid_r f(x) \}, $ i.e., the set of exponents corresponding to the roots of $f$ in $\mathbb{F}_{q_0^e}[x;\theta]$.
According to Proposition \ref{SKEW_MRD}, we observe that the construction of MRD codes corresponds to finding a $\beta$-defining set $T \subset \mathbb{Z}_e \cap T_{\beta}(f)$ such that
$$|S_{\overline{T}}| = \delta + r - 1. $$ Here, $\overline{T} \subset \mathbb{Z}_e$ is the $\mu$-closed set generated by $T$, and $S_{\overline{T}} \subset \mathbb{Z}_{\mu}$ is its $\mu$-representative set. It follows that $$ g:= \lclm_{ \{ i\in S_{\overline{T}} \}}\left( M_{\theta^i(\beta)}(x)\right)  \in \Fq[x, \sigma] $$
is a right divisor of $f. $ Hence it is a generator of  an $[n, n-m(\delta + r - 1), \delta + r - 1]_q $ MRD code. 
\begin{remark}
In \cite[Theorem 5.2.7]{Cathryn2024}, in the case of constacyclic codes, it was proved that if the above equality holds, that is, $ S_{\overline{T}} = \delta + r - 1, $
then $S_{\overline{T}}$ must be an arithmetic progression with common difference $b$, for any value of $\mu.$ This means that
$$ S_{\overline{T}} =\{ a_0+ k b\ ~\text{with }~ a_0 \in \mathbb{Z}_{\mu} ~\text{ and }~ 0\leq k\leq \mu-1 \}.$$
This question was also addressed in \cite[Proposition 30]{Alfarano2021}, in the context of skew cyclic codes, when $\mu$ is a prime number.
\end{remark}

We finish this section with an example of a skew polycyclic code and its designed as well as actual minimum Hamming and rank distance.
\begin{example} 
Let $ \mathbb{F}_{2^6}= \mathbb{F}_{2}(w), $ with $ w^6 + w^4 + w^3 +  w+ 1=0 $ and  $f(x)=x^{10} + w^{40}x^9 + w^{39}x^8 + w^{12} x^6 + w^{46}x^5 + w^{42}x^4 + w^{60} x^2 + w^7x +
    w^{54}\in \mathbb{F}_{2^6}[x;\sigma]$ with $\sigma$ the Frobenius automorphism of $ \mathbb{F}_{2^6}$ then its order is $\mu=6.$ Using Magma and SageMath we found that the left exponent  of $f$ is $e:= 12.$  As $e$ is a multiple of $\mu$, the Frobenius automorphism $\theta$ of $\mathbb{F}_{2^{12}}$ is an extension of degree $ m=2 $ of $\sigma.$ Let $\gamma$ be a generator of $ \mathbb{F}_{2^{12}}$ such that $ \gamma^12 + \gamma^7 + \gamma^6 + \gamma^5 + \gamma^3 + \gamma + 1, $   and $ \alpha=\gamma^5 $ be a normal element of  $\mathbb{F}_{2^{12}}.$  Set $ \beta:=\theta(\alpha)\alpha^{-1},$ then the skew polynomial  $ x^{ 12}-1$ can be decomposed in $ \mathbb{F}_{2^{12}}[x;\sigma]$ as 
    $$  x^{ 84}-1 = \lclm \left( x-\beta, x-\theta(\beta), \ldots, x-\theta^{11}(\beta)\right)= \lclm \left( M_{\beta}(x), M_{\theta(\beta)}(x), \ldots, M_{\theta^5(\beta)}(x)\right)  $$
with
    $$\begin{array}{l}
      M_{\beta}(x)=  \lclm ( x-\beta, x-\theta^6(\beta))= x^2 + w^{38}x + w^{58} \\    
      M_{\theta(\beta)}(x)=  \lclm ( x-\theta(\beta), x-\theta^7(\beta))= x^2 + w^{13} x + w^{53}  \\
      
      M_{\theta^2(\beta)}(x)=  \lclm ( x-\theta^2(\beta), x-\theta^8(\beta))= x^2 + w^{26} x + w^{43}\\
      M_{\theta^3(\beta)}(x)=  \lclm ( x-\theta^3(\beta), x-\theta^9(\beta))= x^2 + w^{52}x + w^{23} \\ M_{\theta^4(\beta)}(x)=  \lclm ( x-\theta^4(\beta), x-\theta^10(\beta))= x^2 + w^{41} x + w^{46}\\
      M_{\theta^5(\beta)}(x)=  \lclm ( x-\theta^5(\beta), x-\theta^11(\beta))= x^2 + w^{19}x + w^{29}. 
    \end{array}
    $$
  Let now $g= x^4 + w^{52}x^3 + w^{46}x^2 + w^{23}x + w^{33 }$ be a right divisor of $f(x).$ For  $ a=2,\ b=1,\ \delta=3,\  r=1 ,$ we find that the $\beta$-defining set of $g$ is   $T_{\beta}(g)=\{ 2,3,8,9 \}$ and so $g= \lclm(x-\theta^2(\beta), x-\theta^3(\beta), x-\theta^8(\beta), x-\theta^9(\beta)) . $ Let $C$ be the skew polycyclic code induced by $f$ and generated by $g.$
  According to Theorem \ref{thm:Roos}, $C$ has minimum Hamming distance $d_H(C)\geq 3+1=4, $ and by the aid of magma we found that the exact Hamming  distance is $d_H(C)=4,$ and hence that $C$ is an $[10,6,4]_{2^6}$ code.
  
  Similarly, according to Theorem \ref{Rank_bound},  $C$  has rank distance (over $\mathbb{F}_2$) $d_R(C)\geq 3+1=4, $ and with the aid of Magma we found that the exact rank distance $d_R(C)=4,$ since it is a codeword $c=(0, 0, 1, 0, 0, 0, w^{37}, w^{57}, 0,    w^7 )$ of rank weight $4.$ Therefore $C$ is an $[10,6,4]_{2^6}$ rank-metric code.

\end{example}

\section{Equivalence  of  skew  polycyclic codes over $\Fq$}

Generalizing the usual equivalence between codes, we consider equivalence maps between two ambient spaces of skew polycyclic codes which preserve the algebraic structure of the code. This notion was first introduced in \cite{Chen2012}, to study the equivalence between two classes of classical constacyclic  codes over $\Fq.$ Recently it was extended to the case of skew constacyclic codes \cite{Boulanouar2021,Ouazzou2025}, \cite[Section 6]{Lobillo2025},  and to the case of polycyclic codes \cite{Equiv2025, Chibloun2025polycyclic}.  

\begin{definition}\label{FqIso}  
Let $f_1$ and $f_2$ be two skew polynomials from $\mathbb{F}_q[x;\sigma]$. We say that $f_1$  is \emph{Hamming} (resp. \emph{rank}) \emph{equivalent} to $f_2$, and we denote this by $ f_1 \cong_{(n,\sigma)-\text{Hamming}} f_2, $ (resp. $\cong_{(n,\sigma)-\text{Rank}}$)  if there is an $\mathbb{F}_q$-vector space isomorphism 
$$\varphi ~:~  \mathbb{F}_q[x;\sigma] /\langle f_2(x)\rangle   \quad \longrightarrow  \quad  \mathbb{F}_q[x;\sigma] /\langle 
f_1(x)\rangle,$$
such that, for all $ f(x), g(x) \in \Fq[x;\sigma] /\langle f_2(x) \rangle,$ we have 
\begin{enumerate}  
    \item $\varphi(g(x)f(x))= \varphi(g(x)) \varphi(f(x)) $.  
    \item $d_H(\varphi(f(x)), \varphi(g(x)))=d_H(f(x), g(x))$ \\ (resp. $ d_R(\varphi(f(x)), \varphi(g(x)))=d_R(f(x), g(x))$).
\end{enumerate}  
We will call such an isomorphism $\varphi$ an \emph{$\mathbb{F}_q$-morphism isometry} with respect to the Hamming (resp. rank)   distance.
\end{definition}  

We remark that, in this section,  we consider the rank distance defined over $ \Fq/ \mathbb{F}_{q'}$, where $ \mathbb{F}_{q'} $ is an arbitrary subfield of $ \Fq.$

\begin{proposition}\label{ProofEquiv}
Let $ R_{ n}$ be the set of monic skew polynomials of degree $n.$ 
Then the relations $\cong_{(n,\sigma)-Hamming}$ and  $\cong_{(n,\sigma)-Hamming}$  are equivalence relations on $ R_{n}.$ 
\end{proposition}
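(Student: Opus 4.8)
The plan is to verify the three defining properties of an equivalence relation—reflexivity, symmetry, and transitivity—treating both $\cong_{(n,\sigma)-\text{Hamming}}$ and $\cong_{(n,\sigma)-\text{Rank}}$ simultaneously, since the arguments are completely insensitive to whether the isometry condition in Definition~\ref{FqIso} is imposed for $d_H$ or for $d_R$. The guiding observation is that the collection of $\Fq$-morphism isometries is closed under the identity map, inversion, and composition; once this is established, all three axioms follow. Writing $R_{f_i} := \Fq[x;\sigma]/\langle f_i(x)\rangle$ throughout, I would keep careful track of the direction of maps, since the relation $f_1 \cong f_2$ is witnessed by a map $\varphi : R_{f_2} \to R_{f_1}$ (i.e.\ the arrow runs opposite to the subscripts).

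For \emph{reflexivity}, given $f \in R_n$ the identity map $\mathrm{id} : R_f \to R_f$ is an $\Fq$-vector space isomorphism that is trivially multiplicative and trivially distance-preserving, so $f \cong f$. For \emph{symmetry}, suppose $f_1 \cong f_2$ via an $\Fq$-morphism isometry $\varphi : R_{f_2} \to R_{f_1}$. Since $\varphi$ is an $\Fq$-vector space isomorphism, its inverse $\varphi^{-1} : R_{f_1} \to R_{f_2}$ is again such an isomorphism, and I would check that $\varphi^{-1}$ inherits the two defining conditions. Multiplicativity follows by writing arbitrary elements of $R_{f_1}$ as $\varphi(a')$ and $\varphi(b')$ and applying $\varphi(a')\varphi(b') = \varphi(a'b')$, so that $\varphi^{-1}(\varphi(a')\varphi(b')) = a'b' = \varphi^{-1}(\varphi(a'))\varphi^{-1}(\varphi(b'))$; here surjectivity of $\varphi$ is exactly what makes every pair of elements expressible this way. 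Distance preservation follows by substituting $a = \varphi^{-1}(u)$, $b = \varphi^{-1}(v)$ into $d(\varphi(a),\varphi(b)) = d(a,b)$, which gives $d(u,v) = d(\varphi^{-1}(u),\varphi^{-1}(v))$. Hence $f_2 \cong f_1$.

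For \emph{transitivity}, suppose $f_1 \cong f_2$ via $\varphi : R_{f_2} \to R_{f_1}$ and $f_2 \cong f_3$ via $\psi : R_{f_3} \to R_{f_2}$. The composite $\varphi \circ \psi : R_{f_3} \to R_{f_1}$ is a composition of $\Fq$-vector space isomorphisms, hence an isomorphism; it is multiplicative because $(\varphi\circ\psi)(ab) = \varphi(\psi(a)\psi(b)) = \varphi(\psi(a))\varphi(\psi(b)) = (\varphi\circ\psi)(a)\,(\varphi\circ\psi)(b)$; and it preserves the relevant distance because $d((\varphi\circ\psi)(a),(\varphi\circ\psi)(b)) = d(\psi(a),\psi(b)) = d(a,b)$, using the isometry property of $\varphi$ and then of $\psi$. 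Thus $f_1 \cong f_3$, completing the verification. I do not anticipate a genuine obstacle here: the only points requiring care are that $\varphi^{-1}$ is well defined as a morphism isometry (which rests entirely on $\varphi$ being a bijection) and the bookkeeping of source and target so that the composition in the transitivity step is actually defined; everything else is routine.
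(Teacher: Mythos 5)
Your proof is correct and follows essentially the same route as the paper's: reflexivity via the identity map, symmetry via the inverse of a morphism isometry, and transitivity via composition. If anything, you are slightly more thorough than the paper, which verifies only that $\varphi^{-1}$ preserves weights and leaves its multiplicativity implicit, whereas you check it explicitly using surjectivity of $\varphi$.
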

\begin{proof}
    See Appendix.
\end{proof}

As commonly done in the literature, we will focus on $\Fq$-morphism isometries of the form  $\varphi_{\alpha}(f(x)) = f(\alpha x)$, for some $\alpha \in \mathbb{F}_q^*$. Note that any such map preserves the code length $n$, the dimension $k$, and the Hamming distance; however, it only preserves the rank distance if $\alpha\in \mathbb{F}_{q'}$.\footnote{Similarly, any $ \mathbb{F}_{q'}$-vector space isomorphism $\varphi$ preserves the code length $n$, the dimension $k$, and the rank distance, but not necessarily the Hamming metric.}
However, such maps do not necessarily preserve the algebraic structure of the codes, i.e., the skew polycyclical property  of the code. To ensure this, we must verify that
$$
\varphi(f(x)  g(x)) = \varphi(f(x))  \varphi(g(x)).
$$
Verifying this will be the main objective in the following sections. For simplicity we will first focus on trinomial codes, first in the Hamming and then in the rank metric, before we generalize the results to general polycyclic codes in the third subsection.

\subsection{ Hamming equivalence  of  skew  trinomial codes}

Now, we give the following special case of equivalence,  which we refer to as  {Hamming $(n,\ell,\sigma)$-equivalence}, between two classes of $(\ell,\sigma)$-trinomial codes. 

 \begin{definition}\label{Def_Iso}
Let $a_0, a_{\ell}, b_0, b_{\ell}$ be nonzero elements of $\mathbb{F}_q$, and let $\ell$ be an integer such that $0 < \ell < n$. We say that $(a_0, a_{\ell})$ and $(b_0, b_{\ell})$ are { Hamming $(n,\ell,\sigma)$-equivalent} 
in $\mathbb{F}_q^* \times \mathbb{F}_q^*$, denoted by 
$$
(a_0, a_{\ell}) \sim_{(n,\ell,\sigma)} (b_0, b_{\ell}),
$$
if there exists an $\alpha \in \mathbb{F}_q^*$ such that the following map:

  \begin{equation}
  	\begin{array}{cccc}
  	\varphi_{\alpha}:& \mathbb{F}_q[x;\sigma] /\langle x^n-b_{\ell}x^{\ell}-b_0 \rangle  &\longrightarrow & \mathbb{F}_q[x;\sigma] /\langle x^n-a_{\ell}x^{\ell}-a_0 \rangle, \\ 
  	& f(x) & \longmapsto &  f(\alpha x),
  	\end{array}
  	\end{equation} 
is an  $\mathbb{F}_q$-morphism isometry with respect to the Hamming distance,  i.e., 
$$\varphi_{\alpha}( f(x) g(x))= \varphi_{\alpha}( f(x)) \varphi_{\alpha}( g(x)), \quad \text{and } \quad d_H(\varphi_{\alpha}(f(x)), \varphi_{\alpha}(g(x)))=d_H(f(x), g(x)), $$ 
for all $ f(x), g(x) \in \Fq[x;\sigma] /\langle x^n-b_{\ell}x^{\ell}-b_0 \rangle.$
\end{definition}

\begin{remark}\label{Remark.3}
\begin{enumerate}
 \item  Note that the $(n,\ell,\sigma)$-equivalence relation in the above definition generalizes the $(n,\sigma)$-equivalence of constacyclic codes studied in \cite[Section 4]{Ouazzou2025}, which was denoted by $ \lambda \sim_{(n,\sigma)} \mu .$
 \item  If $\sigma=\id$ we recover  the case of $(n,\ell)$-equivalence for  $\ell$-trinomial   codes studied in \cite{Equiv2025}.

\end{enumerate}
\end{remark}

In the following theorem we give essential characterizations of  the Hamming $(n,\ell,\sigma)$-equivalence   between two classes of skew $(\ell,\sigma)$-trinomial codes of length $n$ over $\mathbb{F}_q$.

\begin{theorem}\label{Th.1}
 Let $0<\ell<n$ be an integer,  $(a_0, a_{\ell})$ and $(b_0, b_{\ell})$ be elements of $\mathbb{F}_q^{*} \times \mathbb{F}_q^{*}$,   $\xi$ be  a primitive element of $\Fq$   and  $ \sigma$ be an automorphism of $\Fq$. The following statements are equivalent:

\begin{enumerate}
    \item $(a_0, a_{\ell}) \sim_{(n,\ell,\sigma)} (b_0, b_{\ell}),$ i.e.,  $(a_0, a_{\ell}) $ and  $ (b_0, b_{\ell})$ are Hamming \textcolor{purple}{$(n,\ell,\sigma)$} equivalent.  

\item   There exists $\alpha\in \Fq^*$ such that $$ a_0 N_n^{\sigma}(\alpha)  = b_0   \quad  \text{and} \quad   a_{\ell} N^{\sigma}_{n-\ell}(\sigma^{\ell}(\alpha))=  b_{\ell}. $$
\item  There exists $\alpha\in \Fq^*$ such that $$ (a_0, a_{\ell}) \star
\left( N_n^{\sigma}(\alpha), N_{n-\ell}^{\sigma}(\sigma^{\ell} (\alpha)) \right)= (b_0,b_{\ell}) .$$
    
 \item   $ (a_0,a_{\ell})^{-1}\star (b_0,b_{\ell})\in H_{\ell,\sigma},$ where   $H_{\ell,\sigma}$ is the cyclic subgroup of $\Fq^*\times \Fq^*$ generated by $\left(N^{\sigma}_{n}(\xi), N_{n-\ell}^{\sigma}(\sigma^{\ell} (\xi))   \right).$
\end{enumerate}
The equivalence between (1) and (4) implies that the number of $(n,\ell,\sigma)$-equivalence classes is $$N_{(n,\ell,\sigma)}:= \dfrac{ (q-1)^2}{ \lcm\left(
\frac{q-1}{\gcd([n]_r,q-1)}, \frac{q-1}{ \gcd([n-\ell]_r,q-1)}\right)} .$$
\end{theorem}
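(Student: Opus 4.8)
The plan is to establish the chain $(1)\Rightarrow(2)\Rightarrow(3)\Rightarrow(4)\Rightarrow(1)$ and then read off the counting formula from $(1)\Leftrightarrow(4)$. The backbone is the substitution map. First I would record that $x\mapsto \alpha x$ extends to a ring endomorphism $\Psi$ of $\Fq[x;\sigma]$: the only thing to verify is that $\alpha x$ obeys the defining relation $(\alpha x)c=\sigma(c)(\alpha x)$ for $c\in\Fq$, which holds because $\Fq$ is commutative. Since $\alpha\neq0$, the assignment $x\mapsto \alpha^{-1}x$ provides an inverse, so $\Psi$ is an $\Fq$-linear automorphism, and an induction using $x\alpha=\sigma(\alpha)x$ gives $\Psi(x^i)=(\alpha x)^i=N_i^{\sigma}(\alpha)\,x^i$. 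In particular $\Psi$ preserves degrees and acts on $\sum_i c_ix^i$ by the diagonal scaling $c_i\mapsto c_iN_i^{\sigma}(\alpha)$ with every $N_i^{\sigma}(\alpha)=\alpha^{[i]_r}\neq0$; this is why $\varphi_\alpha$ is automatically a Hamming isometry whenever it is well defined, as it acts on degree-$<n$ representatives by a monomial matrix.

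For $(1)\Leftrightarrow(2)$ I would compute, with $f_1:=x^n-a_\ell x^\ell-a_0$ and $f_2:=x^n-b_\ell x^\ell-b_0$, that $\Psi(f_2)=N_n^{\sigma}(\alpha)x^n-b_\ell N_\ell^{\sigma}(\alpha)x^\ell-b_0$. The map $\varphi_\alpha$ is the map induced by $\Psi$ on the quotients, hence well defined (and then bijective) exactly when $\Psi$ carries the left ideal $\langle f_2\rangle$ onto $\langle f_1\rangle$. As both generators have degree $n$, equality of these principal left ideals forces $\Psi(f_2)=c\,f_1$ for a scalar $c\in\Fq^{*}$, and comparing leading coefficients yields $c=N_n^{\sigma}(\alpha)$. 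Matching the remaining coefficients gives $a_0N_n^{\sigma}(\alpha)=b_0$ and $a_\ell N_n^{\sigma}(\alpha)=b_\ell N_\ell^{\sigma}(\alpha)$; using the norm identity $N_n^{\sigma}(\alpha)/N_\ell^{\sigma}(\alpha)=N_{n-\ell}^{\sigma}(\sigma^\ell(\alpha))$ (by cancelling the common factors $\alpha\cdots\sigma^{\ell-1}(\alpha)$, equivalently $[n]_r-[\ell]_r=p^{r\ell}[n-\ell]_r$) the second identity becomes $a_\ell N_{n-\ell}^{\sigma}(\sigma^\ell(\alpha))=b_\ell$, which is precisely $(2)$. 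Multiplicativity of $\varphi_\alpha$ for the reduction product of the quotient is inherited from $\Psi$: since $\Psi$ is a degree-preserving ring homomorphism with $\Psi(f_2)=N_n^{\sigma}(\alpha)f_1$, one gets $\Psi(ab\bmod_r f_2)=\Psi(a)\Psi(b)\bmod_r f_1$, so the first condition of Definition~\ref{Def_Iso} holds. This settles $(1)\Leftrightarrow(2)$.

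The step $(2)\Leftrightarrow(3)$ is only a rewriting of the two scalar equations as the single Schur-product identity $(a_0,a_\ell)\star(N_n^{\sigma}(\alpha),N_{n-\ell}^{\sigma}(\sigma^\ell(\alpha)))=(b_0,b_\ell)$. For $(3)\Leftrightarrow(4)$ I would rearrange $(3)$ into $(a_0,a_\ell)^{-1}\star(b_0,b_\ell)=(N_n^{\sigma}(\alpha),N_{n-\ell}^{\sigma}(\sigma^\ell(\alpha)))$ and use that each $N_i^{\sigma}$ is multiplicative, so $\alpha\mapsto(N_n^{\sigma}(\alpha),N_{n-\ell}^{\sigma}(\sigma^\ell(\alpha)))$ is a group homomorphism $\Fq^{*}\to\Fq^{*}\times\Fq^{*}$; writing $\alpha=\xi^{k}$ shows its image is exactly the cyclic group $H_{\ell,\sigma}$ generated by $(N_n^{\sigma}(\xi),N_{n-\ell}^{\sigma}(\sigma^\ell(\xi)))$. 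Hence the existence of a suitable $\alpha$ in $(3)$ is equivalent to $(a_0,a_\ell)^{-1}\star(b_0,b_\ell)\in H_{\ell,\sigma}$, which is $(4)$.

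Finally, $(1)\Leftrightarrow(4)$ identifies the $\sim_{(n,\ell,\sigma)}$-classes on $\Fq^{*}\times\Fq^{*}$ with the cosets of $H_{\ell,\sigma}$, so their number is $(q-1)^2/|H_{\ell,\sigma}|$. It remains to compute $|H_{\ell,\sigma}|$, which equals the order of its generating pair, namely the lcm of the orders of the two coordinates. From $N_n^{\sigma}(\xi)=\xi^{[n]_r}$ and $N_{n-\ell}^{\sigma}(\sigma^\ell(\xi))=\xi^{p^{r\ell}[n-\ell]_r}$ these orders are $(q-1)/\gcd([n]_r,q-1)$ and $(q-1)/\gcd(p^{r\ell}[n-\ell]_r,q-1)$; since $\gcd(p,q-1)=1$ the factor $p^{r\ell}$ is irrelevant and the latter simplifies to $(q-1)/\gcd([n-\ell]_r,q-1)$. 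Substituting $|H_{\ell,\sigma}|=\lcm\bigl((q-1)/\gcd([n]_r,q-1),\,(q-1)/\gcd([n-\ell]_r,q-1)\bigr)$ gives the stated value of $N_{(n,\ell,\sigma)}$. I expect the main obstacle to be the $(1)\Leftrightarrow(2)$ step: one must justify that $\varphi_\alpha$ respects the quotient multiplication even when $f_1,f_2$ are non-central, so that the quotients are only (non-associative) Petit algebras — this is exactly where degree-preservation of $\Psi$ and the relation $\Psi(f_2)=N_n^{\sigma}(\alpha)f_1$ do the work; the remaining steps are group-theoretic bookkeeping.
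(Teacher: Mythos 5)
Your proof is correct, and its core differs from the paper's in a way worth noting. The paper proves $(1)\Rightarrow(2)$ by evaluating $\varphi_\alpha(x^n)$ in two ways, and then proves $(4)\Rightarrow(1)$ in two separate stages: a first-isomorphism-theorem argument (computing $\ker\tilde\varphi_\alpha=\langle x^n-b_\ell x^\ell-b_0\rangle$) to get the bijective isometry, followed by a roughly two-page brute-force verification of the morphism property, expanding $f(x)g(x)$ modulo the trinomial into double sums and manipulating norm identities such as $N^\sigma_{i+j}(\alpha)=N^\sigma_i(\alpha)\sigma^i(N^\sigma_j(\alpha))$ coefficient by coefficient. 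You replace all of this with one structural observation: $x\mapsto\alpha x$ extends to a degree-preserving ring automorphism $\Psi$ of $\Fq[x;\sigma]$, and the coefficient conditions in $(2)$ are exactly the statement $\Psi(f_2)=N_n^{\sigma}(\alpha)f_1$, from which bijectivity, the Hamming isometry (diagonal scaling by the nonzero $N_i^\sigma(\alpha)$), and multiplicativity of the induced quotient map all follow at once — the last via the clean remainder argument $ab=qf_2+r$, $\deg r<n$, hence $\Psi(a)\Psi(b)=\Psi(q)N_n^\sigma(\alpha)f_1+\Psi(r)$ with $\Psi(r)$ the reduced representative. This is shorter, more conceptual, and makes transparent why the non-central (Petit-algebra) case causes no difficulty, whereas the paper's computation establishes the same fact by exhaustive expansion. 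The group-theoretic bookkeeping — multiplicativity of $\alpha\mapsto\bigl(N_n^\sigma(\alpha),N_{n-\ell}^\sigma(\sigma^\ell(\alpha))\bigr)$, the image being $H_{\ell,\sigma}$, the class count as the index $(q-1)^2/|H_{\ell,\sigma}|$ with $\gcd(p^{r\ell},q-1)=1$ absorbing the factor $p^{r\ell}$ — coincides with the paper's. One small refinement: in your $(1)\Rightarrow(2)$ step, under the paper's Definition~\ref{Def_Iso} the map $\varphi_\alpha$ acts on degree-$<n$ representatives and is therefore always well defined and bijective; what the hypothesis in $(1)$ actually supplies is multiplicativity, so you should extract $\Psi(f_2)\in\langle f_1\rangle$ from, e.g., $\varphi_\alpha(x^{n-1}\cdot x)=\varphi_\alpha(x^{n-1})\varphi_\alpha(x)$ rather than from well-definedness; this yields the same comparison of $\Psi(f_2)$ with $N_n^\sigma(\alpha)f_1$ and leaves the rest of your argument unchanged.
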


\begin{proof}\;
\begin{itemize}
    \item[(1) $\Rightarrow$ (2):]
   Suppose that $(a_0, a_{\ell}) \sim_{(n,\ell,\sigma)} (b_0, b_{\ell})$. Then by Definition \ref{Def_Iso}, there exists $\alpha \in \mathbb{F}_q^*$ such that the map
    $$
    \varphi_{\alpha} : \Fq[x;\sigma] /\langle x^n - b_{\ell} x^{\ell} - b_0 \rangle \to \Fq[x;\sigma] /\langle x^n - a_{\ell} x^{\ell} - a_0 \rangle, \quad f(x) \ \mapsto \ f(\alpha x)
    $$
    is an  $\mathbb{F}_q$-morphism isometry.  It follows that
$$ 
\varphi_{\alpha}(x^i)= (\alpha x)^i= N_i^{\sigma}(\alpha) x^{ i}\ \text{mod}_r (x^n-a_{\ell} x^{\ell}-a_0), \ \forall  \ i=0,1, \ldots, n-1.
$$

As $\varphi_{\alpha}$ is an $\Fq$-morphism and  $ \varphi(x^n-b_{\ell}x^{\ell}-b_0)=0 \ \text{mod}_r (x^n- a_{\ell} x^{\ell}-a_0) ,$ we have
\begin{equation}
\varphi( x^n)=   b_{\ell}  N_{\ell}^{\sigma}(\alpha) x^{\ell} +b_0.
 \end{equation}

On the other hand,
\begin{equation}
\varphi( x^n)=  N_n^{\sigma}(\alpha) x^n= N_n^{\sigma}(\alpha) ( a_{\ell}x^{\ell}+a_0) = N_n^{\sigma}(\alpha) a_{\ell} x^{\ell}+ N_n^{\sigma}(\alpha) a_0.
\end{equation}

Comparing term by term, we obtain  that $$ a_0 N_n^{\sigma}(\alpha)  = b_0, \quad    \text{and} \quad   a_{\ell} N_n^{\sigma}(\alpha) =  N_{\ell}^{\sigma}(\alpha) b_{\ell}. $$ It follows that
$$ a_0 N_n^{\sigma}(\alpha)  = b_0   \quad  \text{and} \quad   a_{\ell} N_n^{\sigma}(\alpha)  N_{\ell}^{\sigma}(\alpha^{-1})=  b_{\ell}. $$
 On the other hand we have 
\begin{equation}\label{Eq_formulae}
\begin{array}{rl}
    N_n^{\sigma}(\alpha) N_{\ell}^{\sigma}(\alpha^{-1})  & =  N_{\ell+(n-\ell)}^{\sigma}(\alpha) N_{\ell}^{\sigma}(\alpha^{-1})\\
     &=   \sigma^{\ell}(N_{n-\ell}^{\sigma}(\alpha)) N^{\sigma}_{\ell}(\alpha) N_{\ell}^{\sigma}(\alpha^{-1})  \\
     &=  \sigma^{\ell}(N^{\sigma}_{n-\ell}(\alpha))\\
     & = N^{\sigma}_{n-\ell}(\sigma^{\ell}(\alpha))
\end{array}
\end{equation}
Hence we obtain that 
$$ a_0 N_n^{\sigma}(\alpha)  = b_0   \quad  \text{and} \quad   a_{\ell} N^{\sigma}_{n-\ell}(\sigma^{\ell}(\alpha))=  b_{\ell}. $$

 \item[(2) $\Rightarrow$ (3):] Immediate.
 \item[(3) $\Rightarrow$ (4):] Suppose that

$$  (b_0, b_{\ell} )= \left( N_{n}^{\sigma}(\alpha) a_0, N_{n-\ell}^{\sigma}(\sigma^{\ell}(\alpha) a_{\ell}\right)=\left(N_{n}^{\sigma}(\alpha), N_{n-\ell}^{\sigma}(\sigma^{\ell}(\alpha))\right)\star (a_0, a_{\ell}).$$ 
It follows that
$$  (a_0,a_{\ell})^{-1}\star (b_0, a_{\ell})=  \left( N_{n}^{\sigma}(\alpha), N_{n-\ell}^{\sigma}( \sigma^{\ell}(\alpha)) \right)= \left( N_{n}^{\sigma}(\xi^j),
N_{n-\ell}^{\sigma}(\sigma^{\ell}(\xi^j))\right)= 
\left(N_{n}^{\sigma}(\xi), N_{n-\ell}^{\sigma}(\sigma^{\ell}(\xi))\right)^j .$$
Then $ (a_0,a_{\ell})^{-1}\star (b_0, a_{\ell}) $ belongs to the cyclic subgroup  $H_{\ell,\sigma}$ generated by $  \left(N_{n}^{\sigma}(\xi), N_{n-\ell}^{\sigma}(\sigma^{\ell}(\xi))\right) $  as a subgroup of $ \Fq^{*}\times \Fq^{*}.$

    \item[(4) $\Rightarrow$ (1):] 
     Suppose that $ (a_0,a_{\ell})^{-1}\star (b_0, b_{\ell}) $ is an element of the cyclic subgroup  $H_{\ell,\sigma}$ generated by 
     $  (N_{n}^{\sigma}(\xi), N_{n-\ell}^{\sigma}(\sigma^{\ell}(\xi)))  $  as a subgroup of $ \Fq^{*}\times \Fq^{*}.$ Then there exists an integer $ h$ such that 
     $$ (a_0,a_{\ell})^{-1}\star (b_0, b_{\ell})= \left(N_{n}^{\sigma}(\xi), N_{n-\ell}^{\sigma}(\sigma^{\ell}(\xi))\right)^h = \left( N_{n}^{\sigma}(\xi^h), N_{n-\ell}^{\sigma}(\sigma^{\ell}(\xi^h))\right).$$
    For $ \alpha=  \xi^{h}, $ we obtain  that $  a_i N_{n-i}^{\sigma}( \sigma^{i}(\alpha))= b_i, $ for any $ i\in \{ 0,\ell\}.$
Now, let us consider the map $ \tilde{\varphi}_{\alpha}$  
  \begin{equation}
        \begin{array}{cccc}
        \tilde{\varphi}_{\alpha} : & \mathbb{F}_q[x;\sigma]   & \longrightarrow & \mathbb{F}_q[x;\sigma] /\langle x^n - a_{\ell} x^{\ell} - a_0 \rangle, \\ 
        & f(x) & \longmapsto & f(\alpha x).
        \end{array}
    \end{equation} 
	$ \tilde{\varphi}_{\alpha}$ is a surjective $\Fq$-vector space homomorphism. Indeed, for  all $0\leq j \leq n-1$, $x^{j} = \tilde{\varphi}_{\alpha}(N_{j}^{\sigma}(\alpha^{-1}) x^{j}).$ 
	Moreover, 
 $$ 
    \begin{array}{rl}
    \tilde{\varphi}_{\alpha} (x^n - b_{\ell} x^{\ell} - b_0) & = N^{\sigma}_n(\alpha) x^n - N^{\sigma}_{\ell}(\alpha) b_{\ell} x^{\ell} - b_0 \\
    & = N^{\sigma}_n(\alpha) x^n - N^{\sigma}_{\ell}(\alpha)  N_{n-\ell}^{\sigma}( \sigma^{\ell}(\alpha)) a_{\ell} x^{\ell} - N^{\sigma}_n(\alpha) a_0 \\
    & = N^{\sigma}_n(\alpha) x^n -  \sigma^{\ell}(N_{n-\ell}^{\sigma}( \alpha))  N^{\sigma}_{\ell}(\alpha)  a_{\ell} x^{\ell} - N^{\sigma}_n(\alpha) a_0 \\
    & = N^{\sigma}_n(\alpha) x^n - N_{\ell+ n-\ell}^{\sigma}( \alpha)  a_{\ell} x^{\ell} - N^{\sigma}_n(\alpha) a_0 \\
     & = N^{\sigma}_n(\alpha) x^n - N^{\sigma}_n(\alpha)  a_{\ell} x^{\ell} - N^{\sigma}_n(\alpha) a_0 \\
    & = N^{\sigma}_n(\alpha) (x^n - a_{\ell} x^{\ell} - a_0) \\
    & = 0 \mod (x^n - a_{\ell} x^{\ell} - a_0).
    \end{array}
    $$ 
	So $\langle x^n - b_{\ell} x^{\ell} - b_0\rangle \subseteq \ker\tilde{\varphi}_{\alpha}$. And for all $f(x)\in \ker\tilde{\varphi}_{\alpha}$,
    $$f(\alpha x)= 0~ (\text{mod } x^{n}- a_{\ell} x^{\ell} - a_0).$$ Therefore, there exists $g(x)\in \Fq[x;\sigma]$ such that $f(\alpha x)=g(x)(x^{n}- a_{\ell} x^{\ell} - a_0)$. Thus, 
 $$
  \begin{array}{rl}
 f(x)&=g(\alpha^{-1}x)( N^{\sigma}_n(\alpha^{-1}) x^{n}- N^{\sigma}_{\ell}(\alpha^{-1}) a_{\ell} x^{\ell} - a_0) \\
 & =N^{\sigma}_n(\alpha^{-1}) g(\alpha^{-1}x)(x^{n}- N^{\sigma}_{n-\ell}(\alpha) a_{\ell} x^{\ell} - N^{\sigma}_n(\alpha) a_0)\\
 &= N^{\sigma}_n(\alpha^{-1}) g(\alpha^{-1}x)( x^n-  b_{\ell} x^{\ell} - b_0), \ \text{since  $a_k N^{\sigma}_{n-k}( \sigma^k(\alpha)) =b_k, \ \forall k \in \{0, \ell\}.$} \\ 
 \end{array}$$
	So $\ker\tilde{\varphi}_{\alpha} \subseteq \langle x^{n}- a_{\ell} x^{\ell} - a_0\rangle $ and hence $$\ker\tilde{\varphi}_{\alpha} = \langle x^{n}-  b_{\ell} x^{\ell} - b_0 \rangle.$$
	Therefore  by the  first isomorphism  theorem, the map 
      \begin{equation}
        \begin{array}{cccc}
        \varphi_{\alpha} : & \mathbb{F}_q[x;\sigma]/\langle x^n - b_{\ell} x^{\ell} - b_0 \rangle  & \longrightarrow & \mathbb{F}_q[x;\sigma] /\langle x^n - a_{\ell} x^{\ell} - a_0 \rangle, \\ 
        & f(x) & \longmapsto & f(\alpha x).
        \end{array}
    \end{equation} 
	is an  isomorphism. As  the weights of $f(x)$ and $f(\alpha x)$ are the same,  $\varphi_{\alpha}$  is an isometry.
    
    We now need  to verify that $\varphi_{\alpha} $ is a $\Fq$-morphism.
    Let $ f(x)=\dsum{i=0}{n-1} f_ix^i$ and $g(x)=  \dsum{i=0}{n-1} g_ix^i$  in $ \Fq[x;\sigma]/\langle x^n- b_{\ell}x^{\ell}-b_0 \rangle.$ On the one hand we have 

           $$
           \begin{array}{rl}
f(x)g(x) &=\dsum{j=0}{n-1}\dsum{i=0}{j} f_i \sigma^i\left(g_{j-i}\right)  x^j+ 
\dsum{j=0}{n-1} \dsum{i=j+1}{n-1} f_i \sigma^i\left(g_{n-i+j}\right)  x^{j+n} \\
 
\end{array}
$$
as $x^{j+n}= \sigma^{j}(b_{\ell}) x^{j+\ell} + \sigma^{j}(b_{0}) x^{j+\ell} \mod \ (x^n-b_{\ell}x^{\ell} -b_0).$ Then
$$
f(x)g(x) = \sum_{j=0}^{n-1} \left( \sum_{i=0}^{j} f_i \sigma^i(g_{j-i}) \right) x^j +
\sum_{j=0}^{n-1} \left( \sum_{i=j+1}^{n-1} f_i \sigma^i(g_{n-i+j}) \sigma^j(b_\ell)x^{j+\ell} + \sum_{i=j+1}^{n-1} f_i \sigma^i(g_{n-i+j}) \sigma^j(b_0)x^j \right).
$$
It follows that
$$
f(x)g(x) = \sum_{j=0}^{n-1} \left( \sum_{i=0}^{j} f_i \sigma^i(g_{j-i}) + \sum_{i=j+1}^{n-1} f_i \sigma^i(g_{n-i+j}) \sigma^j(b_0) \right) x^j +
\sum_{j=0}^{n-1} \left( \sum_{i=j+1}^{n-1} f_i \sigma^i(g_{n-i+j}) \sigma^j(b_\ell) \right) x^{j+\ell}.
$$

Next, as $\varphi_\alpha(x^j) = N_j^\sigma(\alpha)x^j$ and $\varphi_\alpha(x^{j+\ell}) = N_{j+\ell}^\sigma(\alpha)x^{j+\ell},$ we have  
\begin{equation}\label{eq11}
\begin{array}{rl}
\varphi_\alpha(f(x)g(x)) &= \dsum{j=0}{n-1} \left( \dsum{i=0}{j} f_i \sigma^i(g_{j-i})  N_j^\sigma(\alpha) +
 \dsum{i=j+1}{n-1} f_i \sigma^i(g_{n-i+j}) \sigma^j(b_0)  N_j^\sigma(\alpha) \right)x^j \\

&+\dsum{j=0}{n-1} \left( \dsum{i=j+1}{n-1} f_i \sigma^i(g_{n-i+j}) \sigma^j(b_\ell) N_{j+\ell}^\sigma(\alpha) \right) x^{j+\ell}.
\end{array}
\end{equation}

On the other hand, we   compute $\varphi_\alpha(f(x))\varphi_\alpha(g(x)).$
$$
\begin{array}{rl}
 \varphi_\alpha(f(x))\varphi_\alpha(g(x))    &=  \left( \dsum{i=0}{n-1} f_i N_i^\sigma(\alpha)x^i \right) \left( \dsum{j=0}{n-1} b_j N_j^\sigma(\alpha)x^j \right) \\
    & = \dsum{j=0}{n-1} \left( \dsum{i=0}{j} f_i N_i^\sigma(\alpha) \sigma^i(g_{j-i}N_{j-i}^\sigma(\alpha)) \right) x^j \\
    &  +\dsum{j=0}{n-1} \left( \dsum{i=j+1}{n-1} f_i N_i^\sigma(\alpha) \sigma^i(g_{n-i+j}N_{n-i+j}^\sigma(\alpha)) \right)x^{j+n}.
\end{array}
$$
As $x^{j+n} \equiv \sigma^j(a_\ell)x^{j+\ell} + \sigma^j(a_0)x^j \mod (x^n - a_\ell x^\ell - a_0)$,  we have 
\begin{equation}\label{eq12}
\begin{array}{rl}
\varphi_\alpha(f(x))\varphi_\alpha(g(x)) &= \dsum{j=0}{n-1} \left( \dsum{i=0}{j} f_i\sigma^i(g_{j-i})  N_i^\sigma(\alpha) \sigma^i(N_{j-i}^\sigma(\alpha))\right)x^j \\
&+ \dsum{j=0}{n-1} \left( \dsum{i=j+1}{n-1} f_i  \sigma^i(g_{n-i+j})  \sigma^j(a_0) N_i^\sigma(\alpha)\sigma^i(N_{n-i+j}^\sigma(\alpha))  \right)x^j 
\\
& +\dsum{j=0}{n-1} \left( \dsum{i=j+1}{n-1} a_i  \sigma^i(g_{n-i+j}) N_i^\sigma(\alpha) \sigma^i(N_{n-i+j}^\sigma(\alpha))\sigma^j(a_\ell) \right)x^{j+\ell}.\\
\end{array}
\end{equation}

By \cite[Proposition 2.1]{Cherchem2016},  
$
N_{i+j}^\sigma(\alpha) = N_i^\sigma(\alpha)\sigma^i(N_j^\sigma(\alpha)).
$
Therefore, 
$$ N_i^\sigma(\alpha) \sigma^i(N_{j-i}^\sigma(\alpha))= N_j^{\sigma}(\alpha) .$$ 
So, for any $k\in \{ 0, \ell\}$, we have 

$$ 
\begin{array}{rl}
& N_i^\sigma(\alpha)\sigma^i(N_{n-i+j}^\sigma(\alpha))  \sigma^j(a_{k})   \\
 & = N_{n+j}^{\sigma}(\alpha)  \sigma^j(a_{k})  \\
  &=  \sigma^{ j} ( N^{\sigma}_{n}(\alpha)) N^{\sigma}_j(\alpha)  \sigma^{ j}(a_{k}) \\
  &= \sigma^{ j} ( N^{\sigma}_{n}(\alpha))  N^{\sigma}_j(\alpha) \sigma^{ j} (b_{k}) \sigma^{ -j} ( N^{\sigma}_{n-k}(\sigma^{k}(\alpha))), \ \ 
  \text{since } \ 
  a_{k}  N^{\sigma}_{n-k}(\sigma^{k}(\alpha))  = b_{k}\\
  &= \sigma^{ j} ( N^{\sigma}_{n}(\alpha)) N^{\sigma}_{j}(\alpha)   \sigma^{ k-j} ( N^{\sigma}_{n-k}(  \alpha)) \sigma^{ j} (b_{k}) \\
&=  \sigma^{ j} ( N^{\sigma}_{n}(\alpha)) N^{\sigma}_{j}(\alpha)   \sigma^{ k-j} ( N^{\sigma}_{n-k}(  \alpha))     N_{k-j}^{\sigma}(\alpha)  N_{k-j}^{\sigma}(\alpha^{-1}) \sigma^{ j} (b_{k})   \\ 
&=  \sigma^{ j} ( N^{\sigma}_{n}(\alpha)) N^{\sigma}_{j}(\alpha)   N_{n-j}^{\sigma}(\alpha) N_{k-j}^{\sigma}(\alpha^{-1}) \sigma^{ j} (b_{k})   \\ 
&=  \sigma^{ j} ( N^{\sigma}_{n}(\alpha)) N^{\sigma}_{j}(\alpha)  \sigma^{-j}(N_n^{\sigma}(\alpha)) N_{-j}^{\sigma}(\alpha) N_{k-j}^{\sigma}(\alpha^{-1}) \sigma^{ j} (b_{k})   \\ 
&=   N^{\sigma}_{j}(\alpha)  N_{-j}^{\sigma}(\alpha) N^{\sigma}_{-j}(\alpha^{-1}) \sigma^{-j}(N^{\sigma}_{k}(\alpha^{-1})) \sigma^{ j} (b_{k})   \\ 
&=   N^{\sigma}_{j}(\alpha)  \sigma^{-j}(N^{\sigma}_{k}(\alpha^{-1})) \sigma^{ j} (b_{k})   \\ 
 & =N^{\sigma}_{j}(\alpha)  \sigma^{j}(N^{\sigma}_{k}(\alpha)) \sigma^{ j} (b_{k})   \\ 
 & =N^{\sigma}_{k+j}(\alpha)  \sigma^{ j} (b_{k})   
\end{array}
$$
By comparing equations  (\ref{eq11}) and  (\ref{eq12}), we get 
$$
\varphi_\alpha(f(x)g(x)) = \varphi_\alpha(f(x))\varphi_\alpha(g(x)).
$$

Hence we have shown that $\varphi_\alpha$ is an $\mathbb{F}_q$-morphism which preserves the Hamming weight,  so
$(a_0, a_{\ell}) \sim_{(n,\ell,\sigma)} (a_0, a_{\ell}).$ Hence,  (4) $\Rightarrow$ (1).
\end{itemize}

Now, by the equivalence between (1) and (4) we deduce that the number of $(n,\ell,\sigma)$-equivalence classes on $\Fq^*\times \Fq^*$  corresponds to the order of the group $ \left( \Fq^*\times \Fq^* \right)/H_{\ell,\sigma},$ which equals 
 $$ N_{(n,\ell,\sigma)}= \dfrac{ (q-1)^2}{ \lcm \left(\frac{q-1}{ \gcd([n]_r,q-1)}, \frac{q-1}{ \gcd(p^{r\ell}[n-\ell]_r,q-1)}\right)}=
 \dfrac{ (q-1)^2}{ \lcm \left(\frac{q-1}{ \gcd([n]_r,q-1)}, \frac{q-1}{ \gcd([n-\ell]_r,q-1)}\right)}
 .$$ 

\end{proof}

\begin{remark}\label{Remark_TwoSided}
The reader may wonder what happens if we assume that the polynomials
$ x^n - a_{\ell}x^{\ell} - a_0 $ and $ x^n - b_{\ell}x^{\ell} - b_0 $
are elements of the center $ Z(\mathbb{F}_q[x; \sigma]) = \mathbb{F}_q^{\sigma}[x^{\mu}] $,
where $ \mu $ is the order of $ \sigma .$ In this case  $ \Fq[x;\sigma]/\langle  x^n-a_{\ell}x^{\ell}-a_0\rangle $ and $ \Fq[x;\sigma]/\langle  x^n-b_{\ell}x^{\ell}-b_0\rangle $ are $ \Fq^{\sigma}$-algebras (associative), and the definition of the $(n,\sigma)$-equivalence remains the same as in Definition \ref{Def_Iso}.  Moreover, all the statement of the Theorem \ref{Th.1}, remain the same except for the last one on the number of $(n,\ell,\sigma)$-equivalence classes which will be given by
    $$ \dfrac{ (\vert \Fq^{\sigma}\vert -1)^2}{ \lcm \left(\frac{\vert \Fq^{\sigma}\vert-1}{ \gcd([n]_r,\vert \Fq^{\sigma}\vert-1)}, \frac{\vert \Fq^{\sigma}\vert-1}{ \gcd([n-\ell]_r,\vert \Fq^{\sigma}\vert-1)}\right)}= \dfrac{ (q_0-1)^2}{ \lcm \left(\frac{q_0-1}{ \gcd([n]_r,q_0-1)}, \frac{q_0-1}{ \gcd([n-\ell]_r,q_0-1)}\right)}
 .$$ 
\end{remark}

Using the equivalence between the assertions $(1)$ and $(4)$ of Theorem \ref{Th.1} we derive the following characterization regarding the equivalence between the class of skew $(\ell,\sigma)$-trinomial codes  associated with  $x^n - a_{\ell}x^{\ell} - a_0$ and the class  associated with $x^n - x^{\ell} - 1.$
\begin{corollary}  \label{Equiv_1}
Let $\sigma$ be and automorphism of $ \Fq,$ $\ell$ be an integer such that  $0<\ell<n$,  and  $(a_0, a_{\ell})$ be an element of $\mathbb{F}_q^{*} \times \mathbb{F}_q^{*}$. 
Then the following statements are equivalent.
    \begin{enumerate}
        \item  The class of skew  $(\ell,\sigma)$-trinomial codes  associated with  $x^n - a_{\ell}x^{\ell} - a_0$ is $(n,\ell,\sigma)$-equivalent (Hamming equivalence) to the class of   skew $(\ell,\sigma)$-trinomial codes  associated with $x^n - x^{\ell} - 1.$
        \item  There exists $ \alpha \in \Fq^*$ such that $(a_0, a_{\ell})\star\left(N^{\sigma}_{n}(\alpha), N^{\sigma}_{n-\ell}(\sigma^{\ell}(\alpha)) \right)= (1, 1)$. 
        \item  There exists  an $(n,\sigma)$-th root  $\alpha \in \Fq^* $  of $a_0$     such that $   a_0 =  N^{\sigma}_{\ell}(\alpha) a_\ell $.
        \end{enumerate}
\end{corollary}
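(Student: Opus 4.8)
The plan is to recognize that item (1) is nothing but the special case $(b_0,b_\ell)=(1,1)$ of the Hamming $(n,\ell,\sigma)$-equivalence studied in Theorem \ref{Th.1}: the class associated with $x^n-x^{\ell}-1$ corresponds to the pair $(1,1)$, so statement (1) reads $(a_0,a_\ell)\sim_{(n,\ell,\sigma)}(1,1)$. Setting $(b_0,b_\ell)=(1,1)$ in the equivalence between items (1) and (3) of Theorem \ref{Th.1} then gives (1) $\Leftrightarrow$ (2) of the corollary immediately, since item (2) here is precisely the Schur-product condition $(a_0,a_\ell)\star\left(N_n^{\sigma}(\alpha),N_{n-\ell}^{\sigma}(\sigma^{\ell}(\alpha))\right)=(1,1)$. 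Thus no new argument is needed for this half; all the real content lies in proving (2) $\Leftrightarrow$ (3).

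For the remaining equivalence I would first record two elementary facts. By \eqref{Eqsigma} we have $N_i^{\sigma}(\alpha)=\alpha^{[i]_r}$, so each $N_i^{\sigma}$ is a group endomorphism of $\mathbb{F}_q^*$; in particular $N_i^{\sigma}(\alpha^{-1})=\left(N_i^{\sigma}(\alpha)\right)^{-1}$ and $N_i^{\sigma}$ commutes with $\sigma^{\ell}$. Second, I would reuse the identity derived inside the proof of Theorem \ref{Th.1} (the chain (\ref{Eq_formulae})), namely $N_{n-\ell}^{\sigma}(\sigma^{\ell}(\alpha))=N_n^{\sigma}(\alpha)\,\left(N_{\ell}^{\sigma}(\alpha)\right)^{-1}$.

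With these in hand, I would spell out item (2) componentwise as $a_0N_n^{\sigma}(\alpha)=1$ together with $a_\ell N_{n-\ell}^{\sigma}(\sigma^{\ell}(\alpha))=1$, and perform the change of variable $\gamma:=\alpha^{-1}$. The first equation becomes $N_n^{\sigma}(\gamma)=a_0$, i.e.\ $\gamma$ is an $(n,\sigma)$-th root of $a_0$. Substituting the identity above into the second equation and using $N_n^{\sigma}(\alpha)=a_0^{-1}$ turns it into $a_\ell a_0^{-1}=N_{\ell}^{\sigma}(\alpha)=\left(N_{\ell}^{\sigma}(\gamma)\right)^{-1}$, which rearranges to $a_0=N_{\ell}^{\sigma}(\gamma)\,a_\ell$, exactly item (3). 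Each step is reversible, so running the substitution $\alpha=\gamma^{-1}$ backwards recovers item (2) from item (3).

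The only point demanding care is the bookkeeping of inverses: item (2) naturally produces an $\alpha$ with $N_n^{\sigma}(\alpha)=a_0^{-1}$, whereas item (3) is phrased in terms of a root $\gamma$ of $a_0$ itself, so the passage through $\gamma=\alpha^{-1}$ and the consistent use of the multiplicativity of $N_i^{\sigma}$ are essential. Beyond this there is no genuine obstacle; the statement is a direct specialization of Theorem \ref{Th.1} combined with a routine algebraic manipulation of skew-norms.
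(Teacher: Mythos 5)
Your proposal is correct and follows essentially the same route as the paper: both specialize Theorem \ref{Th.1} to the pair $(1,1)$ for the equivalence with item (2), and both obtain item (3) by passing to the inverse $\beta=\alpha^{-1}$ and invoking the identity (\ref{Eq_formulae}) $N_{n-\ell}^{\sigma}(\sigma^{\ell}(\alpha))=N_n^{\sigma}(\alpha)\,N_{\ell}^{\sigma}(\alpha^{-1})$ together with the multiplicativity of $N_i^{\sigma}$. The only (immaterial) difference is organizational: the paper argues the cycle $(1)\Rightarrow(2)\Rightarrow(3)\Rightarrow(1)$, whereas you prove the two biconditionals $(1)\Leftrightarrow(2)$ and $(2)\Leftrightarrow(3)$ directly.
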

\begin{proof}\;
\begin{itemize}
    \item[(1) $\Rightarrow$ (2):] Follows from the equivalence of  (1) and (3) of Theorem \ref{Th.1}.
 \item[(2) $\Rightarrow$ (3):]  
 If $(a_0, a_{\ell})\star\left(N_n^{\sigma}(\alpha), N_{n-\ell}^{\sigma}(\sigma^{\ell}(\alpha))\right) = (1, 1)$, then 
$$ a_0 N_n^{\sigma}(\alpha)=1 \quad \quad 
\text{and} \quad  
 \quad a_\ell N_{n-\ell}^{\sigma}(\sigma^{\ell}(\alpha))=1.
$$
 It follows that
 $$  a_0 = N_n^{\sigma}(\alpha^{-1}) \quad \text{ and } \quad  a_\ell N_{n-\ell}^{\sigma}(\sigma^{\ell}(\alpha^{-1}) ) N_n^{\sigma}(\alpha^{-1}) = a_0.
  $$
 $$ \iff a_0 = N_n^{\sigma}(\alpha^{-1}) \quad \text{ and } \quad  a_{\ell} N_{\ell}^\sigma(\alpha^{-1})= a_0 $$
 
  i.e., $\beta:=\alpha^{-1}$ is an $(n,\sigma)$-th root of $a_0$ and $  a_{\ell} N_{\ell}^\sigma(\beta)= a_0  $.

 \item[(3) $\Rightarrow$ (1):] Suppose that $\alpha $ is an $(n,\sigma)$-th root of $a_0$ such that $ a_0 =  N^{\sigma}_{\ell}(\alpha) a_\ell.$

 $$ N_{n}^{\sigma}(\alpha)  = a_0\ \ \text{ and } \ \   
   a_{\ell}=  N^{\sigma}_{\ell}(\alpha^{-1})a_0= N^{\sigma}_{\ell}(\alpha^{-1}) N_{n}^{\sigma}(\alpha). $$ 
From (\ref{Eq_formulae}), we have $ N^{\sigma}_{\ell}(\alpha^{-1}) N_{n}^{\sigma}(\alpha)=  N_{n-\ell}^{\sigma}(\sigma^{\ell}(\alpha)).$
Then 
$$ N_{n}^{\sigma}(\alpha)  = a_0\ \ \text{ and } \ \   
   a_{\ell}= N_{n-\ell}^{\sigma}(\sigma^{\ell}(\alpha)). $$ 
 It follows that  
$$(a_0,a_{\ell} )\star \left(N_{n}^{\sigma}(\alpha^{-1}), N_{n-\ell}^{\sigma}(\sigma^{\ell}(\alpha^{-1})) \right)= (1,1) .$$
 By the second statement of Theorem \ref{Th.1}, we conclude that  $ (a_0, a_{\ell}) \sim_{(n,\ell,\sigma)}(1,1) ,$ and the result holds.

\end{itemize}
 
 \end{proof}

  In the following result, we describe the associated polynomials of possible skew $(\ell,\sigma)$-trinomial codes based on the $(n,\ell,\sigma)$-equivalence relation defined above in Definition \ref{Def_Iso}.
\begin{theorem} \label{Equiv_2}
Let $n, \ell$ be two integers such that $0<\ell<n,$ let $\xi$ be  a primitive element of $\Fq$, and $\sigma$ be an automorphism of $ \Fq.$  Set  $ 
 d:=\gcd \left(\frac{q-1}{ d_0}, \frac{q-1}{ d_{\ell}}\right)$  and $ d_i:=\gcd([n-i]_r,q-1)$, for $ \ i\in \{0,\ell\}$.  Then the class of skew  $(\ell,\sigma)$-trinomial  codes associated with $x^n-a_{\ell}x^{\ell}-a_0$ is $(n,\ell,\sigma)$-equivalent (Hamming equivalence) to the class of skew  $(\ell,\sigma)$-trinomial  codes associated with  $x^n-\xi^{j} x^{\ell}- \xi^{i+h[n]_r},$ for some $i\in\{0,1,\ldots, d_0-1\}$,  $j\in \{0,1,\ldots, d_{\ell}-1\}$  and $ h\in \{0,\ldots, d-1\}$.  
\end{theorem}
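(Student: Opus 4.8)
The plan is to recast the statement as a purely group-theoretic claim about coset representatives and then lean on the count of equivalence classes already obtained in Theorem~\ref{Th.1}. By the equivalence of assertions (1) and (4) there, two pairs $(a_0,a_\ell)$ and $(b_0,b_\ell)$ in $\Fq^*\times\Fq^*$ are Hamming $(n,\ell,\sigma)$-equivalent precisely when they lie in the same coset of the cyclic subgroup $H_{\ell,\sigma}=\langle(N_n^{\sigma}(\xi),N_{n-\ell}^{\sigma}(\sigma^{\ell}(\xi)))\rangle$ of $\Fq^*\times\Fq^*$. First I would fix the primitive element $\xi$ and pass to additive coordinates via the discrete logarithm $\xi^a\mapsto a$, identifying $\Fq^*\times\Fq^*$ with $\mathbb{Z}_{q-1}\times\mathbb{Z}_{q-1}$. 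Using $N_i^{\sigma}(\xi)=\xi^{[i]_r}$ and $\sigma^{\ell}(\xi)=\xi^{p^{r\ell}}$, the generator of $H_{\ell,\sigma}$ becomes $(u,v)$ with $u:=[n]_r$ and $v:=p^{r\ell}[n-\ell]_r$ taken modulo $q-1$. Since $q=p^s$ forces $\gcd(p,q-1)=1$, one has $\gcd(u,q-1)=d_0$ and $\gcd(v,q-1)=\gcd([n-\ell]_r,q-1)=d_\ell$.

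In these coordinates the target trinomial $x^n-\xi^{j}x^{\ell}-\xi^{i+h[n]_r}$ corresponds to the pair $(i+hu,\,j)$, so the theorem amounts to showing that the family of pairs
$$\left\{(i+hu,\,j)\ :\ 0\le i<d_0,\ 0\le h<d,\ 0\le j<d_\ell\right\}$$
meets every coset of $H_{\ell,\sigma}$. The count is immediate: there are $d_0\,d\,d_\ell$ such tuples $(i,h,j)$, and the $\lcm$--$\gcd$ identity gives $d_0\,d\,d_\ell=\tfrac{(q-1)^2}{\lcm((q-1)/d_0,(q-1)/d_\ell)}=N_{(n,\ell,\sigma)}$, which by Theorem~\ref{Th.1} is exactly the number of cosets $\lvert(\Fq^*\times\Fq^*)/H_{\ell,\sigma}\rvert$. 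Hence it suffices to prove that the corresponding pairs lie in pairwise-distinct cosets; a family of tuples whose cardinality equals the number of cosets and whose images are pairwise non-congruent modulo $H_{\ell,\sigma}$ must represent every coset exactly once, and every $(a_0,a_\ell)$ then sits in the coset of one of them.

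The delicate part, and the step I expect to be the main obstacle, is this pairwise-distinctness check, which is a sequence of descent arguments in $\mathbb{Z}_{q-1}$. Suppose $(i+hu,j)\equiv(i'+h'u,j')\pmod{H_{\ell,\sigma}}$, i.e.\ there is $k\in\mathbb{Z}$ with $j-j'\equiv kv$ and $i-i'+(h-h')u\equiv ku$ modulo $q-1$. From the first relation, since $\langle v\rangle=\langle d_\ell\rangle$ in $\mathbb{Z}_{q-1}$ and $\lvert j-j'\rvert<d_\ell$, I obtain $j=j'$ and $k\equiv 0\pmod{(q-1)/d_\ell}$, say $k=\tfrac{q-1}{d_\ell}k'$. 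Feeding this into the second relation shows that $i-i'$ is congruent to a multiple of $u$ modulo $q-1$, hence divisible by $d_0=\gcd(u,q-1)$; with $\lvert i-i'\rvert<d_0$ this forces $i=i'$. The residual relation $(h-h')u\equiv\tfrac{q-1}{d_\ell}k'u\pmod{q-1}$ together with $\ord(u)=(q-1)/d_0$ gives $h-h'\equiv\tfrac{q-1}{d_\ell}k'\pmod{(q-1)/d_0}$; reducing modulo $d=\gcd((q-1)/d_0,(q-1)/d_\ell)$, which divides both the modulus $(q-1)/d_0$ and the right-hand factor $(q-1)/d_\ell$, yields $d\mid(h-h')$, and $\lvert h-h'\rvert<d$ gives $h=h'$. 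This establishes distinctness and completes the proof; the only points needing care are the repeated implications ``$xu\equiv0$ forces $d_0\mid x$'' and the final descent modulo $d$, while the remaining manipulations are routine bookkeeping in $\mathbb{Z}_{q-1}$.
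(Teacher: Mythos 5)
Your proposal is correct, and it follows the same overall strategy as the paper: reduce the statement via the equivalence of (1) and (4) in Theorem~\ref{Th.1} to exhibiting a complete set of coset representatives for $H_{\ell,\sigma}$ in $\Fq^*\times\Fq^*$, and match the count $d_0\,d\,d_\ell$ against the number of classes $N_{(n,\ell,\sigma)}$. The difference is in how the partition is justified. The paper splits into the cases $d=1$ and $d\neq 1$: when $d=1$ it observes that $H_{\ell,\sigma}$ coincides with the direct product $\langle N_n^{\sigma}(\xi)\rangle\times\langle N_{n-\ell}^{\sigma}(\sigma^{\ell}(\xi))\rangle$ (the two orders being coprime), from which the partition by the pairs $(\xi^i,\xi^j)$ is immediate; when $d\neq 1$ it asserts the partition by the pairs $(\xi^{i+h[n]_r},\xi^j)$ ``similarly to the first case'' without further detail. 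Your additive-coordinate descent argument supplies exactly the pairwise-distinctness verification that the paper's second case leaves implicit, and it handles both cases uniformly without needing the direct-product observation. The individual steps check out: $\gcd(p^{r\ell}[n-\ell]_r,q-1)=d_\ell$ since $\gcd(p,q-1)=1$, the identity $d_0\,d\,d_\ell=(q-1)^2/\lcm((q-1)/d_0,(q-1)/d_\ell)$ is the standard $\lcm$--$\gcd$ relation, and the three successive reductions ($d_\ell\mid j-j'$, then $d_0\mid i-i'$, then $d\mid h-h'$) are sound. So your write-up is, if anything, more complete than the paper's at the one point where care is genuinely needed.
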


\begin{proof}
\begin{enumerate}
    \item  If $ d = 1 $, then the cyclic group $ H_{\ell,\sigma} $ generated by $ \left( N_n(\xi), N_{n-\ell}(\sigma^{\ell}(\xi))\right) $ is isomorphic to the group $ \langle  N_n(\xi) \rangle \times \langle N_{n-\ell}(\sigma^{\ell}(\xi)) \rangle $ and has order $ \frac{(q-1)^2}{d_0 d_{\ell}} $. By Theorem \ref{Th.1}, the number of $ (n, \ell) $-equivalence classes is $ d_0 d_{\ell} $. Therefore, we can partition $ \Fq^* \times \Fq^* $ as follows

$$
\Fq^* \times \Fq^* =  \bigcup_{i=0}^{d_0-1} \bigcup_{j=0}^{d_{\ell}-1} (\xi^i, \xi^j) H_{\ell,\sigma}.
$$
Then any pair $(a_0,a_{\ell})$ is $(n,\ell,\sigma)$-equivalent to one of the pairs   $(\xi^i, \xi^j), $ for  $i=0,1,\ldots, d_0-1$, and  $j=0,1,\ldots, d_{\ell}-1$.
 \item 
 If $d\neq 1, $ the number of  $(n,\ell,\sigma)$-equivalence classes is $ d d_0 d_{\ell} ,$ and so we partition $ \Fq^* \times \Fq^* $ as follows:
$$  \Fq^{*}\times \Fq^{*} = \bigcup_{h=0}^{d-1}  \bigcup_{i=0}^{d_0-1} \bigcup_{j=0}^{d_{\ell}-1} (\xi^{i+h[n]_r} ,\xi^{j}) H_{\ell,\sigma}, $$
which implies the second statement, similarly to the first case.
\end{enumerate}

\end{proof}

In the following proposition, we deduce additional characterizations of the $(n,\ell,\sigma)$-equivalence between skew $(\ell,\sigma)$-trinomial codes, linking this $(n,\ell,\sigma)$-equivalence to that of skew  constacyclic codes \cite{Ouazzou2025,Boulanouar2021,Lobillo2025}.

\begin{proposition}\label{Eq_Consta}
Let $ (a_0, a_{\ell}) $ and $ (b_0, b_{\ell}) $ be elements of $ \Fq^{*} \times \Fq^{*} $ such that $ (a_0, a_{\ell}) \sim_{(n, \ell,\sigma)} (b_0, b_{\ell}) $. Then:
\begin{enumerate}
    \item For each $ i \in \{0, \ell\} $, $ a_i^{-1} b_i \in \langle N_{n-i}^{\sigma}(\sigma^i(\xi)) \rangle $, where $ \xi $ is a primitive element of $ \Fq^* $.
    \item For each $ i \in \{0, \ell\} $, $ (a_i^{-1} b_i)^{d_i} = 1 $, where $ d_i := \frac{q-1}{\gcd([n-i]_r, q-1)} $.
    \item For each $ i \in \{0, \ell\} $, $ a_i \sim_{(n-i,\sigma)} b_i $.
    \item For each $ i \in \{0, \ell\} $ the class of skew  $(a_i,\sigma)$-constacyclic codes of length $n-i$ is Hamming equivalent to the class of skew   $(b_i,\sigma)$-constacyclic codes of length $n-i$  over $\Fq.$
\end{enumerate}
\end{proposition}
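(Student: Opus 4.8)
The plan is to reduce everything to the single norm identity furnished by Theorem \ref{Th.1}(2) and then read off the four assertions as algebraic consequences of the explicit formula $N_m^{\sigma}(\beta) = \beta^{[m]_r}$ from (\ref{Eqsigma}). Since $(a_0, a_{\ell}) \sim_{(n,\ell,\sigma)} (b_0, b_{\ell})$, Theorem \ref{Th.1} provides an $\alpha \in \Fq^*$ with $a_0 N_n^{\sigma}(\alpha) = b_0$ and $a_{\ell} N_{n-\ell}^{\sigma}(\sigma^{\ell}(\alpha)) = b_{\ell}$, which I would record uniformly as $a_i N_{n-i}^{\sigma}(\sigma^i(\alpha)) = b_i$ for $i \in \{0, \ell\}$, using $N_n^{\sigma}(\alpha) = N_{n-0}^{\sigma}(\sigma^0(\alpha))$. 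Equivalently, $a_i^{-1} b_i = N_{n-i}^{\sigma}(\sigma^i(\alpha))$, and this single equation will drive all four parts.

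For (1), write $\alpha = \xi^h$ for some integer $h$. As $\sigma^i$ is a field automorphism it is multiplicative, so $\sigma^i(\xi^h) = (\sigma^i(\xi))^h$, and the formula $N_m^{\sigma}(\beta) = \beta^{[m]_r}$ then yields $N_{n-i}^{\sigma}(\sigma^i(\alpha)) = N_{n-i}^{\sigma}(\sigma^i(\xi))^h$. Hence $a_i^{-1} b_i = N_{n-i}^{\sigma}(\sigma^i(\xi))^h \in \langle N_{n-i}^{\sigma}(\sigma^i(\xi)) \rangle$, as claimed.

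For (2), I would compute the order of $N_{n-i}^{\sigma}(\sigma^i(\xi))$ in $\Fq^*$. Since $\sigma(a) = a^{p^r}$, one has $\sigma^i(\xi) = \xi^{p^{ri}}$ and therefore $N_{n-i}^{\sigma}(\sigma^i(\xi)) = \xi^{p^{ri}\,[n-i]_r}$. The order of a power $\xi^k$ is $\frac{q-1}{\gcd(k,\,q-1)}$; and because $q = p^s$ forces $\gcd(p,\,q-1) = 1$, the factor $p^{ri}$ drops out of the gcd, giving order exactly $\frac{q-1}{\gcd([n-i]_r,\,q-1)} = d_i$. As $a_i^{-1} b_i$ lies in the cyclic group generated by this element, $(a_i^{-1} b_i)^{d_i} = 1$ follows immediately.

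Finally, (3) and (4) are essentially a change of viewpoint. Setting $\gamma := \sigma^i(\alpha) \in \Fq^*$, the identity $a_i N_{n-i}^{\sigma}(\gamma) = b_i$ is precisely the defining condition for the $(n-i,\sigma)$-equivalence of the scalars $a_i$ and $b_i$ in the constacyclic setting of \cite{Ouazzou2025} (cf.\ Remark \ref{Remark.3}), so $a_i \sim_{(n-i,\sigma)} b_i$; and (4) is then the direct coding-theoretic restatement of (3), since by that reference $a_i \sim_{(n-i,\sigma)} b_i$ holds if and only if the class of skew $(a_i,\sigma)$-constacyclic codes of length $n-i$ is Hamming equivalent to the class of skew $(b_i,\sigma)$-constacyclic codes of the same length. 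The only genuinely non-routine step is the order computation in (2); everything else is a direct substitution into the norm formula, so I expect no real obstacle beyond keeping track of the factor $p^{ri}$ and verifying that it is coprime to $q-1$.
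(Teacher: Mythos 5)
Your proposal is correct and follows essentially the same route as the paper: both extract the identity $a_i N_{n-i}^{\sigma}(\sigma^i(\alpha))=b_i$ from Theorem \ref{Th.1}(2), deduce (1) by writing $\alpha=\xi^h$ and using multiplicativity of the $\sigma$-norm, compute the order of $\xi^{p^{ri}[n-i]_r}$ via $\gcd(p^{ri},q-1)=1$ for (2), and reduce (3)–(4) to the constacyclic equivalence criterion of \cite{Ouazzou2025}. The only difference is cosmetic: you make the step $N_{n-i}^{\sigma}(\sigma^i(\xi^h))=N_{n-i}^{\sigma}(\sigma^i(\xi))^h$ explicit where the paper leaves it implicit, and you invoke the characterization from the reference for (3) where the paper restates it as verifying that $f(x)\mapsto f(\sigma^i(\alpha)x)$ is an $\mathbb{F}_q$-morphism isometry.
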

\begin{proof}
\begin{enumerate}
\item  As $ (a_0, a_{\ell}) \sim_{(n, \ell,\sigma} (b_0, b_{\ell})$, by the second  statement of Theorem \ref{Th.1}, 
there exists $\alpha\in \Fq^*$ such that $$ a_0 N_n^{\sigma}(\alpha)  = b_0   \quad  \text{and} \quad   a_{\ell} N^{\sigma}_{n-\ell}(\sigma^{\ell}(\alpha))=  b_{\ell}. $$
This gives
$$ a_0^{-1}  b_0 = N_n^{\sigma}(\alpha)  \quad  \text{and} \quad   a_{\ell}^{-1}   b_{\ell}=  N^{\sigma}_{n-\ell}(\sigma^{\ell}(\alpha)) . $$
Which means that  
$$ a_0^{-1}  b_0 \in   \langle  N_n^{\sigma}(\xi) \rangle  \quad  \text{and} \quad   a_{\ell}^{-1}   b_{\ell} \in  
\langle   N^{\sigma}_{n-\ell}(\sigma^{\ell}(\xi))  \rangle. $$

\item Let   $i\in \{0,\ell\}$ then the order of $ \langle   N^{\sigma}_{n-i}(\sigma^{i}(\xi))  \rangle =
\langle   \xi^{p^{ri}[n]_r}  \rangle $ is  $$d_i=\dfrac{q-1}{\gcd(p^{ri}[n-i]_r, q-1)}= \dfrac{q-1}{\gcd([n-i]_r, q-1)} , \text{ since $ \gcd(p^{ri},q-1)=1.$ }$$  It follows that  $(b_i a_i^{-1})^{d_i} =1 ,\ i\in \{0,\ell\}.$
\item By the second  statement of Theorem \ref{Th.1}, 
there exists $\alpha\in \Fq^*$ such that $$ a_0 N_n^{\sigma}(\alpha)  = b_0   \quad  \text{and} \quad   a_{\ell} N^{\sigma}_{n-\ell}(\sigma^{\ell}(\alpha))=  b_{\ell}. $$  As in the proof of \cite[Theorem 6]{Ouazzou2025}. For $i\in \{0,\ell\},$ we verify
 that the  map  $\varphi_i$ given by 
\begin{equation}
  	\begin{array}{cccc}
  	\varphi_{i}:& \mathbb{F}_q[x] /\langle x^{n-i}-b_0 \rangle  &\longrightarrow & \mathbb{F}_q[x] /\langle x^{n-i}-a_0 \rangle, \\
  	& f(x) & \longmapsto &  f(\sigma^i(\alpha) x). 
  	\end{array}
  	\end{equation} 
    is an $\Fq$-morphism isometry with respect to the Hamming  metric.
    \item Follows straight-forwardly from the third statement.
\end{enumerate}
\end{proof}

\subsection{Rank equivalence  of  skew  trinomial codes}
From the beginning of this section, we know that an $\mathbb{F}_{q}$-morphism is an isometry of the form $\varphi_{\alpha}$, where $\alpha \in \mathbb{F}_{q'}$, preserves the rank distance. Not really, we are just checking maps of the form $\varphi_\alpha$ for simplicity. Moreover, $\varphi_\alpha$ preserves the distance, not $\alpha$ itself. Recall here that the rank of $v=(v_0,v_1,\ldots,v_{n-1})\in \Fq^n$ is the dimension of the $\mathbb{F}_{q'}$-vector space $ \langle v_0,v_1,\ldots,v_{n-1}\rangle_{\mathbb{F}_{q'}}.$ We  now give the definition of rank equivalence in this case, along with a complete characterization, as in the case of Hamming equivalence.

\begin{definition}\label{DefFqSigma}
Let $a_0, a_{\ell}, b_0, b_{\ell}$ be nonzero elements of $\mathbb{F}_q$, and let $\ell$ be an integer such that $0 < \ell < n$. We say that $(a_0, a_{\ell})$ and $(b_0, b_{\ell})$ are {$( n,\ell,\sigma, \mathbb{F}_{q'})$-equivalent} in $\mathbb{F}_q^* \times \mathbb{F}_q^*$ with respect to the rank metric, and we  denote
$$ (a_0, a_{\ell}) \sim_{( n,\ell,\sigma, \mathbb{F}_{q'})} (b_0, b_{\ell}),$$
if there exists an $\alpha \in \mathbb{F}_{q'}^* $ such that the map

  \begin{equation}
  	\begin{array}{cccc}
  	\varphi_{\alpha}:& \mathbb{F}_q[x;\sigma] /\langle x^n-b_{\ell}x^{\ell}-b_0 \rangle  &\longrightarrow & \mathbb{F}_q[x;\sigma] /\langle x^n-a_{\ell} x^{\ell}-a_0 \rangle, \\ 
  	& f(x) & \longmapsto &  f(\alpha x),
  	\end{array}
  	\end{equation} 
is an  $\mathbb{F}_q$-morphism isometry with respect to the rank metric. 
\end{definition}

We give now the following result regarding the rank equivalence between two classes of skew trinomial codes. Because the proof is analogous to the one of Theorem \ref{Th.1}---with the small change that we require $\alpha \in \mathbb{F}_{q'}\subseteq \Fq$---we omit it here. 
\begin{theorem}\label{Th.RankEquiv}
Let $0<\ell<n$ be an integer,  $(a_0, a_{\ell})$ and $(b_0, b_{\ell})$ be elements of $\mathbb{F}_q^{*} \times \mathbb{F}_q^{*}$,  and   $\zeta$ be  a primitive element of $\mathbb{F}_{q'}$. 
Then the following statements are equivalent:

\begin{enumerate}
    \item   $ (a_0, a_{\ell})  \sim_{( n,\ell,\sigma, \mathbb{F}_{q'})} (b_0, b_{\ell}) .$ 
    \item There exists $\alpha \in  \mathbb{F}_{q'}^*$ such that the map

  \begin{equation}
  	\begin{array}{cccc}
  	\varphi_{\alpha}:& \mathbb{F}_q[x;\sigma] /\langle x^n-b_{\ell}x^{\ell}-b_0 \rangle  &\longrightarrow & \mathbb{F}_q[x;\sigma] /\langle x^n-a_{\ell} x^{\ell}-a_0 \rangle, \\ 
  	& f(x) & \longmapsto &  f(\alpha x),
  	\end{array}
  	\end{equation} 
    is an  $\mathbb{F}_q$-morphism isometry with respect to the rank metric.
       
     \item  There exists $\alpha\in  \mathbb{F}_{q'}^* $ such that $ (a_0,a_{\ell})\star (N_n^{\sigma}(\alpha), N_{n-\ell}^{\sigma}(\sigma^{\ell}(\alpha)))= (b_0,b_{\ell}) $.

      \item  $ (a_0, a_{\ell})^{-1} \star(b_0,b_{\ell})\in H_{\ell,  \mathbb{F}_{q'}},$ where   $ H_{\ell, \mathbb{F}_{q'}}$ is the cyclic subgroup of $  \mathbb{F}_{q'}^*\times  \mathbb{F}_{q'}^*$ generated by $( N_n^{\sigma}(\zeta), N_{n-\ell}^{\sigma}(\sigma^{\ell}(\zeta)) ) .$
\end{enumerate}
The equivalence between (1) and (4) implies that the number of $( n,\ell,\sigma, \mathbb{F}_{q'})$-equivalence classes is $$ 
N_{( n,\ell,\sigma, \mathbb{F}_{q'})}=\dfrac{ (q-1)^2}{ \lcm(\frac{q'-1}{ \gcd([n]_r,q'-1)}, \frac{q'-1}{ \gcd([n-\ell]_r,q'-1)})} .$$
\end{theorem}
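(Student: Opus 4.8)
The plan is to run the same cyclic chain $(1)\Rightarrow(2)\Rightarrow(3)\Rightarrow(4)\Rightarrow(1)$ as in the proof of Theorem~\ref{Th.1}, tracking the single place where the restriction $\alpha\in\mathbb{F}_{q'}^*$ matters. First, $(1)\Leftrightarrow(2)$ is immediate, since both merely restate Definition~\ref{DefFqSigma}. For $(2)\Rightarrow(3)$ I would argue exactly as in the implication $(1)\Rightarrow(2)$ of Theorem~\ref{Th.1}: using that $\varphi_\alpha$ is an $\mathbb{F}_q$-morphism with $\varphi_\alpha(x^n-b_\ell x^\ell-b_0)=0$ in the target ring, compute $\varphi_\alpha(x^n)$ in two ways, once as $b_\ell N_\ell^\sigma(\alpha)x^\ell+b_0$ and once as $N_n^\sigma(\alpha)(a_\ell x^\ell+a_0)$, and compare the coefficients of $x^\ell$ and of $x^0$. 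Invoking the norm identity $N_n^\sigma(\alpha)N_\ell^\sigma(\alpha^{-1})=N_{n-\ell}^\sigma(\sigma^\ell(\alpha))$ from~\eqref{Eq_formulae} turns this into $a_0N_n^\sigma(\alpha)=b_0$ and $a_\ell N_{n-\ell}^\sigma(\sigma^\ell(\alpha))=b_\ell$, which is statement $(3)$. The implication $(3)\Rightarrow(4)$ is then purely algebraic: writing $\alpha=\zeta^j$ with $\zeta$ a primitive element of $\mathbb{F}_{q'}$, one rewrites $(N_n^\sigma(\alpha),N_{n-\ell}^\sigma(\sigma^\ell(\alpha)))$ as the $j$-th power of the generator $(N_n^\sigma(\zeta),N_{n-\ell}^\sigma(\sigma^\ell(\zeta)))$ of $H_{\ell,\mathbb{F}_{q'}}$, so that $(a_0,a_\ell)^{-1}\star(b_0,b_\ell)\in H_{\ell,\mathbb{F}_{q'}}$.

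The genuinely new point, and the one I expect to be the main obstacle, is the implication $(4)\Rightarrow(1)$, specifically the verification that $\varphi_\alpha$ is a \emph{rank} isometry when $\alpha\in\mathbb{F}_{q'}^*$. Given $(a_0,a_\ell)^{-1}\star(b_0,b_\ell)=(N_n^\sigma(\zeta),N_{n-\ell}^\sigma(\sigma^\ell(\zeta)))^h$, I would set $\alpha:=\zeta^h\in\mathbb{F}_{q'}^*$. The proof that $\varphi_\alpha$ is a well-defined $\mathbb{F}_q$-algebra isomorphism satisfying the morphism property $\varphi_\alpha(fg)=\varphi_\alpha(f)\varphi_\alpha(g)$ can be taken verbatim from Theorem~\ref{Th.1} (the first-isomorphism-theorem argument with kernel $\langle x^n-b_\ell x^\ell-b_0\rangle$, together with the norm bookkeeping), since none of it used that $\alpha$ lay in the full field $\mathbb{F}_q^*$. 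What must be added is rank preservation: because $\alpha\in\mathbb{F}_{q'}$, each scalar $N_i^\sigma(\alpha)=\alpha^{[i]_r}$ lies in $\mathbb{F}_{q'}^*$, so $\varphi_\alpha$ acts on coefficient vectors as the diagonal map $(v_0,\dots,v_{n-1})\mapsto(N_0^\sigma(\alpha)v_0,\dots,N_{n-1}^\sigma(\alpha)v_{n-1})$ with all diagonal entries in $\mathbb{F}_{q'}^*$. Rescaling each component by a nonzero element of $\mathbb{F}_{q'}$ leaves the $\mathbb{F}_{q'}$-span $\langle v_0,\dots,v_{n-1}\rangle_{\mathbb{F}_{q'}}$ unchanged, hence preserves $w_r$; this is exactly where $\alpha\in\mathbb{F}_{q'}^*$ (rather than merely $\alpha\in\mathbb{F}_q^*$, which sufficed in the Hamming case) is essential. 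The conceptual care needed is to confirm that this scaling argument is the \emph{only} modification to the Hamming proof, so the remainder may legitimately be reused.

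For the count of equivalence classes, I would observe that by $(1)\Leftrightarrow(4)$ the classes are precisely the cosets of $H_{\ell,\mathbb{F}_{q'}}$ inside $\mathbb{F}_q^*\times\mathbb{F}_q^*$, so their number is the index $(q-1)^2/\lvert H_{\ell,\mathbb{F}_{q'}}\rvert$. It then remains to compute the order of the cyclic group $H_{\ell,\mathbb{F}_{q'}}$ generated by $(\zeta^{[n]_r},\zeta^{p^{r\ell}[n-\ell]_r})$ inside $\mathbb{F}_{q'}^*\times\mathbb{F}_{q'}^*$: this order is the $\lcm$ of the orders of the two components, namely $\tfrac{q'-1}{\gcd([n]_r,q'-1)}$ and $\tfrac{q'-1}{\gcd(p^{r\ell}[n-\ell]_r,q'-1)}$. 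Since $q'$ is a power of $p$, one has $\gcd(p^{r\ell},q'-1)=1$, so the second simplifies to $\tfrac{q'-1}{\gcd([n-\ell]_r,q'-1)}$, yielding the stated formula for $N_{(n,\ell,\sigma,\mathbb{F}_{q'})}$. A final well-definedness check I would include is that $H_{\ell,\mathbb{F}_{q'}}$ is indeed a subgroup of $\mathbb{F}_q^*\times\mathbb{F}_q^*$ and that its order divides $(q-1)^2$, which holds because $q'-1\mid q-1$ and an $\lcm$ of divisors of $q-1$ again divides $q-1$.
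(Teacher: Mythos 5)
Your proposal is correct and follows exactly the route the paper intends: the authors omit the proof precisely because it is the argument of Theorem~\ref{Th.1} with $\alpha$ restricted to $\mathbb{F}_{q'}^*$, and your added observation — that each $N_i^{\sigma}(\alpha)=\alpha^{[i]_r}$ then lies in $\mathbb{F}_{q'}^*$, so the diagonal rescaling preserves the $\mathbb{F}_{q'}$-span of the coordinates and hence the rank weight — is the one genuinely new ingredient. Your computation of $\lvert H_{\ell,\mathbb{F}_{q'}}\rvert$ and the resulting class count also match the paper's formula.
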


In the case where we calculate the rank over the fixed subfield $ \mathbb{F}_q^{\sigma} = \mathbb{F}_{q_0}, $ we obtain the following characterizations of rank equivalence. In this case, we provide more characterizations compared to Theorem~\ref{Th.RankEquiv}, using the right roots of our skew trinomial polynomials.

\begin{corollary} \label{Th.1FqSigma} 
Let $0<\ell<n$ be an integer,  $(a_0, a_{\ell})$ and $(b_0, b_{\ell})$ be elements of $\mathbb{F}_q^{*} \times \mathbb{F}_q^{*}$,  and  
$\zeta$ be  a primitive element of $\mathbb{F}_{q_0}:=\Fq^{\sigma}$. 
Then the following statements are equivalent:
\begin{enumerate}
    \item   $ (a_0, a_{\ell})  \sim_{( n,\ell, \Fq^{\sigma})} (b_0, b_{\ell}) ,$ i.e.,  $(a_0, a_{\ell}) $ and $ (b_0, b_{\ell})$ are rank equivalent. 
    \item There exists an $\alpha \in (\mathbb{F}_q^{\sigma})^*$ such that the map

  \begin{equation}
  	\begin{array}{cccc}
  	\varphi_{\alpha}:& \mathbb{F}_q[x;\sigma] /\langle x^n-b_{\ell}x^{\ell}-b_0 \rangle  &\longrightarrow & \mathbb{F}_q[x;\sigma] /\langle x^n-a_{\ell} x^{\ell}-a_0 \rangle, \\ 
  	& f(x) & \longmapsto &  f(\alpha x),
  	\end{array}
  	\end{equation} 
    
is an  $\mathbb{F}_q$-morphism isometry with respect to the rank metric.

    \item  There exists $\alpha\in (\Fq^{\sigma})^* $ that is  a common right  root of the polynomials  $\displaystyle a_i x^{n-i } - b_i \in \Fq[x;\sigma], i \in \{0,\ell\}. $ 

    \item   There exists $\alpha\in (\Fq^{\sigma})^* $ that is a right root of  $\gcrd(a_0 x^n - b_0, a_{\ell} x^{n - \ell} - b_{\ell}) \in \Fq[x;\sigma].$
       \item There exists $\alpha\in (\Fq^{\sigma})^* $ that is a right root of $\gcrd(x^n - b_0 a_0^{-1}, x^{n - \ell} - b_{\ell} a_{\ell}^{-1})\in \Fq[x;\sigma] .$ 
     \item  There exists $\alpha\in (\mathbb{F}_q^{\sigma})^* $ such that $ (a_0,a_{\ell})\star ( \alpha^n,   \alpha^{n-\ell})= (b_0,b_{\ell}) $.

      \item  $ (a_0, a_{\ell})^{-1} \star(b_0,b_{\ell})\in H_{\ell,\Fq^{\sigma}},$ where   $ H_{\ell,\Fq^{\sigma}}$ is the cyclic subgroup of $ (\Fq^{\sigma})^*\times (\Fq^{\sigma})^*$ generated by $(\zeta^n,\zeta^{n-\ell}) .$
\end{enumerate}
The equivalence between (1) and (7) implies that the number of $(n,\ell,\Fq^{\sigma})$-equivalence classes is $$ 
N_{(n,\ell,\Fq^{\sigma})}=\dfrac{ (q-1)^2}{ \lcm(\frac{q_0-1}{ \gcd(n,q_0-1)}, \frac{q_0-1}{ \gcd(n-\ell,q_0-1)})} .$$
\end{corollary}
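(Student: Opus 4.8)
The plan is to obtain Corollary \ref{Th.1FqSigma} as the specialization of Theorem \ref{Th.RankEquiv} to $\mathbb{F}_{q'} = \Fq^{\sigma} = \mathbb{F}_{q_0}$, supplemented by the three new root/divisibility characterizations (3)--(5). The single observation driving everything is that for $\alpha \in (\Fq^{\sigma})^*$ one has $\sigma(\alpha) = \alpha$, whence $N_i^{\sigma}(\alpha) = \sigma^{i-1}(\alpha)\cdots\sigma(\alpha)\alpha = \alpha^i$; equivalently $[i]_r \equiv i \pmod{q_0 - 1}$, since $p^{rj} \equiv 1 \pmod{p^r-1}$ and the order of $\alpha$ divides $q_0 - 1 = p^{\gcd(s,r)}-1$. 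In particular $\sigma^{\ell}(\alpha) = \alpha$, so $N_{n-\ell}^{\sigma}(\sigma^{\ell}(\alpha)) = \alpha^{n-\ell}$ and more generally $N_{n-i}^{\sigma}(\sigma^i(\alpha)) = \alpha^{n-i}$ for $i \in \{0,\ell\}$.

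First I would note that $(1) \Leftrightarrow (2)$ is Definition \ref{DefFqSigma}. Then, applying Theorem \ref{Th.RankEquiv} with $\mathbb{F}_{q'} = \Fq^{\sigma}$ and substituting $N_n^{\sigma}(\alpha) = \alpha^n$ and $N_{n-\ell}^{\sigma}(\sigma^{\ell}(\alpha)) = \alpha^{n-\ell}$ into its statements, I obtain immediately the equivalences $(2) \Leftrightarrow (6) \Leftrightarrow (7)$: statement (6) is the norm condition of Theorem \ref{Th.RankEquiv}(3) rewritten over the fixed field, and the generator $(N_n^{\sigma}(\zeta), N_{n-\ell}^{\sigma}(\sigma^{\ell}(\zeta)))$ of $H_{\ell,\Fq^{\sigma}}$ becomes $(\zeta^n, \zeta^{n-\ell})$, giving statement (7). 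This disposes of all assertions already appearing in Theorem \ref{Th.RankEquiv}.

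Next I would establish the genuinely new characterizations. For $(6) \Leftrightarrow (3)$ I use that over $(\Fq^{\sigma})^*$ the right evaluation of Definition \ref{def.2} collapses to ordinary evaluation: $(a_i x^{n-i} - b_i)(\alpha) = a_i \alpha^{n-i} - b_i$. Hence $\alpha$ is a common right root of $a_0 x^n - b_0$ and $a_{\ell} x^{n-\ell} - b_{\ell}$ exactly when $a_0 \alpha^n = b_0$ and $a_{\ell}\alpha^{n-\ell} = b_{\ell}$, which is precisely (6). For $(3) \Leftrightarrow (4)$ I invoke the skew factor theorem from Definition \ref{def.2}: $\alpha$ is a right root of a polynomial iff $(x-\alpha)$ right divides it. Thus $\alpha$ is a common right root of the two polynomials iff $(x-\alpha)$ is a common right divisor, iff by the universal property of the $\gcrd$ we have $(x-\alpha) \mid_r \gcrd(a_0 x^n - b_0, a_{\ell}x^{n-\ell} - b_{\ell})$, iff $\alpha$ is a right root of this $\gcrd$. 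Finally $(4) \Leftrightarrow (5)$ holds because $a_i x^{n-i} - b_i = a_i(x^{n-i} - a_i^{-1}b_i)$ with $a_i$ a nonzero scalar (and $a_i^{-1}b_i = b_i a_i^{-1}$ in the commutative field $\Fq$), so factoring out the left unit leaves the set of common right roots, hence the right roots of the two $\gcrd$'s, unchanged.

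For the enumeration, from $(1) \Leftrightarrow (7)$ the relation $\sim_{(n,\ell,\Fq^{\sigma})}$ partitions $\Fq^* \times \Fq^*$ into cosets of $H_{\ell,\Fq^{\sigma}} \leq (\Fq^{\sigma})^* \times (\Fq^{\sigma})^* \leq \Fq^* \times \Fq^*$, so the number of classes is $(q-1)^2 / |H_{\ell,\Fq^{\sigma}}|$. Since $\zeta$ has order $q_0 - 1$, the orders of $\zeta^n$ and $\zeta^{n-\ell}$ are $\frac{q_0-1}{\gcd(n,q_0-1)}$ and $\frac{q_0-1}{\gcd(n-\ell,q_0-1)}$, so $|H_{\ell,\Fq^{\sigma}}|$ equals their $\lcm$, yielding the stated formula. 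I expect the main obstacle to be the careful bookkeeping in the $\gcrd$ step: one must confirm that the passage between common right roots and right roots of the $\gcrd$ is valid in the non-commutative ring $\Fq[x;\sigma]$, and that restricting to roots in $(\Fq^{\sigma})^*$ is maintained throughout, which is exactly what makes the evaluation behave commutatively; the remaining assertions are clean transcriptions of Theorem \ref{Th.RankEquiv}.
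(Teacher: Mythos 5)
Your proposal is correct and follows essentially the same route as the paper: the whole corollary rests on the observation that $N_i^{\sigma}(\alpha)=\alpha^i$ for $\alpha\in(\Fq^{\sigma})^*$, combined with the term-by-term comparison of $\varphi_\alpha(x^n)$ and the skew factor theorem linking right roots to right divisors of the $\gcrd$. The only difference is organizational — you delegate $(1)\Leftrightarrow(2)\Leftrightarrow(6)\Leftrightarrow(7)$ to the specialization of Theorem~\ref{Th.RankEquiv} and spell out the $\gcrd$ steps that the paper labels ``immediate,'' whereas the paper proves a single self-contained cycle $(1)\Rightarrow(2)\Rightarrow\cdots\Rightarrow(7)\Rightarrow(1)$; both are valid and yield the same class count.
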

\begin{proof}
     See Appendix.
\end{proof}

In the following result we give characterizations of rank equivalence for the class of  $(\ell,\sigma)$-trinomial codes  associated with  $x^n - a_{\ell}x^{\ell} - a_0$ to the class of $(\ell,\sigma)$-trinomial codes  associated with  $x^n - x^{\ell} - 1.$ The proof is similar to the case of Hamming equivalence (Corollary \ref{Equiv_1}), so we omit it.

\begin{corollary} \label{Caract_(1,1)}
Let $\sigma$ be and automorphism of $ \Fq,$ $\ell$ be an integer such that  $0<\ell<n$,  and  $(a_0, a_{\ell})$ be an element of $\mathbb{F}_q^{*} \times \mathbb{F}_q^{*}$. Then the following statements are equivalent.
    \begin{enumerate}
        \item  The class of skew  $(\ell,\sigma)$-trinomial codes  associated with  $x^n - a_{\ell}x^{\ell} - a_0$ is  $(n,\ell,\sigma, \mathbb{F}_{q'})$-equivalent (rank equivalence) to the class of   skew  $(\ell,\sigma)$-trinomial codes  associated with $x^n - x^{\ell} - 1.$
        \item  There exists $ \alpha \in \mathbb{F}_{q'}^*  $ such that $(a_0, a_{\ell})\star\left(N^{\sigma}_{n}(\alpha), N^{\sigma}_{n-\ell}(\sigma^{\ell}(\alpha)) \right)= (1, 1)$. 
        \item  There exists  an $(n,\sigma)$-th root  $ \alpha \in \mathbb{F}_{q'}^* $  of $a_0$  such that 
        $   a_0 =  N^{\sigma}_{\ell}(\alpha) a_\ell.$ 
        \end{enumerate}
\end{corollary}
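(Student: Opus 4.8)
The plan is to follow the same three-step cycle of implications used in the Hamming case (Corollary \ref{Equiv_1}), replacing every appeal to Theorem \ref{Th.1} by its rank analogue Theorem \ref{Th.RankEquiv}, while carefully tracking that the scalar $\alpha$ realizing the isometry must now lie in the subfield $\mathbb{F}_{q'}^*$ rather than in all of $\Fq^*$. Concretely, I would prove $(1)\Rightarrow(2)\Rightarrow(3)\Rightarrow(1)$.

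For $(1)\Rightarrow(2)$ I would simply specialize Theorem \ref{Th.RankEquiv} to the target pair $(b_0,b_{\ell})=(1,1)$: statement (1) of the corollary is exactly $(a_0,a_{\ell})\sim_{(n,\ell,\sigma,\mathbb{F}_{q'})}(1,1)$, which by the equivalence of assertions (1) and (3) of Theorem \ref{Th.RankEquiv} is equivalent to the existence of $\alpha\in\mathbb{F}_{q'}^*$ with $(a_0,a_{\ell})\star\big(N_n^{\sigma}(\alpha),N_{n-\ell}^{\sigma}(\sigma^{\ell}(\alpha))\big)=(1,1)$, i.e.\ statement (2). So $(1)\Leftrightarrow(2)$ is in fact immediate, and no further work is needed for this link.

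For $(2)\Rightarrow(3)$ I would run the same algebraic manipulation as in the proof of Corollary \ref{Equiv_1}. From $a_0 N_n^{\sigma}(\alpha)=1$ and $a_{\ell}N_{n-\ell}^{\sigma}(\sigma^{\ell}(\alpha))=1$ one obtains $a_0=N_n^{\sigma}(\alpha^{-1})$, and then, using the norm identity (\ref{Eq_formulae}) to rewrite $N_{n-\ell}^{\sigma}(\sigma^{\ell}(\alpha^{-1}))N_n^{\sigma}(\alpha^{-1})=N_{\ell}^{\sigma}(\alpha^{-1})$, the relation $a_{\ell}N_{\ell}^{\sigma}(\alpha^{-1})=a_0$. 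Setting $\beta:=\alpha^{-1}$, which again lies in $\mathbb{F}_{q'}^*$ since $\mathbb{F}_{q'}$ is a field, exhibits an $(n,\sigma)$-th root $\beta$ of $a_0$ (that is, $N_n^{\sigma}(\beta)=a_0$) satisfying $a_0=N_{\ell}^{\sigma}(\beta)a_{\ell}$, which is statement (3). The reverse implication $(3)\Rightarrow(1)$ reverses these steps: given such a root $\alpha\in\mathbb{F}_{q'}^*$, I would apply (\ref{Eq_formulae}) once more to recover $a_0=N_n^{\sigma}(\alpha)$ and $a_{\ell}=N_{n-\ell}^{\sigma}(\sigma^{\ell}(\alpha))$, hence $(a_0,a_{\ell})\star\big(N_n^{\sigma}(\alpha^{-1}),N_{n-\ell}^{\sigma}(\sigma^{\ell}(\alpha^{-1}))\big)=(1,1)$, and conclude via assertion (3) of Theorem \ref{Th.RankEquiv} that $(a_0,a_{\ell})\sim_{(n,\ell,\sigma,\mathbb{F}_{q'})}(1,1)$.

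The only place requiring genuine care—as opposed to the Hamming proof, which is otherwise word-for-word identical—is the bookkeeping of the subfield constraint: I must check at each step that the witnessing scalar stays in $\mathbb{F}_{q'}^*$, which is automatic because passing to inverses and all hypotheses of Theorem \ref{Th.RankEquiv} already live inside the field $\mathbb{F}_{q'}$, and I must read the phrase ``$(n,\sigma)$-th root of $a_0$'' consistently as the equation $N_n^{\sigma}(\alpha)=a_0$. Since the norm identity (\ref{Eq_formulae}) is a purely formal relation in $\Fq[x;\sigma]$ that does not depend on which subfield $\alpha$ belongs to, no new obstacle arises; this is precisely why one may assert that the proof is analogous to that of Corollary \ref{Equiv_1}.
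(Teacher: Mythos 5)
Your proposal is correct and is exactly the route the paper takes: the authors omit the proof of Corollary \ref{Caract_(1,1)} precisely because it is the proof of Corollary \ref{Equiv_1} with Theorem \ref{Th.1} replaced by Theorem \ref{Th.RankEquiv} and the witnessing scalar kept in $\mathbb{F}_{q'}^*$, which is what you carry out. One small slip worth fixing: the intermediate identity you quote, $N_{n-\ell}^{\sigma}(\sigma^{\ell}(\alpha^{-1}))N_n^{\sigma}(\alpha^{-1})=N_{\ell}^{\sigma}(\alpha^{-1})$, is not what (\ref{Eq_formulae}) gives --- that formula yields $N_{n-\ell}^{\sigma}(\sigma^{\ell}(\alpha))=N_n^{\sigma}(\alpha)N_{\ell}^{\sigma}(\alpha^{-1})$, and substituting this into $a_{\ell}N_{n-\ell}^{\sigma}(\sigma^{\ell}(\alpha))=1=a_0N_n^{\sigma}(\alpha)$ gives $a_{\ell}N_{\ell}^{\sigma}(\alpha^{-1})=a_0$ directly --- but the conclusion you state is the correct one and the rest of the argument is sound.
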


 In the following result, we describe the associated polynomials of possible skew $(\ell,\sigma)$-trinomial codes based on the  $( n,\ell,\sigma, \mathbb{F}_{q'})$-equivalence relation defined above in Definition \ref{DefFqSigma}.
\begin{theorem}\label{RankClasses}
Let $n, \ell$ be two integers such that $0<\ell<n,$ let $\xi$ be  a primitive element of $\Fq$, $ \displaystyle{ \zeta:=\xi^{\frac{q-1}{q'-1}}}$  a primitive element of $\mathbb F_{q'}$ and $\sigma$ be an automorphism of $ \Fq.$  Set  $n':=\frac{q-1}{q'-1} ,$
  $  d:=\gcd \left(\frac{q'-1}{ d_0}, \frac{q'-1}{ d_{\ell}}\right)$  and $ d_i:=\gcd([n-i]_r,q'-1)$, for $ \ i\in \{0,\ell\}$. Then   the class of skew  $(\ell,\sigma)$-trinomial  codes associated with $x^n-a_{\ell}x^{\ell}-a_0$ is $(n,\sigma,\mathbb{F}_{q'})$-equivalent (rank equivalence) to the class of skew  $(\ell,\sigma)$-trinomial  codes associated with  $x^n-\xi^{j} x^{\ell}- \xi^{i+hn'},$ for some $i\in\{0,1,\ldots, d_0-1\}$,  $j\in \{0,1,\ldots, d_{\ell}-1\}$, and $ h\in \{0,\ldots, {n'}^2 d-1\}.$
  \end{theorem}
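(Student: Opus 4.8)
The plan is to reduce the statement to a coset-counting problem for the subgroup $H_{\ell,\mathbb{F}_{q'}}$ inside $\Fq^*\times\Fq^*$, following the blueprint of the Hamming case in Theorem \ref{Equiv_2}; the only genuinely new feature is that $H_{\ell,\mathbb{F}_{q'}}$ now lives inside the proper subfield part $\mathbb{F}_{q'}^*\times\mathbb{F}_{q'}^*$. By Theorem \ref{Th.RankEquiv}, we have $(a_0,a_\ell)\sim_{(n,\ell,\sigma,\mathbb{F}_{q'})}(b_0,b_\ell)$ if and only if $(a_0,a_\ell)^{-1}\star(b_0,b_\ell)\in H_{\ell,\mathbb{F}_{q'}}$, so the equivalence classes are exactly the cosets of $H_{\ell,\mathbb{F}_{q'}}$, and their number is $N_{(n,\ell,\sigma,\mathbb{F}_{q'})}=n'^2 d_0 d_\ell d$. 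It therefore suffices to exhibit a system of coset representatives and to rewrite it in the stated form $(\xi^{i+hn'},\xi^{j})$.

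First I would record the orders of the two coordinates of the generator of $H_{\ell,\mathbb{F}_{q'}}$. Writing $N_1:=N_n^\sigma(\zeta)=\zeta^{[n]_r}$ and $N_2:=N_{n-\ell}^\sigma(\sigma^\ell(\zeta))=\zeta^{p^{r\ell}[n-\ell]_r}$, and using $\zeta=\xi^{n'}$ together with $\gcd(p^{r\ell},q'-1)=1$, these elements have orders $m_1:=\tfrac{q'-1}{d_0}$ and $m_2:=\tfrac{q'-1}{d_\ell}$ in $\mathbb{F}_{q'}^*$; hence $|H_{\ell,\mathbb{F}_{q'}}|=\lcm(m_1,m_2)$ and $d=\gcd(m_1,m_2)$. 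I would then build a transversal through the subgroup chain
$$H_{\ell,\mathbb{F}_{q'}}\ \subseteq\ \langle N_1\rangle\times\langle N_2\rangle\ \subseteq\ \mathbb{F}_{q'}^*\times\mathbb{F}_{q'}^*\ \subseteq\ \Fq^*\times\Fq^*.$$
Using $\mathbb{F}_{q'}^*=\langle\xi^{n'}\rangle$, a transversal of the top inclusion is $\{(\xi^{t_1},\xi^{t_2}):0\le t_1,t_2<n'\}$ (index $n'^2$); a transversal of the middle inclusion is $\{(\zeta^{s_1},\zeta^{s_2}):0\le s_1<d_0,\ 0\le s_2<d_\ell\}$ (index $d_0d_\ell$); and, since $\langle N_1\rangle\times\langle N_2\rangle\big/H_{\ell,\mathbb{F}_{q'}}\cong\mathbb{Z}_{d}$ is cyclic, generated by the class of $(N_1,1)$, a transversal of the bottom inclusion is $\{(N_1^{h'},1):0\le h'<d\}$.

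Multiplying the three transversals and translating back through $\zeta=\xi^{n'}$, every pair $(a_0,a_\ell)$ becomes equivalent to one of the form $(\xi^{t_1+n'(s_1+h'[n]_r)},\xi^{t_2+n's_2})$, and I would then split the final bookkeeping into the cases $d=1$ and $d\ne1$ exactly as in Theorem \ref{Equiv_2}: the former gives the product decomposition $H_{\ell,\mathbb{F}_{q'}}=\langle N_1\rangle\times\langle N_2\rangle$, the latter the additional index-$d$ layer. Since the number of representatives produced equals $N_{(n,\ell,\sigma,\mathbb{F}_{q'})}$, any system that is pairwise inequivalent is automatically exhaustive, so matching the produced parametrization to the claimed ranges $i\in\{0,\dots,d_0-1\}$, $j\in\{0,\dots,d_\ell-1\}$, $h\in\{0,\dots,n'^2d-1\}$ and checking non-redundancy would finish the argument.

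The main obstacle is precisely this last bookkeeping step, which is heavier than in the Hamming setting of Theorem \ref{Equiv_2}. There $H_{\ell,\sigma}$ already projects onto index-$d_\ell$ subgroups of $\Fq^*$, so $\xi^{j}$ with $j<d_\ell$ meets every coset in the second coordinate; here the extra inclusion $\mathbb{F}_{q'}^*\times\mathbb{F}_{q'}^*\subseteq\Fq^*\times\Fq^*$ of index $n'^2$ introduces new cosets, since $\langle N_2\rangle$ has index $n'd_\ell$ (not $d_\ell$) in $\Fq^*$. The delicate point is therefore to verify that the stated parametrization $(\xi^{i+hn'},\xi^{j})$ distributes this $n'$-fold ambiguity correctly between the two coordinates so that each coset of $H_{\ell,\mathbb{F}_{q'}}$ is met exactly once; this is the step where the Hamming proof does not generalize verbatim and where the indexing of the parameters $i,j,h$ must be pinned down carefully before the class count of Theorem \ref{Th.RankEquiv} can close the argument.
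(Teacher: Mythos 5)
Your overall strategy coincides with the paper's: reduce the statement via Theorem \ref{Th.RankEquiv} to exhibiting a transversal of the cosets of $H_{\ell,\mathbb{F}_{q'}}$ in $\Fq^*\times\Fq^*$, and the paper's own proof does nothing more than assert such a coset partition with the representatives $(\xi^{i+hn'},\xi^{j})$. Your subgroup-chain construction through $H_{\ell,\mathbb{F}_{q'}}\subseteq\langle N_1\rangle\times\langle N_2\rangle\subseteq\mathbb{F}_{q'}^*\times\mathbb{F}_{q'}^*\subseteq\Fq^*\times\Fq^*$ is correct and yields a genuine complete set of $n'^2 d\,d_0 d_\ell$ representatives, namely $\bigl(\xi^{t_1+n'(s_1+h'[n]_r)},\xi^{t_2+n's_2}\bigr)$ with $t_1,t_2<n'$, $s_1<d_0$, $s_2<d_\ell$, $h'<d$.

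However, the step you defer as ``final bookkeeping'' is not bookkeeping: it cannot be carried out, and your own transversal already shows why. Your representatives require the second coordinate to run over a transversal of $\langle N_2\rangle$ in $\Fq^*$, which has $n'd_\ell$ elements, whereas the stated form allows only the $d_\ell$ values $\xi^{j}$, $j<d_\ell$. Likewise, since $q-1=n'(q'-1)$, the element $\xi^{hn'}$ takes only $q'-1$ distinct values as $h$ ranges over $\{0,\ldots,n'^2d-1\}$, so the listed family $(\xi^{i+hn'},\xi^{j})$ contains at most $d_0 d_\ell (q'-1)$ distinct pairs, which is strictly smaller than the number $n'^2 d\,d_0 d_\ell$ of equivalence classes whenever $q'-1<n'^2 d$. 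A concrete instance is $q=4$, $q'=2$: then $n'=3$, $d_0=d_\ell=d=1$, $H_{\ell,\mathbb{F}_{2}}$ is trivial and there are nine classes, yet every listed representative equals $(\xi^{3h},1)=(1,1)$. So your proposal is incomplete exactly at the point where the asserted partition fails; the extra $n'$-fold index coming from $\mathbb{F}_{q'}^*\subsetneq\Fq^*$ must appear in the second coordinate as well (i.e.\ $j$ should range over $n'd_\ell$ values), and the correct conclusion is the one your transversal produces, not the parametrization in the statement. The paper's proof, which asserts the disjoint union $\bigcup_{h}\bigcup_{i}\bigcup_{j}(\xi^{i+hn'},\xi^{j})H_{\ell,\mathbb{F}_{q'}}$ without verification, is affected by the same defect.
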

  \begin{proof}
      The proof here is the same as that one of Theorem \ref{Equiv_2}, the unique difference between them is in the number of equivalence classes. For the complete proof  see Appendix.
  \end{proof}

In the following proposition, we link the rank equivalence of skew trinomial codes to the one of skew constacyclic codes. The proof is again similar to the Hamming case (Proposition \ref{Eq_Consta}), details can be found in the Appendix.
\begin{proposition}\label{RankEquiv_Consta}
Let $ (a_0, a_{\ell}) $ and $ (b_0, b_{\ell}) $ be elements of $ \Fq^{*} \times \Fq^{*} $ such that $ (a_0, a_{\ell}) \sim_{(n, \ell, \mathbb{F}_{q'})} (b_0, b_{\ell}) $. Then:
\begin{enumerate}
    \item There exists $\alpha\in  \mathbb{F}_{q'}^{*} ,$ such that for each $ i \in \{0, \ell\} $, $ a_i^{-1} b_i \in
    \langle N_{n-i}^{\sigma}(\sigma^i(\alpha)) \rangle.$
    \item For each $ i \in \{0, \ell\} $, $ (a_i^{-1} b_i)^{d_i} = 1 $, where $ d_i := \dfrac{q'-1}{\gcd([n-i]_r, q'-1)} $.
    \item For each $ i \in \{0, \ell\} $, the map 
    \begin{equation*}
  	\begin{array}{cccc}
  	\varphi_{i}:& \mathbb{F}_q[x;\sigma] /\langle x^{n-i}-b_i\rangle  &\longrightarrow & \mathbb{F}_q[x;\sigma] /\langle
    x^{n-i}-b_i\rangle, \\ 
  	& f(x)& \longmapsto &  f(\sigma^i(\alpha) x)  
  	\end{array}
  	\end{equation*} 
is an $\mathbb{F}_q$-morphism isometry  with respect to the rank  distance.
    \item For each $ i \in \{0, \ell\} $ the class of skew  $(a_i,\sigma)$-constacyclic codes of length $n-i$ is { rank equivalent} to the class of skew   $(b_i,\sigma)$-constacyclic codes of length $n-i$  over $\Fq.$ 
\end{enumerate}
\end{proposition}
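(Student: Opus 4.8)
The plan is to mirror the proof of Proposition~\ref{Eq_Consta} (the Hamming case), replacing Theorem~\ref{Th.1} by its rank analogue Theorem~\ref{Th.RankEquiv} and working inside the subfield $\mathbb{F}_{q'}$ rather than in all of $\Fq$. Since $(a_0,a_\ell)\sim_{(n,\ell,\mathbb{F}_{q'})}(b_0,b_\ell)$, statement~(3) of Theorem~\ref{Th.RankEquiv} yields an element $\alpha\in\mathbb{F}_{q'}^*$ with
$$(a_0,a_\ell)\star\bigl(N_n^{\sigma}(\alpha),\,N_{n-\ell}^{\sigma}(\sigma^\ell(\alpha))\bigr)=(b_0,b_\ell),$$
so that $a_i^{-1}b_i=N_{n-i}^{\sigma}(\sigma^i(\alpha))$ for each $i\in\{0,\ell\}$ (reading $i=0$ as $N_n^{\sigma}(\alpha)$). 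In particular $a_i^{-1}b_i$ is the generator of the cyclic group $\langle N_{n-i}^{\sigma}(\sigma^i(\alpha))\rangle$, which proves~(1).

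For~(2) I would first note that $\sigma$, being an automorphism of $\Fq$, stabilises the unique subfield $\mathbb{F}_{q'}$, so $\sigma^i(\zeta)\in\mathbb{F}_{q'}^*$ and the norm formula~\eqref{Eqsigma} applies: $N_{n-i}^{\sigma}(\sigma^i(\zeta))=\zeta^{\,p^{ri}[n-i]_r}$. Since $\zeta$ is primitive in $\mathbb{F}_{q'}^*$ (of order $q'-1$) and $\gcd(p^{ri},q'-1)=1$ (as $q'$ is a power of $p$), this element has order
$$\frac{q'-1}{\gcd(p^{ri}[n-i]_r,\,q'-1)}=\frac{q'-1}{\gcd([n-i]_r,\,q'-1)}=d_i.$$
Because $\alpha\in\mathbb{F}_{q'}^*$ we have $a_i^{-1}b_i=N_{n-i}^{\sigma}(\sigma^i(\alpha))\in\langle N_{n-i}^{\sigma}(\sigma^i(\zeta))\rangle$, whence $(a_i^{-1}b_i)^{d_i}=1$.

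For~(3), set $\beta:=\sigma^i(\alpha)\in\mathbb{F}_{q'}^*$; the relation $a_i N_{n-i}^{\sigma}(\beta)=b_i$ is exactly the condition needed for $\varphi_i\colon f(x)\mapsto f(\beta x)$ to be a well-defined $\Fq$-vector-space isomorphism $\mathbb{F}_q[x;\sigma]/\langle x^{n-i}-b_i\rangle\to\mathbb{F}_q[x;\sigma]/\langle x^{n-i}-a_i\rangle$ respecting multiplication, by the same kernel computation and morphism verification as in the constacyclic case \cite[Theorem~6]{Ouazzou2025} (a degenerate instance of the trinomial computation in Theorem~\ref{Th.1}). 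The crux of the rank version is that $\varphi_i$ sends $x^j\mapsto N_j^{\sigma}(\beta)\,x^j$ with every scaling factor $N_j^{\sigma}(\beta)=\beta^{[j]_r}\in\mathbb{F}_{q'}^*$; writing a codeword over an $\mathbb{F}_{q'}$-basis of $\Fq$ as a matrix $M$, the map $\varphi_i$ acts as $M\mapsto MD$ for an invertible diagonal matrix $D$ over $\mathbb{F}_{q'}$, so $\rank_{\mathbb{F}_{q'}}$ is unchanged and $\varphi_i$ preserves the rank weight. Thus $\varphi_i$ is an $\Fq$-morphism isometry for the rank metric, and statement~(4) is the coordinate-space restatement of~(3).

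The main obstacle is the rank-preservation argument: one must check that multiplying each coordinate by a \emph{nonzero element of $\mathbb{F}_{q'}$} leaves the $\mathbb{F}_{q'}$-dimension of the coordinate span invariant (the diagonal-scaling observation above), and that this is precisely why the scalar $\alpha$ must be forced into $\mathbb{F}_{q'}$ rather than merely $\Fq$. The algebraic morphism identity itself carries over verbatim from the Hamming computation and presents no new difficulty.
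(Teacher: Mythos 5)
Your proof is correct and follows essentially the same route as the paper's: statement (1) from the product characterization in Theorem~\ref{Th.RankEquiv}, statement (2) from the order of $N_{n-i}^{\sigma}(\sigma^i(\zeta))=\zeta^{p^{ri}[n-i]_r}$ in $\mathbb{F}_{q'}^*$, and statements (3)--(4) by specializing the trinomial morphism computation to the constacyclic case. The only difference is that you spell out the rank-preservation step (coordinatewise scaling by elements of $\mathbb{F}_{q'}^*$ fixes the $\mathbb{F}_{q'}$-span), which the paper leaves to a citation of \cite[Theorem 6]{Ouazzou2025}; you also correctly keep all order computations over $\mathbb{F}_{q'}$, where the paper's appendix writes $q-1$ and $\xi$ in places it should write $q'-1$ and $\zeta$.
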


\begin{remark}\label{ConstacyclicFq_sigma}
In the case where we calculate the rank over the fixed subfield $ \mathbb{F}_q^{\sigma} = \mathbb{F}_{q_0}$ (i.e., we take $\alpha \in \Fq^{\sigma}$ such that $ \varphi_{\alpha}(x)= \alpha x$ is an $\Fq$-morphism isometry with respect to the rank metric (Definition \ref{DefFqSigma})), we have $ N_n(\sigma^i(\alpha))= \alpha^n$ for any $i$, since $ \alpha$ will be fixed by any power of $\sigma.$ In the case described here we simply get that $[n-i]_r$ is equal to $n-i.$ For more details see Appendix Proposition \ref{PropFq_sigma}.  
\end{remark}

\subsection{Generalization to skew polycyclic codes}
 In this subsection we generalize the results on {Hamming and rank}  equivalence to general skew  polycyclic codes. We start with the generalized definition of {Hamming} equivalence.

 
 
 \begin{definition}
Let $ \vec{a} = (a_0, a_1, \ldots, a_{n-1}) $ and $ \vec{b} = (b_0, b_1, \ldots, b_{n-1}) $ be elements in $\mathbb{F}_q^n .$
We say that $ \vec{a} $ and $ \vec{b} $ are {Hamming $ (n,\sigma)$-equivalent}, and we denote this by 
$$ \vec{a} \sim_{(n,\sigma)} \vec{b}, $$ 
if there exists \( \alpha \in \mathbb{F}_q^* \) such that the following map

  \begin{equation}
  	\begin{array}{cccc}
  	\varphi_{\alpha}:& \mathbb{F}_q[x;\sigma] /\langle x^n-\vec{b}(x) \rangle  &\longrightarrow & \mathbb{F}_q[x;\sigma] /\langle x^n-\vec{a}(x) \rangle, \\ 
  	& f(x) & \longmapsto &  f(\alpha x),
  	\end{array}
  	\end{equation} 
is an $\mathbb{F}_q$-morphism isometry with respect to the  Hamming distance.    Moreover, we can easily verify that ``$ \sim_{(n,\sigma)}$'' is an equivalence relation.
 \end{definition}

\begin{remark}
 \begin{enumerate}
     \item  If $ \vec{a} = (\lambda, 0, \ldots, 0) $ and $ \vec{b} = (\mu, 0, \ldots, 0) $, then we recover the case of $(n,\sigma)$-equivalence for skew constacyclic codes studied in \cite{Ouazzou2025}.
     \item  If $ \vec{a} = (a_0, 0, \ldots,a_{\ell},0,\ldots, 0) $ and  $ \vec{b} = (b_0, 0, \ldots,b_{\ell},0,\ldots, 0) $, then we recover the case of $(n,\ell,\sigma)$-equivalence for  skew  $(\ell,\sigma)$-trinomial  codes studied in Section 3.
     \item If $\sigma=\id$, we recover  the case of $n$-equivalence for   polycyclic   codes studied in \cite{Equiv2025}.
     \end{enumerate}
 \end{remark}

 In the following lemma, we show that this notion of equivalence automatically implies that the vectors $\vec{a}$ and $\vec{b}$ have zero entries in exactly the same positions. This implies that skew  $(\ell,\sigma)$-trinomial code families can only be equivalent to other skew $(\ell,\sigma)$-trinomial code families.
 
 \begin{lemma}
    Let $ \vec{a} = (a_0, a_1, \ldots, a_{n-1}) $ and $ \vec{b} = (b_0, b_1, \ldots, b_{n-1}) $ be elements in $\mathbb{F}_q^n $ such that  $ \vec{a} \sim_{(n,\sigma)}\vec{b} $. Then $ a_i \neq 0 $ if and only if $ b_i \neq 0 $, for any $ 0\leq  i\leq n-1,$  and so $\vec{a}$ and $\vec{b}$ have the same Hamming weight.
 \end{lemma}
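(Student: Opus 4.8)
The plan is to exploit the defining property of an $\Fq$-morphism isometry $\varphi_\alpha$ by computing the image of $x^n$ in two different ways, exactly as in the proof of Theorem \ref{Th.1}. Since $\vec{a} \sim_{(n,\sigma)} \vec{b}$, there is some $\alpha \in \Fq^*$ for which the map $\varphi_\alpha : \Fq[x;\sigma]/\langle x^n-\vec{b}(x)\rangle \to \Fq[x;\sigma]/\langle x^n-\vec{a}(x)\rangle$, $f(x)\mapsto f(\alpha x)$, is a well-defined $\Fq$-vector space isomorphism. First I would record the elementary formula $\varphi_\alpha(x^i) = (\alpha x)^i = N_i^{\sigma}(\alpha)\,x^i$, which follows from the commutation rule $xa=\sigma(a)x$ together with the description of $N_i^{\sigma}$ in \eqref{e1}.

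Next, I would use that well-definedness of $\varphi_\alpha$ forces the defining relation $x^n = \vec{b}(x)$ of the source quotient to be sent to zero in the target. Applying the lift $f(x)\mapsto f(\alpha x)$ to $x^n-\vec{b}(x)$ and reducing modulo $x^n-\vec{a}(x)$, so that $x^n \equiv \vec{a}(x)=\dsum{i=0}{n-1} a_i x^i$, yields
\[
0 = (\alpha x)^n - \vec{b}(\alpha x) = N_n^{\sigma}(\alpha)\dsum{i=0}{n-1} a_i x^i \;-\; \dsum{i=0}{n-1} b_i N_i^{\sigma}(\alpha)\, x^i
\]
in $\Fq[x;\sigma]/\langle x^n-\vec{a}(x)\rangle$. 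Because $\{1,x,\ldots,x^{n-1}\}$ is a free $\Fq$-basis of this quotient and both sides have degree $<n$, I may compare coefficients term by term and obtain
\[
b_i\, N_i^{\sigma}(\alpha) = N_n^{\sigma}(\alpha)\, a_i \qquad \text{for all } i=0,1,\ldots,n-1.
\]

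The conclusion is then immediate: by \eqref{Eqsigma} we have $N_i^{\sigma}(\alpha)=\alpha^{[i]_r}\neq 0$ and $N_n^{\sigma}(\alpha)=\alpha^{[n]_r}\neq 0$, since $\alpha\in\Fq^*$. Hence the scalar factors on both sides are units, and therefore $a_i=0$ if and only if $b_i=0$ for every $i$. Consequently $\vec{a}$ and $\vec{b}$ have identical supports, and in particular $w_H(\vec{a})=w_H(\vec{b})$.

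I expect no serious obstacle here; the argument is a direct specialization of the coefficient-comparison step already carried out for trinomials in Theorem \ref{Th.1}. The only point demanding a little care is the justification that well-definedness of $\varphi_\alpha$ alone forces the single relation $x^n-\vec{b}(x)$ to vanish in the target, and that the monomial basis legitimately lets us read off coefficients. Notably, this step uses neither the multiplicativity of $\varphi_\alpha$ nor the Hamming-isometry property, so the full strength of the $\Fq$-morphism isometry hypothesis is not actually required for this lemma.
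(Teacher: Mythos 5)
Your proof is correct and follows essentially the same route as the paper: the paper likewise invokes the coefficient comparison from the (1) $\Rightarrow$ (2) step of Theorem \ref{Th.1} to get $b_i = N_{n-i}^{\sigma}(\alpha)\,a_i$ for all $i$, and concludes from the non-vanishing of the norm factors. Your closing observation that only the well-definedness of $\varphi_\alpha$ (rather than its full isometry property) is used is a fair aside, but it does not change the argument.
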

\begin{proof}
Suppose that $ \vec{a} \sim_n \vec{b},$ then there is $\alpha \in \Fq^*$ such that $ \varphi_{\alpha}$ is an $\mathbb{F}_q$-morphism isometry between $ \Fq[x;\sigma] /\langle x^n-\vec{b}(x) \rangle $  and $ \Fq[x;\sigma] /\langle x^n-\vec{a}(x) \rangle$. Then, as in the proof of [Theorem \ref{Th.1}, (1) $\Rightarrow$ (2)], we obtain that 
$$ b_i = N_{n-i}^{\sigma}(\alpha) a_i, \quad \forall i=0, \ldots, n-1.$$
Hence the result holds.
    
\end{proof}

We now generalize Theorem \ref{Th.1} to the general skew polycyclic case: 

\begin{theorem}\label{Th_general}
Let $ \vec{a}=(a_0,a_1,\ldots, a_{n-1}), \vec{b}=(b_0,b_1,\ldots, b_{n-1})\in \mathbb{F}_q^{n}$ have non-zero entries in the same $m$ positions, i.e., $a_{i_j}$ and $b_{i_j}$ are non-zero  for $0\leq i_0<\dots<i_{m-1}\leq n-1$. Moreover, let $\xi $ be a primitive element of $\Fq$. Then the following statements are equivalent:
\begin{enumerate}
\item $ \vec{a} \sim_{(n,\sigma)} \vec{b}$,  i.e., $ \vec{a} $ and  $ \vec{b}$ are  Hamming $ (n,\sigma)$-equivalent. 

\item   There exists $\alpha\in \Fq^*$ such that 
 $$  a_{i_j} N^{\sigma}_{n-i_j}(\sigma^{i_j}(\alpha)) = b_{i_j}, \   \text{for  each $ j \in \{0,1,\ldots,m-1\}$,}$$   where the  $ a_{i_j}$'s and $  b_{i_j}$'s are the non-zero components of $\vec{a}$ and $\vec{b}.$
 
 \item  There exists $\alpha\in \Fq^*$ such that 
 $$ (a_{i_0}, a_{i_1},\ldots, a_{i_{m-1}}) \left( N^{\sigma}_{n-i_0}(\sigma^{i_0}(\alpha)), N^{\sigma}_{n-i_1}(\sigma^{i_1}(\alpha)) , \ldots, N^{\sigma}_{n-i_{m-1}}(\sigma^{i_{m-1}}(\alpha))\right)=  ( b_{i_0}, b_{i_1},\ldots, b_{i_{m-1}}),$$
 where $m$ is the number of the non-zero entries  of $\vec{a}.$
\item $ (a_{i_0}, a_{i_1},\ldots, a_{i_{m-1}})^{-1}\star ( b_{i_0}, b_{i_1},\ldots, b_{i_{m-1}}) \in H,$ where $H$ is the cyclic subgroup of $ (\Fq^*)^m$ generated by 
$  \left( N^{\sigma}_{n-i_0}(\sigma^{i_0}(\xi)), N^{\sigma}_{n-i_1}(\sigma^{i_1}(\xi)) , \ldots, N^{\sigma}_{n-i_{m-1}}(\sigma^{i_{m-1}}(\xi))\right),$  and   $j=0,\ldots,m-1.$
\end{enumerate}
In particular,the number of $(n,\sigma)$-equivalence classes is
$$ N_{(n,\sigma)}=\dfrac{ (q-1)^m}{ \lcm_{ j\in \{i_0,i_1,\ldots, i_{m-1} \} }\left(\frac{q-1}{ \gcd([n-j]_r,q-1)}\right)}. $$
\end{theorem}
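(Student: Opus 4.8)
This theorem generalizes Theorem \ref{Th.1} from trinomials (two nonzero entries, at positions $0$ and $\ell$) to arbitrary skew polycyclic codes with $m$ nonzero entries at positions $i_0 < \dots < i_{m-1}$. The proof should follow exactly the same cyclic structure of implications $(1)\Rightarrow(2)\Rightarrow(3)\Rightarrow(4)\Rightarrow(1)$, with the single-coordinate conditions $a_0 N_n^\sigma(\alpha) = b_0$ and $a_\ell N_{n-\ell}^\sigma(\sigma^\ell(\alpha)) = b_\ell$ replaced by the uniform family of conditions $a_{i_j} N_{n-i_j}^\sigma(\sigma^{i_j}(\alpha)) = b_{i_j}$ for $j = 0,\dots,m-1$. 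The count of equivalence classes then follows formally from the equivalence of $(1)$ and $(4)$ as the order of the quotient group $(\Fq^*)^m / H$.

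First I would prove $(1)\Rightarrow(2)$. Assuming $\varphi_\alpha$ is an $\Fq$-morphism isometry, I compute $\varphi_\alpha(x^n)$ in two ways. Since $\varphi_\alpha(x^i) = N_i^\sigma(\alpha)x^i$ and $\varphi_\alpha$ kills $x^n - \vec b(x)$ in the codomain, on one hand $\varphi_\alpha(x^n) = \sum_j b_{i_j} N_{i_j}^\sigma(\alpha)x^{i_j}$, while on the other hand $\varphi_\alpha(x^n) = N_n^\sigma(\alpha)x^n \equiv N_n^\sigma(\alpha)\sum_j a_{i_j}x^{i_j}$ modulo $x^n - \vec a(x)$. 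Matching coefficients of each $x^{i_j}$ gives $a_{i_j} N_n^\sigma(\alpha) = N_{i_j}^\sigma(\alpha) b_{i_j}$, and then the norm identity $N_n^\sigma(\alpha) N_{i_j}^\sigma(\alpha^{-1}) = N_{n-i_j}^\sigma(\sigma^{i_j}(\alpha))$ (the exact computation carried out in \eqref{Eq_formulae}, now applied at each position $i_j$ instead of just at $0$ and $\ell$) yields statement $(2)$. The implication $(2)\Rightarrow(3)$ is a rewriting in Schur-product form, and $(3)\Rightarrow(4)$ follows by setting $\alpha = \xi^h$ and observing that the vector of norms is the $h$-th power of the generator of $H$.

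The only implication requiring genuine work is $(4)\Rightarrow(1)$, where one must verify that the candidate map $\varphi_\alpha$ really is an $\Fq$-morphism isometry. As in Theorem \ref{Th.1}, I would first define the lift $\tilde\varphi_\alpha : \Fq[x;\sigma] \to \Fq[x;\sigma]/\langle x^n - \vec a(x)\rangle$, show it is surjective, and compute $\tilde\varphi_\alpha(x^n - \vec b(x)) = N_n^\sigma(\alpha)(x^n - \vec a(x)) \equiv 0$ using the per-coordinate relations, to conclude $\ker\tilde\varphi_\alpha = \langle x^n - \vec b(x)\rangle$; the first isomorphism theorem then delivers the vector-space isomorphism, and weight preservation is immediate since $f(x)$ and $f(\alpha x)$ have the same support. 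The main obstacle is the multiplicativity $\varphi_\alpha(f g) = \varphi_\alpha(f)\varphi_\alpha(g)$: the trinomial proof expands both sides via the reduction $x^{j+n} \equiv \sum_k \sigma^j(a_k)x^{j+k}$ and invokes a long norm-cocycle identity. In the general case the reduction becomes $x^{j+n} \equiv \sum_j \sigma^{j}(a_{i_j})x^{j+i_j}$ summed over all $m$ nonzero positions, so the bookkeeping is heavier, but the governing identity is the same $N$-cocycle relation $N_i^\sigma(\alpha)\sigma^i(N_{n-i+j}^\sigma(\alpha))\sigma^j(a_{i_k}) = N_{i_k+j}^\sigma(\alpha)\sigma^j(b_{i_k})$ established in the trinomial proof, now holding uniformly for each active index $i_k$. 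I expect this to go through verbatim position-by-position, which is precisely why the authors can state the general result; I would either reproduce the coefficient comparison or simply remark that it is identical to the trinomial computation applied at every nonzero coordinate.

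Finally, for the class count I would note that $(1)\!\Leftrightarrow\!(4)$ identifies the equivalence classes with the cosets of $H$ in $(\Fq^*)^m$, so $N_{(n,\sigma)} = |(\Fq^*)^m|/|H| = (q-1)^m/|H|$. The order of the cyclic group $H$ generated by $(N_{n-i_0}^\sigma(\sigma^{i_0}(\xi)), \dots)$ equals the least common multiple of the orders of its coordinate entries; since $N_{n-i_j}^\sigma(\sigma^{i_j}(\xi)) = \xi^{p^{r i_j}[n-i_j]_r}$ has order $(q-1)/\gcd(p^{r i_j}[n-i_j]_r, q-1) = (q-1)/\gcd([n-i_j]_r, q-1)$ (using $\gcd(p^{r i_j}, q-1) = 1$), this gives exactly the claimed $\operatorname{lcm}$ denominator, completing the count.
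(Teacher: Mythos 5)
Your proposal is correct and follows essentially the same route as the paper: the same cyclic chain $(1)\Rightarrow(2)\Rightarrow(3)\Rightarrow(4)\Rightarrow(1)$, the same two-way evaluation of $\varphi_\alpha(x^n)$ combined with the norm identity $N_n^{\sigma}(\alpha)N_{i}^{\sigma}(\alpha^{-1})=N_{n-i}^{\sigma}(\sigma^{i}(\alpha))$, the same deferral of the multiplicativity check to the cocycle identity from the trinomial case applied at each nonzero position, and the same computation of $|H|$ as an $\lcm$ of coordinate orders using $\gcd(p^{ri_j},q-1)=1$. The paper likewise does not re-derive the $\Fq$-morphism verification in full but points back to Theorem \ref{Th.1}, exactly as you propose.
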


\begin{proof}\;
\begin{itemize}
    \item[(1) $\Rightarrow$ (2):] 
    Suppose that $ \vec{a} \sim_{(n,\sigma)} \vec{b} $, then there is $\alpha \in \mathbb{F}_q^*$ such that

$$\varphi_{\alpha} : \Fq[x;\sigma] /\langle x^n - \vec{b}(x) \rangle \to \Fq[x;\sigma] /\langle x^n - \vec{a}(x) \rangle, \quad f(x) \mapsto f(\alpha x) $$
is an $\mathbb{F}_q$-vector space isometry.  It follows that

$$ 
\varphi_{\alpha}(x^{k}) = (\alpha x)^k = N^{\sigma}_k(\alpha) x^{k} \mod (x^n - \vec{a}(x)), \quad \forall  \ k = 0, 1, \ldots, n-1.
$$

As $\varphi_{\alpha}$ is an $\mathbb{F}_q$-algebra isometry and $ \varphi(x^n - \vec{b}(x)) = 0 \mod (x^n - \vec{a}(x))$, we have
\begin{equation}
\varphi_{\alpha}(x^n) =\varphi_{\alpha} (\vec{b}(x)) = b_0 + N_1^{\sigma}(\alpha) b_1 x + \ldots + N_{n-1}^{\sigma}(\alpha)b_{n-1} x^{n-1} \mod (x^n - \vec{a}(x)).
\end{equation}

On the one hand,
\begin{equation}
\varphi(x^n) = N^{\sigma}_n(\alpha) x^n = N^{\sigma}_n(\alpha) ( a_0 + a_1 x + \ldots + a_{n-1} x^{n-1}) \mod (x^n - \vec{a}(x)).
\end{equation}

Comparing term by term, we deduce that   
$$ a_i N_{n}^{\sigma}(\alpha) = N_i^{\sigma}(\alpha)b_i, \ \text{ for any }  i \in \{0, 1, \ldots, n-1\},$$ 
and so 
$$ a_i N_{n}^{\sigma}(\alpha)  N_i^{\sigma}(\alpha^{-1}) =b_i, \ \text{ for any }  i \in \{0, 1, \ldots, n-1\}.$$ 
On the other hand, for any $  i \in \{0, 1, \ldots, n-1\} , $ we have 
$$ \begin{array}{rl}
    N_n^{\sigma}(\alpha) N_{i}^{\sigma}(\alpha^{-1})  & =  N_{i+(n-i)}^{\sigma}(\alpha) N_{i}^{\sigma}(\alpha^{-1})\\
     &=   \sigma^{i}(N_{n-i}^{\sigma}(\alpha)) N^{\sigma}_{i}(\alpha) N_{i}^{\sigma}(\alpha^{-1})  \\
     &=  \sigma^{i}(N^{\sigma}_{n-i}(\alpha))\\
     & = N^{\sigma}_{n-i}(\sigma^{i}(\alpha)).
\end{array}
$$

Hence,
$$ a_i N_{n-i}^{\sigma}( \sigma^i(\alpha))  =b_i, \ \text{ for any }  i \in \{0, 1, \ldots, n-1\}.$$ 

As  $ a_{i_j}$'s and $  b_{i_j}$'s are the non-zeros components of $\vec{a}$ and $\vec{b},$  we have
$$ a_{i_j} N_{n-{i_j}}^{\sigma}(\sigma^{i_j}(\alpha)) = b_{i_j},\  \ j=0,\ldots,m-1.$$

 \item[(2) $\Rightarrow$ (3):]  is immediate. 

 \item[(3) $\Rightarrow$ (4):]
Suppose that there is $\alpha\in \Fq^{*}$ such that 
$$  (b_{i_0}, b_{i_1},\ldots, b_{i_{m-1}} )=  \left( N_{n-i_0}^{\sigma}(\sigma^{i_0}(\alpha)), 
N^{\sigma}_{n-i_1}(\sigma^{i_1}(\alpha)), \ldots, N^{\sigma}_{n-i_{m-1}}(\sigma^{i_{m-1}}(\alpha))\right)\star ( a_{i_0},  a_{i_1}, \ldots, a_{i_{m-1}}).$$ 
For $\alpha= \xi^h,$ we obtain 
$$  ( a_{i_0}, a_{i_1},\ldots, a_{i_{m-1}})^{-1}\star (b_{i_0}, b_{i_1},\ldots, b_{i_{m-1}} ) =   \left( N_{n-i_0}^{\sigma}(\sigma^{i_0}(\xi)), 
N^{\sigma}_{n-i_1}(\sigma^{i_1}(\xi)), \ldots, N^{\sigma}_{n-i_{m-1}}(\sigma^{i_{m-1}}(\xi))\right)^h .$$
It follows that $  ( a_{i_0},  a_{i_1}, \ldots, a_{i_{m-1}})^{-1}\star ( b_{i_0},  b_{i_1}, \ldots, b_{i_{m-1}}) $ belongs to the cyclic subgroup  $H$ of $ (\Fq^{*})^m$ generated by $ \left( N_{n-i_0}^{\sigma}(\sigma^{i_0}(\xi)), 
N^{\sigma}_{n-i_1}(\sigma^{i_1}(\xi)), \ldots, N^{\sigma}_{n-i_{m-1}}(\sigma^{i_{m-1}}(\xi))\right).$
   
    \item[(4) $\Rightarrow$ (1):]  
     Suppose that $ (a^{n-{i_0}}, a^{n-{i_1}} , \ldots, a^{n-i_{m-1}})^{-1}\star (b^{n-{i_0}}, b^{n-{i_1}} , \ldots, b^{n-i_{m-1}}) $ is an element of the cyclic subgroup  $H$ of $ (\Fq^{*})^m$ generated by
$$ \left( N_{n-i_0}^{\sigma}(\sigma^{i_0}(\xi)), 
N^{\sigma}_{n-i_1}(\sigma^{i_1}(\xi)), \ldots, N^{\sigma}_{n-i_{m-1}}(\sigma^{i_{m-1}}(\xi))\right) .$$
Then there exists an integer $ h$ such that 
     $$  ( a_{i_0},  a_{i_1}, \ldots, a_{i_{m-1}})^{-1}\star ( b_{i_0},  b_{i_1}, \ldots, b_{i_{m-1}}) =  \left( N_{n-i_0}^{\sigma}(\sigma^{i_0}(\xi)), 
N^{\sigma}_{n-i_1}(\sigma^{i_1}(\xi)), \ldots, N^{\sigma}_{n-i_{m-1}}(\sigma^{i_{m-1}}(\xi))\right)^h $$
    For $ \alpha= \xi^{h}, $ we obtain  that $  a_j N_{n-j}^{\sigma}(\sigma^j(\alpha))= b_j, $ for any $ j\in \{ i_0,i_1,\ldots, i_{m-1}\}.$
As in the proof of Theorem \ref{Th.1}, we verify that  $ \varphi_{\alpha}, $ defined as  
  \begin{equation}
        \begin{array}{cccc}
        \varphi_{\alpha} : & \Fq[x;\sigma]/\langle x^n - \vec{b}(x) \rangle,  & \longrightarrow & \Fq[x;\sigma] /\langle x^n - \vec{a}(x) \rangle, \\ 
        & f(x) & \longmapsto & f(\alpha x),
        \end{array}
    \end{equation} 
    is an $\Fq$-vector space  isometry with respect to the Hamming distance.
To verify that $\varphi_{\alpha}$ is an $\mathbb{F}_q$-morphism, the key idea is to use the fact that  
$$ 
a_{i_k} N^{\sigma}_{n-i_k}(\sigma^{i_k}(\alpha)) = b_{i_k}, \quad \forall k \in \{0,1,\ldots,m-1\}, \quad \text{and} \quad N_{i+j}^\sigma(\alpha) = N_i^\sigma(\alpha)\sigma^i(N_j^\sigma(\alpha)),
$$  
to  obtain 
$$ 
N_i^\sigma(\alpha)\sigma^i(N_{n-i+j}^\sigma(\alpha)) \sigma^j(a_{k}) = N^{\sigma}_{k+j}(\alpha) \sigma^{j}(b_{k}).
$$  

From this, we can conclude that  
$$ 
\varphi_{\alpha}(f(x)g(x)) = \varphi_{\alpha}(f(x))\varphi_{\alpha}(g(x)).
$$  

We gave full details of the computation in the proof of Theorem \ref{Th.1},  so we do  not repeat the same process here.
\end{itemize}

\noindent By the equivalence between (1) and (4), we deduce that the number of $(n,\sigma)$-equivalence classes  on $(\Fq^*)^m$ corresponds to the order of the group $ (\Fq^*)^m/H,$ which equals 
$$ N_{(n,\sigma)}=\dfrac{ (q-1)^m}{ \lcm_{ j\in \{i_0,i_1,\ldots, i_{m-1} \} }\left(\frac{q-1}{ \gcd(p^{r j}[n-j]_r,q-1)}\right)}=\dfrac{ (q-1)^m}{ \lcm_{ j\in \{i_0,i_1,\ldots, i_{m-1} \} }\left(\frac{q-1}{ \gcd([n-j]_r,q-1)}\right)}. $$

\end{proof}

Similarly to the case of skew $(\ell,\sigma)$-trinomial codes, Theorem \ref{Th_general} implies the following results regarding the equivalence of skew polycyclic codes. The proofs are analogous to the skew $(\ell,\sigma)$-trinomial case. 

\begin{corollary}\label{Equiv_General2}
As before let $ \vec{a}=(a_0,a_1,\ldots, a_{n-1})$ and $ \vec{b}=(b_0,b_1,\ldots, b_{n-1})$ be elements of $\mathbb{F}_q^{n}$  
of the same weight $m$  and denote by $ a_{i_j}$ and $  b_{i_j}$  the non-zero components of $\vec{a}$ and $\vec{b}$. 
\begin{enumerate}
    \item     The class of skew polycyclic  codes associated with $ x^n-\dsum{j=0}{m-1} a_{i_j} x^{i_j} $ is $(n,\sigma)$-equivalent (Hamming equivalence) to the class of  skew polycyclic  codes associated with the vector  $x^n-\dsum{j=0}{m-1}  x^{i_j}$  if and only if there exists $\alpha \in \Fq^*$ such that
    $$ ( a_{i_0}, a_{i_1},\ldots, a_{i_{m-1}}) 
    \left( N_{n-i_0}^{\sigma}(\sigma^{i_0}(\alpha)),  N^{\sigma}_{n-i_1}(\sigma^{i_1}(\alpha)), \ldots, N^{\sigma}_{n-i_{m-1}}(\sigma^{i_{m-1}}(\alpha))\right) = (1, 1,\ldots, 1 ).$$

     \item    Let  $ d:= \gcd_{ j\in \{i_0,i_1,\ldots, i_{m-1} \} }\left(\frac{q-1}{ \gcd([n-j]_r,q-1)}\right) $, then the  class of  skew polycyclic  codes associated with  $ x^n-\dsum{j=0}{m-1} a_{i_j} x^{i_j} $ is $(n,\sigma)$equivalent (Hamming equivalence) to the class of  skew polycyclic  codes associated with the polynomial  $ x^n-\dsum{j=1}{m-1} \xi^{k_j} x^{i_j}- \xi^{k_0+h[n]_r} ,$ with $k_j=0,1,\ldots, \gcd(n-i_j,q-1) , \ \ h=0,\ldots, d-1.$ 
\end{enumerate}
\end{corollary}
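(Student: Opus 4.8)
The plan is to derive both statements directly from Theorem~\ref{Th_general}, exactly as Corollary~\ref{Equiv_1} and Theorem~\ref{Equiv_2} were derived from Theorem~\ref{Th.1}. For the first statement I would simply specialize Theorem~\ref{Th_general} to the target vector $\vec{b}(x)=\sum_{j=0}^{m-1}x^{i_j}$, i.e.\ $b_{i_j}=1$ for every $j$. The equivalence $(1)\Leftrightarrow(3)$ of Theorem~\ref{Th_general} then reads precisely as the claimed criterion, since setting $b_{i_j}=1$ turns the Schur-product identity there into $(a_{i_0},\dots,a_{i_{m-1}})\star\bigl(N^{\sigma}_{n-i_0}(\sigma^{i_0}(\alpha)),\dots,N^{\sigma}_{n-i_{m-1}}(\sigma^{i_{m-1}}(\alpha))\bigr)=(1,\dots,1)$. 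No ingredient beyond Theorem~\ref{Th_general} is required; this is the verbatim analogue of Corollary~\ref{Equiv_1} with the single exponent $\ell$ replaced by the full support $\{i_0,\dots,i_{m-1}\}$ of $\vec{a}$.

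For the second statement I would follow the transversal argument used in Theorem~\ref{Equiv_2}. By $(1)\Leftrightarrow(4)$ of Theorem~\ref{Th_general}, two families are $(n,\sigma)$-equivalent if and only if the Schur ratio of their coefficient tuples lies in the cyclic group $H\le(\Fq^{*})^{m}$ generated by $\mathbf{g}=\bigl(N^{\sigma}_{n-i_0}(\sigma^{i_0}(\xi)),\dots,N^{\sigma}_{n-i_{m-1}}(\sigma^{i_{m-1}}(\xi))\bigr)$, so the equivalence classes are exactly the cosets of $H$. The computation $N^{\sigma}_{n-i_j}(\sigma^{i_j}(\xi))=\xi^{p^{r i_j}[n-i_j]_r}$ together with $\gcd(p,q-1)=1$ shows that the $j$-th component of $\mathbf{g}$ has order $e_j=(q-1)/c_j$, where $c_j:=\gcd([n-i_j]_r,q-1)$, and that its coordinate projection is $\langle\xi^{c_j}\rangle$. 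I would then build a system of coset representatives by first choosing residues $\xi^{k_j}$ with $0\le k_j<c_j$ in the coordinates $j\ge 1$, and afterwards absorbing the residual entanglement in the $i_0$-coordinate through the shifts $\xi^{h[n]_r}$ with $0\le h<d$ and $d=\gcd_j e_j$. Completeness of the list would be tested against the class count $N_{(n,\sigma)}=(q-1)^{m}/\lcm_j e_j$ supplied by Theorem~\ref{Th_general}.

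The step I expect to be the main obstacle is precisely this completeness check. For $m=2$ it works cleanly because $H$ is generated by a single pair and the two-element identity $\gcd(e_0,e_1)\,\lcm(e_0,e_1)=e_0e_1$ makes the transversal of size $d\,c_0c_1$ match $N_{(n,\sigma)}$ on the nose; this is exactly what powers Theorem~\ref{Equiv_2}. For $m\ge 3$, however, the cyclic group $H$ need not split as the product $\prod_j\langle\xi^{c_j}\rangle$ of its coordinate projections, and $\gcd\cdot\lcm$ no longer equals the full product: one only has $d\prod_j c_j\le N_{(n,\sigma)}$, with strict inequality in general, so a representative family that grants free ``$h$-type'' freedom in a single coordinate can fail to meet every coset. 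The technical heart of a correct general proof is therefore the parametrization of the transversal through the Smith normal form of the integer vector $\bigl([n]_r-[i_j]_r\bigr)_j$ taken modulo $q-1$ — equivalently, through the invariant factors of $(\Fq^{*})^{m}/H$ — which in general distributes the entanglement over several coordinates rather than one. Accordingly I would either restrict the statement to the regime in which a single coordinate carries all the entanglement (e.g.\ $m=2$, or all but one of the $e_j$ pairwise coprime), where the displayed representatives are exact, or replace the stated list by one indexed by these invariant factors; constructing and counting that transversal is where the actual work lies.
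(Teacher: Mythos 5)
Your derivation of part (1) — setting $b_{i_j}=1$ in the equivalence $(1)\Leftrightarrow(3)$ of Theorem~\ref{Th_general} — is exactly what the paper intends: it offers no more than the remark that the proofs are ``analogous to the skew $(\ell,\sigma)$-trinomial case'', and for part (1) that analogy with Corollary~\ref{Equiv_1} does go through verbatim, so there is nothing to add there.

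For part (2), however, your worry is not a technicality: it points at a genuine gap, and the gap sits in the statement itself, not just in the write-up. The transversal count in Theorem~\ref{Equiv_2} works because, for the component orders $e_j=(q-1)/\gcd([n-i_j]_r,q-1)$, one has $\gcd(e_0,e_\ell)\,\lcm(e_0,e_\ell)=e_0e_\ell$, so the list of size $d\,d_0d_\ell$ matches $N_{(n,\ell,\sigma)}$ exactly. For $m\ge 3$ one only has $\gcd_j(e_j)\,\lcm_j(e_j)\le\prod_j e_j$, generally strictly, so the proposed list of size $d\prod_j c_j$ can be strictly smaller than the number $N_{(n,\sigma)}=(q-1)^m/\lcm_j(e_j)$ of cosets of $H$; since equivalence classes partition $(\mathbb{F}_q^*)^m$, a list that is too short cannot meet every class. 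A concrete instance: $q=9$, $\sigma$ the Frobenius, $n=14$, support $\{0,4,8\}$, where $e_0=e_1=e_2=2$, giving $256$ classes but only $128$ listed representatives. Your diagnosis — that $\bigl(\prod_j\langle\xi^{c_j}\rangle\bigr)/H$ need not be cyclic, so the residual entanglement cannot always be absorbed by $h$-shifts in a single coordinate — is exactly right, and your two remedies (restrict to regimes where one coordinate carries all the entanglement, e.g.\ $m=2$ or all but one of the $e_j$ pairwise coprime, or re-index the transversal by the invariant factors of $(\mathbb{F}_q^*)^m/H$) are the correct ways to repair the corollary. The paper's one-line ``analogous'' proof silently imports the two-argument $\gcd\cdot\lcm$ identity and does not confront this.
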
 

\hspace{0.3cm}

Now we return to the case of rank equivalence, and we start by the following definition. 
\begin{definition} Let $ \vec{a} = (a_0, a_1, \ldots, a_{n-1}) $ and $ \vec{b} = (b_0, b_1, \ldots, b_{n-1}) $ be elements in $\mathbb{F}_q^n .$ We say that $\vec{a}$ and $\vec{b}$ are {$( n,\sigma, \mathbb{F}_{q'})$-equivalent} in $\mathbb{F}_q^* \times \mathbb{F}_q^*$ with respect to the rank metric, and we  denote
$$ \vec{a} \sim_{( n,\sigma, \mathbb{F}_{q'})} \vec{b},$$
if there exists an $\alpha \in \mathbb{F}_{q'}^* $ such that the map
  \begin{equation}
  	\begin{array}{cccc}
  	\varphi_{\alpha}:& \mathbb{F}_q[x;\sigma] /\langle x^n-\vec{b}(x) \rangle  &\longrightarrow & \mathbb{F}_q[x;\sigma] /\langle x^n- x^n-\vec{a}(x) \rangle, \\ 
  	& f(x) & \longmapsto &  f(\alpha x),
  	\end{array}
  	\end{equation} 
is an  $\mathbb{F}_q$-morphism isometry with respect to the rank metric. 
\end{definition}

\begin{theorem}
Let $ \vec{a}=(a_0,a_1,\ldots, a_{n-1}), \vec{b}=(b_0,b_1,\ldots, b_{n-1})\in \mathbb{F}_q^{n}$ have non-zero entries in the same $m$ positions, i.e., $a_{i_j}$ and $b_{i_j}$ are non-zero  for $0\leq i_0<\dots<i_{m-1}\leq n-1$. Moreover, let $ \mathbb{F}_{q'}$ be a subfield of $\Fq$ and 
$\zeta $ its primitive element. Then the following statements are equivalent:

\begin{enumerate}
    \item $ \vec{a} \sim_{(n,\sigma,\mathbb{F}_{q'})} \vec{b} $ i.e, $\vec{a} $ and $ \vec{b}$ are rank equivalent.
    \item There exists an $\alpha \in \mathbb{F}_{q'}^*$ such that the map

  \begin{equation}
  	\begin{array}{cccc}
  	\varphi_{\alpha}:& \mathbb{F}_q[x;\sigma] /\langle x^n-\vec{b}(x) \rangle  &\longrightarrow & \mathbb{F}_q[x;\sigma] /\langle x^n-\vec{a}(x) \rangle, \\ 
  	& f(x) & \longmapsto &  f(\alpha x),
  	\end{array}
  	\end{equation} 
    
is an  $\mathbb{F}_q$-morphism isometry.

     \item  There exists $\alpha\in \mathbb{F}_{q'}^* $ such that 
     $$ (a_{i_0}, a_{i_1},\ldots, a_{i_{m-1}})\star ( N_{n-i_0}^{\sigma}(\sigma^{i_0}(\alpha)), N_{n-i_1}^{\sigma}(\sigma^{i_1}(\alpha)),\ldots, N_{n-i_{m-1}}^{\sigma}(\sigma^{i_{m-1}}(\alpha)) ) =   ( b_{i_0}, b_{i_1},\ldots, b_{i_{m-1}}).$$
     
\item  $ (a_{i_0}, a_{i_1},\ldots, a_{i_{m-1}})^{-1}\star( b_{i_0}, b_{i_1},\ldots, b_{i_{m-1}}) \in H_{n, \mathbb{F}_{q'}},$ where $H_{n,\mathbb{F}_{q'}}$ is the cyclic subgroup of $\mathbb{F}_{q'}^{* m}$ generated by 
$( N_{n-i_0}^{\sigma}(\sigma^{i_0}(\zeta)), N_{n-i_1}^{\sigma}(\sigma^{i_1}(\zeta)),\ldots, N_{n-i_{m-1}}^{\sigma}(\sigma^{i_{m-1}}(\zeta)) ).$
\end{enumerate}
The equivalence between (1) and (4) implies that the number of $(n,\sigma,\mathbb{F}_{q'})$-equivalence classes is 
$$ 
N_{(n,\sigma,\mathbb{F}_{q'})}= \dfrac{ (q-1)^m}{ \lcm_{ j\in \{i_0,i_1,\ldots, i_{m-1} \} }\left(\frac{q'-1}{ \gcd([n-j]_r,q'-1)}\right)}. $$ 
\end{theorem}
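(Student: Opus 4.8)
The plan is to run the cycle of implications $(1)\Leftrightarrow(2)\Rightarrow(3)\Rightarrow(4)\Rightarrow(1)$, reusing wholesale the algebra already established for the Hamming case in Theorem~\ref{Th_general} and inserting the rank-specific argument only at the one place where the restriction $\alpha\in\mathbb{F}_{q'}^*$ is actually exploited. The equivalence $(1)\Leftrightarrow(2)$ holds by the very definition of $\sim_{(n,\sigma,\mathbb{F}_{q'})}$, since statement $(2)$ merely names the map $\varphi_{\alpha}$ witnessing rank equivalence.

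For $(2)\Rightarrow(3)$ I would repeat verbatim the coefficient comparison from the proof of Theorem~\ref{Th_general}: computing $\varphi_{\alpha}(x^n)$ in the two quotients and matching the coefficient of each $x^{i}$ yields $a_i N_n^{\sigma}(\alpha)N_i^{\sigma}(\alpha^{-1})=b_i$, and the identity $N_n^{\sigma}(\alpha)N_i^{\sigma}(\alpha^{-1})=N_{n-i}^{\sigma}(\sigma^i(\alpha))$ (obtained there via \cite[Proposition 2.1]{Cherchem2016}) gives $a_i N_{n-i}^{\sigma}(\sigma^i(\alpha))=b_i$ for every $i$; restricting to the support positions $i_0,\ldots,i_{m-1}$ is exactly $(3)$. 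For $(3)\Rightarrow(4)$ I write $\alpha=\zeta^h$ (possible since $\alpha\in\mathbb{F}_{q'}^*$ and $\zeta$ generates $\mathbb{F}_{q'}^*$) and read off that $(a_{i_j})^{-1}\star(b_{i_j})$ is the $h$-th power of the stated generator of $H_{n,\mathbb{F}_{q'}}$.

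The implication $(4)\Rightarrow(1)$ is where the rank metric genuinely enters. Given membership in $H_{n,\mathbb{F}_{q'}}$, there is $\alpha=\zeta^h\in\mathbb{F}_{q'}^*$ satisfying $a_{i_j}N_{n-i_j}^{\sigma}(\sigma^{i_j}(\alpha))=b_{i_j}$. The verification that $\varphi_{\alpha}$ is a well-defined $\mathbb{F}_q$-linear isomorphism and an $\mathbb{F}_q$-morphism, i.e.\ $\varphi_{\alpha}(f(x)g(x))=\varphi_{\alpha}(f(x))\varphi_{\alpha}(g(x))$, is identical to the computation carried out in full in the proof of Theorem~\ref{Th.1}, so I would only cite it. The new point is that $\varphi_{\alpha}$ is a rank isometry: since subfields of $\Fq$ of a given order are unique, $\sigma(\mathbb{F}_{q'})=\mathbb{F}_{q'}$, so $\alpha\in\mathbb{F}_{q'}$ forces $\sigma^k(\alpha)\in\mathbb{F}_{q'}$ and hence $N_i^{\sigma}(\alpha)\in\mathbb{F}_{q'}^*$ for every $i$. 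Thus $\varphi_{\alpha}$ sends $(f_0,\ldots,f_{n-1})$ to $(f_0 N_0^{\sigma}(\alpha),\ldots,f_{n-1}N_{n-1}^{\sigma}(\alpha))$, i.e.\ acts as right multiplication by $\diag(N_0^{\sigma}(\alpha),\ldots,N_{n-1}^{\sigma}(\alpha))$ with entries in $\mathbb{F}_{q'}^*$, which preserves the $\mathbb{F}_{q'}$-dimension of the span of the coordinates and therefore the rank weight.

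Finally, the class count follows from reading $(1)\Leftrightarrow(4)$ as the statement that the $\sim_{(n,\sigma,\mathbb{F}_{q'})}$-classes are precisely the cosets of $H_{n,\mathbb{F}_{q'}}$ inside $(\Fq^*)^m$, so their number is $(q-1)^m/|H_{n,\mathbb{F}_{q'}}|$. Here $|H_{n,\mathbb{F}_{q'}}|$ is the order of the cyclic group generated by $(N_{n-i_0}^{\sigma}(\sigma^{i_0}(\zeta)),\ldots,N_{n-i_{m-1}}^{\sigma}(\sigma^{i_{m-1}}(\zeta)))$, namely the $\lcm$ of the orders of its components. Using $N_{n-i_j}^{\sigma}(\sigma^{i_j}(\zeta))=\zeta^{p^{ri_j}[n-i_j]_r}$ from $(\ref{Eqsigma})$ together with $\gcd(p^{ri_j},q'-1)=1$, the $j$-th component has order $\tfrac{q'-1}{\gcd([n-i_j]_r,q'-1)}$, which yields $|H_{n,\mathbb{F}_{q'}}|=\lcm_{j\in\{i_0,\ldots,i_{m-1}\}}\tfrac{q'-1}{\gcd([n-i_j]_r,q'-1)}$ and hence the displayed formula. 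The main obstacle is not conceptual but organizational: ensuring the long $\mathbb{F}_q$-morphism computation of Theorem~\ref{Th.1} transfers unchanged, and cleanly isolating the subfield-stability step $\sigma(\mathbb{F}_{q'})=\mathbb{F}_{q'}$ that legitimizes rank preservation for arbitrary $\alpha\in\mathbb{F}_{q'}$ rather than only for $\alpha$ in the fixed field $\Fq^{\sigma}$.
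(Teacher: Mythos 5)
Your proposal is correct and follows exactly the route the paper takes: the paper omits the proof of this theorem as being analogous to Theorem~\ref{Th_general} (and ultimately Theorem~\ref{Th.1}) with the single modification that $\alpha$ is restricted to $\mathbb{F}_{q'}^*$, which is precisely the cycle of implications you run. The only piece you add beyond what the paper defers to the Hamming case is the explicit justification that $\varphi_{\alpha}$ preserves the rank weight via $\sigma(\mathbb{F}_{q'})=\mathbb{F}_{q'}$ and the diagonal action by $N_i^{\sigma}(\alpha)\in\mathbb{F}_{q'}^*$, which is exactly the point the paper handles only in a footnote, so your write-up is a faithful (and slightly more complete) version of the intended argument.
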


When the rank is calculated over the fixed subfield $\Fq^{\sigma}$, we obtain additional characterizations of the rank equivalence equivalence.
\begin{corollary}
Let $ \vec{a}=(a_0,a_1,\ldots, a_{n-1}), \vec{b}=(b_0,b_1,\ldots, b_{n-1})\in \mathbb{F}_q^{n}$ have non-zero entries in the same $m$ positions, i.e., $a_{i_j}$ and $b_{i_j}$ are non-zero  for $0\leq i_0<\dots<i_{m-1}\leq n-1$. Moreover, let $\zeta $ be a primitive element of $\Fq^{\sigma}$. Then the following statements are equivalent:

\begin{enumerate}
    \item $ \vec{a} \sim_{( n,\Fq^{\sigma})} \vec{b} .$
    \item There exists an $\alpha \in (\mathbb{F}_q^{\sigma})^*$ such that the map

  \begin{equation}
  	\begin{array}{cccc}
  	\varphi_{\alpha}:& \mathbb{F}_q[x;\sigma] /\langle x^n-\vec{b}(x) \rangle  &\longrightarrow & \mathbb{F}_q[x;\sigma] /\langle x^n-\vec{a}(x) \rangle, \\ 
  	& f(x) & \longmapsto &  f(\alpha x),
  	\end{array}
  	\end{equation} 
    
is an  $\mathbb{F}_q$-morphism isometry with respect to the rank metric.

    \item  There exists $\alpha\in (\Fq^{\sigma})^* $ that is  a common right  root of the polynomials  $\displaystyle a_i x^{n-i } - b_i \in \Fq[x;\sigma], i \in \{0,\ldots,m-1\}. $ 

    \item   There exists $\alpha\in (\Fq^{\sigma})^* $ that is a right root of  $\gcrd_{\{0 \leq j \leq m-1\}}( a_{i} x^{n - i} - b_{i}) \in \Fq[x;\sigma].$
    
     \item  There exists $\alpha\in (\mathbb{F}_q^{\sigma})^* $ such that 
     $$ (a_{i_0}, a_{i_1},\ldots, a_{i_{m-1}})\star(\alpha^{n-{i_0}}, \alpha^{n-{i_1}} , \ldots, \alpha^{n-{i_{m-1}}})=  ( b_{i_0}, b_{i_1},\ldots, b_{i_{m-1}}).$$
\item  $ (a_{i_0}, a_{i_1},\ldots, a_{i_{m-1}})^{-1}\star( b_{i_0}, b_{i_1},\ldots, b_{i_{m-1}}) \in H_{n,\Fq^{\sigma}},$ where $H_{n,\Fq^{\sigma}}$ is the cyclic subgroup of $ (\Fq^{\sigma})^{* m}$ generated by 
$ ( N_{n-i_0}^{\sigma}(\sigma^{i_0}(\zeta)), N_{n-i_1}^{\sigma}(\sigma^{i_1}(\zeta)),\ldots, N_{n-i_{m-1}}^{\sigma}(\sigma^{i_{m-1}}(\zeta)) ) .$
\end{enumerate}
The equivalence between (1) and (7) implies that the number of $(n,\Fq^{\sigma})$-equivalence classes is 
$$ 
N_{(n,\Fq^{\sigma})}= \dfrac{ (q-1)^m}{ \lcm_{ j\in \{i_0,i_1,\ldots, i_{m-1} \} }\left(\frac{q_0-1}{ \gcd(n-j,q_0-1)}\right)}. $$ 
\end{corollary}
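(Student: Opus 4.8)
The plan is to prove the corollary as the $\Fq^{\sigma}$-specialization of the general rank-equivalence theorem, exploiting the single simplification that governs this whole case: since $\alpha \in (\Fq^{\sigma})^{*}$ is fixed by $\sigma$, every norm collapses to an ordinary power, $N_{k}^{\sigma}(\alpha) = \sigma^{k-1}(\alpha)\cdots\sigma(\alpha)\alpha = \alpha^{k}$, and in particular $N_{n-i_{j}}^{\sigma}(\sigma^{i_{j}}(\alpha)) = \alpha^{n-i_{j}}$. I would first dispatch (1) $\Leftrightarrow$ (2), which is immediate from the definition of $\sim_{(n,\Fq^{\sigma})}$, and then run the remaining statements as a cycle through (5).

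For (2) $\Rightarrow$ (5) I would reuse the coefficient-comparison argument of Theorem \ref{Th_general}, (1)$\Rightarrow$(2): an $\Fq$-morphism isometry $\varphi_{\alpha}$ forces $a_{i_{j}} N_{n-i_{j}}^{\sigma}(\sigma^{i_{j}}(\alpha)) = b_{i_{j}}$ on the nonzero positions, which under the collapse above reads $a_{i_{j}}\alpha^{n-i_{j}} = b_{i_{j}}$, i.e.\ statement (5). Statements (3) and (5) are then literally the same fact read through the right-evaluation formula (\ref{e1})--(\ref{Eqsigma}): for $P_{j}(x) := a_{i_{j}} x^{n-i_{j}} - b_{i_{j}}$ one computes $P_{j}(\alpha) = a_{i_{j}} N_{n-i_{j}}^{\sigma}(\alpha) - b_{i_{j}} = a_{i_{j}}\alpha^{n-i_{j}} - b_{i_{j}}$, so $\alpha$ is a common right root exactly when (5) holds. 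The equivalence (3) $\Leftrightarrow$ (4) follows from the universal property of the $\gcrd$: a common right root $\alpha$ means $x-\alpha$ right divides each $P_{j}$, hence right divides their $\gcrd$, and conversely any right root of the $\gcrd$ is a right root of every $P_{j}$ since the $\gcrd$ right divides each of them. Finally (5) $\Leftrightarrow$ (6) is a reindexing: writing $\alpha = \zeta^{h}$ (possible since $\zeta$ generates $(\Fq^{\sigma})^{*}$) and using $N_{n-i_{j}}^{\sigma}(\sigma^{i_{j}}(\zeta)) = \zeta^{n-i_{j}}$, the relations $a_{i_{j}}^{-1}b_{i_{j}} = \alpha^{n-i_{j}} = (\zeta^{n-i_{j}})^{h}$ say precisely that the tuple $(a_{i_{j}}^{-1}b_{i_{j}})_{j}$ is the $h$-th power of the generator of $H_{n,\Fq^{\sigma}}$.

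To close the cycle I would prove (5) $\Rightarrow$ (2) by the same three-part verification used in Theorem \ref{Th.1}. Well-definedness and bijectivity of $\varphi_{\alpha}$ come from checking $\ker\tilde{\varphi}_{\alpha} = \langle x^{n}-\vec{b}(x)\rangle$ via $\tilde{\varphi}_{\alpha}(x^{n}-\vec{b}(x)) = N_{n}^{\sigma}(\alpha)\,(x^{n}-\vec{a}(x)) \equiv 0$ together with the first isomorphism theorem; multiplicativity $\varphi_{\alpha}(fg) = \varphi_{\alpha}(f)\varphi_{\alpha}(g)$ is the computation of Theorem \ref{Th_general}, (4)$\Rightarrow$(1), whose engine is the identity $N_{i}^{\sigma}(\alpha)\sigma^{i}(N_{n-i+j}^{\sigma}(\alpha))\sigma^{j}(a_{k}) = N_{k+j}^{\sigma}(\alpha)\sigma^{j}(b_{k})$ fed by (5). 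The one genuinely new point is the \emph{rank} isometry: under the identification $\Phi$, $\varphi_{\alpha}$ scales the $k$-th coordinate by $N_{k}^{\sigma}(\alpha) = \alpha^{k} \in \Fq^{\sigma}$, so for any $v$ the $\Fq^{\sigma}$-span $\langle \alpha^{0}v_{0},\ldots,\alpha^{n-1}v_{n-1}\rangle_{\Fq^{\sigma}}$ equals $\langle v_{0},\ldots,v_{n-1}\rangle_{\Fq^{\sigma}}$, since rescaling a spanning set by nonzero scalars drawn from the base field $\Fq^{\sigma}$ leaves the span unchanged; hence $w_{r}$ and therefore $d_{R}$ are preserved. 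This is exactly where $\alpha \in \Fq^{\sigma}$ (rather than an arbitrary $\alpha \in \Fq^{*}$) is indispensable, and I expect it to be the main conceptual---though not computational---obstacle of the argument.

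The count then drops out of (1) $\Leftrightarrow$ (6). The nonzero data of the codes lives in $(\Fq^{*})^{m}$ through $(a_{i_{0}},\ldots,a_{i_{m-1}})$, and (6) states that two such tuples are equivalent iff they lie in the same coset of $H_{n,\Fq^{\sigma}}$; hence the number of classes is $(q-1)^{m}/|H_{n,\Fq^{\sigma}}|$. As $H_{n,\Fq^{\sigma}}$ is cyclic, generated by $(\zeta^{n-i_{0}},\ldots,\zeta^{n-i_{m-1}})$, its order is the least common multiple of the orders of its components, and the order of $\zeta^{n-i_{j}}$ in $(\Fq^{\sigma})^{*}$ is $(q_{0}-1)/\gcd(n-i_{j},q_{0}-1)$. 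This gives $N_{(n,\Fq^{\sigma})} = (q-1)^{m}/\lcm_{j}\big((q_{0}-1)/\gcd(n-i_{j},q_{0}-1)\big)$, as claimed.
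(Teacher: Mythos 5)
Your proposal is correct and follows essentially the same route as the paper's argument (given explicitly for the trinomial analogue in the Appendix and invoked by analogy here): specialize the general rank-equivalence theorem using the collapse $N_{k}^{\sigma}(\alpha)=\alpha^{k}$ for $\alpha\in(\Fq^{\sigma})^{*}$, read the root conditions through right evaluation and the $\gcrd$, verify multiplicativity via the norm identity from Theorem \ref{Th_general}, and count classes as cosets of the cyclic subgroup $H_{n,\Fq^{\sigma}}$. The only divergence is organizational — you link the statements in a hub around (5) rather than the paper's single linear cycle — and your explicit justification of the rank-isometry step (coordinatewise scaling by elements of $\Fq^{\sigma}$ preserves the $\Fq^{\sigma}$-span) is the same point the paper relies on implicitly.
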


Similarly to  the case of skew $(\ell,\sigma)$-trinomial codes, Theorem \ref{Th_general} implies the following results regarding the rank equivalence of skew polycyclic codes. The proofs are analogous to the skew $(\ell,\sigma)$-trinomial case in Theorem \ref{RankClasses}.

\begin{corollary}
As before let $ \vec{a}=(a_0,a_1,\ldots, a_{n-1})$ and $ \vec{b}=(b_0,b_1,\ldots, b_{n-1})$ be elements of $\mathbb{F}_q^{n}$  
of the same weight $m$  and denote by $ a_{i_j}$ and $  b_{i_j}$  the non-zeros components of $\vec{a}$ and $\vec{b}$. Let $ \mathbb{F}_{q'}$ be a subfield of $\Fq, $ and $n':=\frac{q-1}{q'-1} .$  Then,
\begin{enumerate}
    \item    The class of skew polycyclic  codes associated with   $ x^n-\dsum{j=0}{m-1} a_{i_j} x^{i_j} $ is $(n,\sigma,\mathbb{F}_{q'})$-equivalent (rank equivalence) to the class of  skew polycyclic  codes associated with the vector  $x^n-\dsum{j=0}{m-1}  x^{i_j}$  if and only if there exists $\alpha \in \mathbb{F}_{q'}^*$ such that
    $$ ( a_{i_0}, a_{i_1},\ldots, a_{i_{m-1}})\star( N_{n-i_0}^{\sigma}(\sigma^{i_0}(\alpha)), N_{n-i_1}^{\sigma}(\sigma^{i_1}(\alpha)),\ldots, N_{n-i_{m-1}}^{\sigma}(\sigma^{i_{m-1}}(\alpha)) ) = (1, 1,\ldots, 1 ).$$  
        
     \item  Let   $ d:= \gcd_{ j\in \{i_0,i_1,\ldots, i_{m-1} \} }\left(\frac{q'-1}{ \gcd([n-j]_r,q'-1)}\right)$, then the  class of  skew polycyclic  codes associated with  $ x^n-\dsum{j=1}{m-1} a_{i_j} x^{i_j}-a_0 $ is  $(n,\sigma,\mathbb{F}_{q'})$-equivalent (rank equivalence) to the class of  skew polycyclic  codes associated with the polynomial  $ x^n-\dsum{j=1}{m-1} \xi^{k_j} x^{i_j}- \xi^{k_0+h n'} ,$ with $k_j=0,1,\ldots, \gcd(n-i_j,q_0-1) , \ \ h=0,\ldots, n'^m d-1.$ 
\end{enumerate}
\end{corollary}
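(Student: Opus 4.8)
The plan is to read off both statements from the general rank-equivalence theorem stated immediately above, exactly as Corollary~\ref{Equiv_General2} is read off from Theorem~\ref{Th_general} in the Hamming case, and Theorem~\ref{RankClasses} from Theorem~\ref{Th.1} in the trinomial case. Recall that statement~(4) of that theorem identifies the $(n,\sigma,\mathbb{F}_{q'})$-equivalence classes with the cosets of the cyclic subgroup $H_{n,\mathbb{F}_{q'}}$ of $(\Fq^*)^m$ generated by $\left(N^{\sigma}_{n-i_0}(\sigma^{i_0}(\zeta)),\ldots,N^{\sigma}_{n-i_{m-1}}(\sigma^{i_{m-1}}(\zeta))\right)$, and that $H_{n,\mathbb{F}_{q'}}\subseteq(\mathbb{F}_{q'}^*)^m$. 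So the whole corollary is a statement about the coset space $(\Fq^*)^m/H_{n,\mathbb{F}_{q'}}$.

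For the first part I would specialise $\vec b$ to the all-ones vector supported on $\{i_0,\dots,i_{m-1}\}$, that is $b_{i_j}=1$ for all $j$. Then statement~(3) of the theorem above becomes the assertion that $\vec a\sim_{(n,\sigma,\mathbb{F}_{q'})}\vec b$ if and only if there is $\alpha\in\mathbb{F}_{q'}^*$ with $(a_{i_0},\dots,a_{i_{m-1}})\star\left(N^{\sigma}_{n-i_0}(\sigma^{i_0}(\alpha)),\dots,N^{\sigma}_{n-i_{m-1}}(\sigma^{i_{m-1}}(\alpha))\right)=(1,\dots,1)$, which is exactly the displayed condition. No further computation is needed.

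For the second part the goal is to exhibit an explicit transversal of $(\Fq^*)^m/H_{n,\mathbb{F}_{q'}}$ and to translate each representative into a defining polynomial. First I would recompute $|H_{n,\mathbb{F}_{q'}}|$: using $N^{\sigma}_{n-i}(\sigma^i(\alpha))=\alpha^{p^{ri}[n-i]_r}$ and $\gcd(p^{ri},q'-1)=1$, the $j$-th coordinate of the generator has order $\tfrac{q'-1}{d_{i_j}}$ with $d_{i_j}=\gcd([n-i_j]_r,q'-1)$, so $|H_{n,\mathbb{F}_{q'}}|=\lcm_{j}\tfrac{q'-1}{d_{i_j}}$ and the class count is the already recorded $N_{(n,\sigma,\mathbb{F}_{q'})}$. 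I would then factor the index through the tower $(\Fq^*)^m\supseteq(\mathbb{F}_{q'}^*)^m\supseteq H_{n,\mathbb{F}_{q'}}$: the outer index $[(\Fq^*)^m:(\mathbb{F}_{q'}^*)^m]=n'^m$ is realised by letting each coordinate run over the $n'$ cosets of $\mathbb{F}_{q'}^*$ in $\Fq^*$ (powers of $\xi$ in steps of $n'$), while the inner index $[(\mathbb{F}_{q'}^*)^m:H_{n,\mathbb{F}_{q'}}]$ is parametrised by powers of $\zeta=\xi^{n'}$. Splitting, as in Theorem~\ref{Equiv_2}, into the case $d=1$ where $H_{n,\mathbb{F}_{q'}}$ is the direct product of its coordinate subgroups and the case $d\neq1$ where it is a genuinely coupled cyclic group, one obtains the displayed family $x^n-\sum_{j=1}^{m-1}\xi^{k_j}x^{i_j}-\xi^{k_0+hn'}$, the parameter $h$ absorbing the coupling; a final cardinality check that the number of listed representatives equals $N_{(n,\sigma,\mathbb{F}_{q'})}$ confirms the transversal is complete and irredundant.

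The hard part will be the combinatorics of the coupled case $d\neq1$: since $H_{n,\mathbb{F}_{q'}}$ is a single cyclic subgroup linking all $m$ coordinates, the quotient $(\mathbb{F}_{q'}^*)^m/H_{n,\mathbb{F}_{q'}}$ does not split as a product of cyclic factors, so one must verify that the prescribed ranges for the $k_j$ and for $h$ (up to $n'^m d$) select each coset exactly once. This is precisely where the factor $n'^m$ coming from the extension $\mathbb{F}_{q'}\subseteq\Fq$ and the factor $d$ coming from the cyclic coupling appear, and where the integer identity $p^{ri}[n-i]_r=[n]_r-[i]_r$ is convenient for pinning down which power of $\xi$ realises a given coset. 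All remaining verifications are the routine $\Fq$-morphism and isometry checks already carried out in full in Theorem~\ref{Th.1}, so I would omit them and only reference that computation.
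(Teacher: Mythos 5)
Your proposal is correct and follows essentially the same route as the paper: part (1) is read off by specialising $\vec b$ to the all-ones support vector in statement (3) of the preceding theorem, and part (2) is the same coset decomposition of $(\Fq^*)^m$ modulo the cyclic subgroup $H_{n,\mathbb{F}_{q'}}$, split into the cases $d=1$ and $d\neq 1$, that the paper carries out for Theorem~\ref{Equiv_2} and Theorem~\ref{RankClasses} (the paper itself only gestures at this analogy rather than writing the argument out). Your extra explicitness about the tower $(\Fq^*)^m\supseteq(\mathbb{F}_{q'}^*)^m\supseteq H_{n,\mathbb{F}_{q'}}$ and the identity $p^{ri}[n-i]_r=[n]_r-[i]_r$ is consistent with, and slightly sharper than, what the paper records.
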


\section{Examples for code equivalence}

In this section, we provide some examples related to the Hamming and rank equivalence studied in the previous section. In particular, we characterize and analyze the possible number of Hamming equivalence classes and the representative element of each class, for skew  trinomial  codes of length $n$  over $\mathbb{F}_4$, $\mathbb{F}_8$, and $\mathbb{F}_9$. For rank equivalence, we characterize the number and the representative elements of each equivalence class in the case of trinomial codes of length $n$ over $\mathbb{F}_{2^s}$. Similar methods can be used to construct more examples for different alphabets.
 
\subsection{Hamming metric}

\begin{example}[Skew $(\ell,\sigma)$-trinomial  codes  over $\mathbb{F}_4$] 

Let $q = 4$, and $\xi$ be a primitive element of $\mathbb{F}_4$. In this example, we are interested in skew $(\ell,\sigma)$-trinomials of length $n$ over $\mathbb{F}_4$. Since the degree of the extension field $\mathbb{F}_4/\mathbb{F}_2$ is $2$, we have $\text{Aut}(\mathbb{F}_4) = \{\text{id}, \tau\}$, where $\tau$ is the Frobenius automorphism of $\mathbb{F}_4$, i.e., $\tau(a) = a^2$ for all $a\in \mathbb{F}_4$. 

For $\sigma=\tau,$ we consider skew $(\ell,\sigma)$-trinomial  codes   associated with  a skew polynomial of the form $f(x) =x^{n}-a_{\ell}x^{\ell}-a_0\in \mathbb{F}_4[x,\sigma],$
 where $\ell$ is an integer such that $0<\ell <n$.  Let $\vec{a}(x):=   a_{\ell}x^{\ell}+a_0 ,$  the Hamming weight of $ \vec{a}(x) $ is  $2.$
 As $[i]_1= \dfrac{2^i-1}{2-1}=2^i-1,$  by  Theorem  \ref{Th.1}, the number of Hamming $(n,\ell, \sigma)$-equivalence classes is given by  
 
 $$N:=\dfrac{3^2}{ \lcm\left(\dfrac{3}{\gcd( [n]_1 ,3 )}, \dfrac{3}{\gcd( [n-\ell]_1,3)}  \right)}    
      =\dfrac{9}{ \lcm\left( \frac{3}{\gcd( 2^{n}-1 ,  2^2-1 )}, \frac{3}{\gcd( 2^{n-\ell}-1 ,  2^2-1 )}\right) }  
 $$ 
 Now we analyze the value of $N$  based on the parity of $n$ and $\ell$.
 \begin{enumerate}
\item \textbf{If $n$ and $\ell$ are odd,} then $ \gcd( 2^{n}-1 ,  2^2-1 )=2^{\gcd(n,2)}-1=1 $ and  $ \gcd( 2^{n-\ell}-1 ,  2^2-1 )= 3.$ 
Then the number of $(n,\sigma)-$equivalence classes is $N=3,$ and by Theorem \ref{Equiv_2}, we have 
     $$d_0=\gcd( [n]_1, 3)= 1,\ \ d_{\ell}= \gcd([n-\ell]_1, 3)= 3 \text{ and} \ \  d=\gcd\left( \frac{3}{d_0}, \frac{3}{d_{\ell}}\right)=1.$$
 It follows that each  skew $(\ell,\sigma)$-trinomial  code associated with a skew  polynomial $f$ of the form  $f =x^{n}- a_{\ell} x^{\ell}-a_0$ is equivalent to a skew $(\ell,\sigma)$-trinomial  code  associated with one of the following  skew polynomials: 
 $$f_{0}= x^{n} - x^{\ell} - 1 , \quad f_{1}= x^{n} - \xi x^{\ell} - 1 \quad  f_{2}= x^{n} - \xi^2 x^{\ell} - 1 .  $$
 
     \item \underline{\textbf{If $n$ and $\ell$ are even,}} 
     then $ \gcd( 2^{n}-1 ,  2^2-1 )=2^{\gcd(n,2)}-1=3 $ and  $ \gcd( 2^{n-\ell}-1 ,  2^2-1 )= 3.$ 
Then the number of $(n,\sigma)-$equivalence classes is $N=9.$ In this case the $(n,\sigma)$-equivalence relation has no advantage in the sense that equivalence classes are singleton sets and the number of polynomials to be considered to construct skew polycyclic codes is not reduced.  

  \item \underline{\textbf{If $n$  is odd and $\ell$ is even,}}   then
  $ \gcd( 2^{n}-1 ,  2^2-1 )=2^{\gcd(n,2)}-1=1 $ and
  $ \gcd( 2^{n-\ell}-1 ,  2^2-1 )= 1.$  Then the number of $(n,\sigma)-$equivalence classes is $N=3,$ 
and by Theorem \ref{Equiv_2}, we have 
     $$d_0=\gcd( [n]_1, 3)= 1,\ \ d_{\ell}= \gcd([n-\ell]_1, 3)= 1 \text{ and} \ \  d=\gcd\left( \frac{3}{d_0}, \frac{3}{d_{\ell}}\right)=3$$
    
 It follows that each skew $(\ell,\sigma)$-trinomial  code associated with a skew  polynomial $f$ of the form  $f =x^{n}- a_{\ell} x^{\ell}-a_0,$ is equivalent to a skew $(\ell,\sigma)$-trinomial  code  associated with one of the following  skew polynomials: 
$$f_{0}= x^{n} - x^{\ell} - 1 , \quad f_{1}= x^{n} -  x^{\ell} -   \xi^{[n]_1},  \quad  f_{2}= x^{n} -   x^{\ell} -  \xi^{2[n]_1}.  $$

 \item \underline{\textbf{If $n$  is even and  $\ell$ is odd,}} then,  as above, we obtain that  each skew $(\ell,\sigma)$-trinomial  code associated with a skew  polynomial $f$ of the form  $f =x^{n}- a_{\ell} x^{\ell}-a_0,$ is equivalent to a skew $(\ell,\sigma)$-trinomial  code  associated with one of the following  skew polynomials:
 $$f_{0}= x^{n} - x^{\ell} - 1 , \quad f_{1}= x^{n} - x^{\ell} - \xi,  \quad  f_{2}= x^{n} - x^{\ell} - \xi^2 .  $$
   
 \end{enumerate}
 \end{example}

\begin{example}[Skew $(\ell,\sigma)$-trinomial  codes  over $\mathbb{F}_{8}$] 

Let $q = 8$, and $\xi$ be a primitive element of $\mathbb{F}_8$. In this example, we are interested in skew $(\ell,\sigma)$-trinomial codes of length $n$ over $\mathbb{F}_8$. Since the degree of the extension field $\mathbb{F}_{8}/\mathbb{F}_2$ is 3, we have $\text{Aut}(\mathbb{F}_{8}) = \{\text{id}, \tau,  \tau^{2}\}$, where $\tau$ is the Frobenius automorphism of $\mathbb{F}_{8}$, i.e., $\tau(a) = a^2$ for all $a\in \mathbb{F}_{8}$. Then we consider a number of cases:
\begin{enumerate}
    \item \textbf{ Case 1: }  For $ \sigma=\tau,$ we consider skew $(\ell,\sigma)$-trinomial  codes     associated with  a skew polynomial of the form $f(x) =x^{n}-a_{\ell}x^{\ell}-a_0\in \mathbb{F}_{8}[x,\sigma],$ where $\ell $ is an integer such that $0<\ell <n$.  Let $\vec{a}(x):=   a_{\ell}x^{\ell}+a_0$  with Hamming weight   $2.$
As $ [i]_1=\dfrac{2^{i}-1}{2-1}=2^i-1$ is odd, by Theorem  \ref{Th.1}, the number of  Hamming $(n,\ell, \sigma)$-equivalence classes is given by  
 
 $$N:=\dfrac{7^2}{ \lcm\left(\frac{7}{\gcd( [n]_1 ,7 )}, \frac{7}{\gcd( [n-\ell]_1,7)}  \right)}    
      =\dfrac{49}{ \lcm\left( \frac{7}{\gcd( 2^n-1 ,2^3-1 )}, \frac{7}{ \gcd( 2^{n-\ell}-1 ,2^3-1 )}\right)} .
 $$

 \begin{enumerate}
\item \underline{\textbf{If $ \gcd(n, 3)=\gcd(n-\ell, 3)=1$,}}  then the number of $(n,\sigma)$-equivalence classes is   $  N=7.$ By Theorem \ref{Equiv_2}, we have 
     $$  d_0= \gcd([n]_1, 3)=1, \ \ \ d_{\ell}= \gcd( [n-\ell]_1, 7)= 1 \text{ and} \ \  d=\gcd\left( \frac{ 7}{d_0}, \frac{ 7}{d_{\ell}}\right)= 7.$$
 It follows that each  skew $(\ell,\sigma)$-trinomial  code associated with a skew  polynomial $f$ of the form  $f =x^{n}- a_{\ell} x^{\ell}-a_0,$ is equivalent to a  skew $(\ell,\sigma)$-trinomial  code  associated with one of the following  skew polynomials: 
  $$f_{i}(x)= x^{n} - x^{\ell} - \xi^{ i [n]_1 }  \ \ \text{for } i\in \{0,\ldots, 6\}.  $$ 
 
     \item \underline{\textbf{If  $  \gcd(n, 3)= \gcd(n-\ell, 3)=3 $,}} then the number of $(n,\sigma)$-equivalence classes is   $  N=49,$ and the $(n,\sigma)$-equivalence relation has no advantage.

   \item \underline{\textbf{If  $  \gcd(n, 3)=1$  and  $\gcd(n-\ell, 3)=3 $,}}  then  the number of equivalence classes is  $N=7.$ 
   In this case, $ d_{0}= \gcd([n]_1, 7)= 1,\ \ d_{\ell}= \gcd( [n-\ell]_1, 7)= 7 .$ 
   It follows that each  skew $(\ell,\sigma)$-trinomial  code associated with a skew  polynomial $f$ of the form  $f =x^{n}- a_{\ell} x^{\ell}-a_0$ is equivalent to a  skew $(\ell,\sigma)$-trinomial  code  associated with one of the following  skew polynomials: 
  $$f_{i}(x)= x^{n} - \xi^i x^{\ell} - 1 \ \ \text{for } i\in \{0,\ldots, 6\}.  $$ 
  \item  \textbf{\underline{If  $  \gcd(n, 3)=3$  and  $\gcd(n-\ell, 3)=1 $, } } then, as above, we obtain that each 
  each  skew $(\ell,\sigma)$-trinomial  code associated with a skew  polynomial $f$ of the form  $f =x^{n}- a_{\ell} x^{\ell}-a_0,$ is equivalent to a  skew $(\ell,\sigma)$-trinomial  code  associated with one of the following  skew polynomials: 
  $$f_{i}(x)= x^{n} -  x^{\ell} - \xi^i \ \ \text{for } i\in \{0,\ldots, 6\}.  $$ 

    \end{enumerate}

\item \textbf{Case 2:} For $\sigma=\tau^2,$  we consider skew $(\ell,\sigma)$-trinomial   codes   associated with  a skew polynomial of the form $f(x) =x^{n}-a_{\ell}x^{\ell}-a_0\in \mathbb{F}_{8}[x,\sigma]$, where $\ell $ is an integer such that $0<\ell <n$.  Let $\vec{a}(x):=   a_{\ell}x^{\ell}+a_0 $  with  Hamming weight 2.   By Theorem  \ref{Th.1}, the number of Hamming $(n,\ell, \sigma)$-equivalence classes is given by  
 $$N:=\dfrac{7^2}{ \lcm\left(\frac{7}{\gcd( [n]_2 ,7 )}, \frac{7}{\gcd( [n-\ell]_2,7)}  \right)}.
 $$
 In this case, we performed computations  using Magma software, where we calculated $\operatorname{gcd}([n]_2, 7)$ for $n$ ranging from $1$ to $100,000,000$. We observed two distinct cases: 
$$
\gcd([n]_2, 7) = \begin{cases}
 1 , & \text{if $\gcd(n,3)=1$ }\\
7, & \text{otherwise }\\
  \end{cases} 
$$     

 \begin{enumerate}

 \item \underline{\textbf{If $\gcd(n,3)= \gcd(n,\ell)=1$,}} then $\displaystyle{  \gcd([n]_2, 7)=  \gcd([n-\ell]_2, 7)= 1}$, and so the number of  $(n, \sigma)$-equivalence classes is $ 7$. So the possible trinomials are: 
     $$f_{i}(x)= x^{n} - x^{\ell} - \xi^{ i [n]_2 }  \ \ \text{for  }\  i\in \{0,\ldots, 6\}.  $$ 

    \item \underline{ \textbf{If $\gcd(n,3)= \gcd(n,\ell)=3$,}} then $\gcd([n]_2, 7)= \gcd([n-\ell]_2, 7) = 7$.  In this case the number of possible classes is $N=49,$ and the equivalence relation has no advantage in this case.
    \item \underline{\textbf{If  $\gcd(n,3)= 1 $ and $ \gcd(n,\ell)=3$,}}  then the number of equivalence classes is $N=7,$ and the possible trinomials are 
    $$   x^n-\xi^i x^{\ell}- 1,  \ \ \text{for  }\  i \in \{0,\ldots, 6\}. $$

    \item  \underline{\textbf{If  $  \gcd(n,3)= 3 $ and $ \gcd(n,\ell)=1$,}} then the number of equivalence classes is $N=7,$ and the possible trinomials are 
    $$x^n- x^{\ell}- \xi^i,  \ \ \text{for  }\  i \in \{0,\ldots, 6\}. $$

     \end{enumerate}
 \end{enumerate}
 \end{example}

 \begin{example} [Skew $(\ell,\sigma)$-trinomial  codes  over $\mathbb{F}_9$]
Let $q = 9$, and $\xi$ be a primitive element of $\mathbb{F}_9$. In this example, we are interested in skew $(\ell,\sigma)$-trinomial   codes of length $n$ over $\mathbb{F}_9$. Since the degree of the extension field $\mathbb{F}_9/\mathbb{F}_3$ is $2$, we have $\text{Aut}(\mathbb{F}_9) = \{\text{id}, \tau\}$, where $\tau$ is the Frobenius automorphism of $\mathbb{F}_9$, defined as $\tau(a) = a^3$ for all $a \in \mathbb{F}_9$. 

For $\sigma=\tau,$ we consider skew $(\ell,\sigma)$-trinomial  codes   associated with  a skew polynomial of the form $f(x) =x^{n}-a_{\ell}x^{\ell}-a_0\in \mathbb{F}_4[x,\sigma],$
 where $\ell $ is  an integer such that $0<\ell <n$.  Let $\vec{a}(x):=   a_{\ell}x^{\ell}+a_0 ,$  with Hamming weight 2.
 As $[i]_1= \dfrac{3^i-1}{3-1},$ by Theorem  \ref{Th.1}, the number of the Hamming $(n,\ell, \sigma)$-equivalence classes is given by  
 
 $$N:=\dfrac{8^2}{ \lcm\left(\dfrac{8}{\gcd( [n]_1 ,8 )}, \dfrac{8}{\gcd( [n-\ell]_1,8)}  \right)}    
 $$ 
  In this case, we performed computations  using Magma software, where we calculated $\operatorname{gcd}([n]_1, 8)$ for $n$ ranging from $1$ to $100,000,000$. We observed the following cases:
$$
\gcd([n]_1, 8) = \begin{cases}
 1 , & \text{if $n$ is odd.}\\
4, &  \text{if $n \equiv  2 \mod(4)$.}\\
8, &  \text{if $n \equiv  0 \mod(4)$.}\\
  \end{cases} 
$$ 

\begin{enumerate}
    \item  \underline{\textbf{If $n$ and $\ell$ are odd,}} then $\gcd \left( [n]_1 , 8\right) =\gcd \left( [n-\ell]_1 , 8\right)= 1$, and the number of $(n,\ell,\sigma)$-equivalence classes is $ N= 8.$  Then by Theorem \ref{Equiv_2}, we have 
     $$d_0=\gcd( [n]_1, 8)= 3,\ \ d_{\ell}= \gcd( 3^{\ell}[n-\ell]_1, 3)= 1 \text{ and} \ \  d=\gcd\left( \frac{8}{d_0}, \frac{8}{d_{\ell}}\right)=8.$$

     It follows that each  skew $(\ell,\sigma)$-trinomial  code associated with a skew  polynomial $f$ of the form  $f =x^{n}- a_{\ell} x^{\ell}-a_0$ is equivalent to a  skew $(\ell,\sigma)$-trinomial  code  associated with one of the following  skew polynomials: 
  $$f_{i}(x)= x^{n} - x^{\ell} - \xi^{ i [n]_1 }  \ \ \text{for}, i\in \{0,\ldots, 7\}.  $$

 \item  \underline{\textbf{If $n \equiv n-\ell \equiv 2 \pmod{4},$}} then $\gcd \left( [n]_1 , 8\right) =\gcd \left( [n-\ell]_1 , 8\right)= 4$, and the number of $(n,\ell,\sigma)$-equivalence classes is $ N= 16.$  Then by Theorem \ref{Equiv_2}, we have 
     $$d_0=\gcd( [n]_1, 8)= 4,\ \ d_{\ell}= \gcd( [n-\ell]_1, 3)= 4 \text{ and} \ \  d=\gcd\left( \frac{8}{d_0}, \frac{8}{d_{\ell}}\right)=4.$$

     It follows that each  skew $(\ell,\sigma)$-trinomial  code associated with a skew  polynomial $f$ of the form  $f =x^{n}- a_{\ell} x^{\ell}-a_0,$ is equivalent to a  skew $(\ell,\sigma)$-trinomial  code  associated with one of the following  skew polynomials: 
  $$f_{i}(x)= x^{n} - \xi^i x^{\ell} - \xi^{ j+h [n]_1 }  \ \ \text{for}, i,j,h\in \{0,\ldots, 3\}.  $$

\item \underline{\textbf{ If $n \equiv n-\ell \equiv 0 \pmod{4}$,}} then $\gcd \left( [n]_1 , 8\right) =\gcd \left( [n-\ell]_1 , 8\right)= 8$, and the number of $(n,\ell,\sigma)$-equivalence classes is $ N= 64.$ Hence  the equivalence relation has no advantage in this case.
\end{enumerate}
One can also discuss other cases like \textbf{ $ n$ is odd} and\textbf{ $ n-\ell\equiv 2 \mod(4)$} using a similar approach as above to determine the possible skew polynomials.
\end{example}

\subsection{Rank metric}

\begin{example}[Skew $(\ell,\sigma)$-trinomial codes over $ \mathbb{F}_{2^s}$]

Let $q = 2^s$, and $\xi$ be a primitive element of $\mathbb{F}_{2^s}$. In this example, we are interested in rank equivalence of  skew $(\ell,\sigma)$-trinomial codes of length $n$ over $\mathbb{F}_{2^s}$. The degree of the extension   $\mathbb{F}_{2^s}/\mathbb{F}_2$ is $s$, and  $\text{Aut}(\mathbb{F}_{2^s}) = \{\text{id}, \tau, \ldots, \tau^{s-1}\}$, where $\tau$ is the Frobenius automorphism of $\mathbb{F}_{2^s}$, i.e., $\tau(a) = a^2$ for all $a\in \mathbb{F}_{2^s}$. 

Let $\sigma=\tau^r,$ then the fixed subfield of $ \Fq$ by $ \sigma$ is $ \mathbb{F}_{2^{\gcd(r,s)}}.$
We consider skew $(\ell,\sigma)$-trinomial  codes   associated with  a skew polynomial of the form $f(x) =x^{n}-a_{\ell}x^{\ell}-a_0\in \mathbb{F}_{2^s}[x;\sigma],$ where $\ell$ is an integer such that $0<\ell <n$.  Let $\vec{a}(x):=   a_{\ell}x^{\ell}+a_0 .$ According to Corollary \ref{Th.1FqSigma}, the number of $( n,\ell, \Fq^{\sigma} )$-equivalence classes is given by  
 
 $$ N:=\dfrac{(2^s-1)^2}{ \lcm\left(\dfrac{2^{\gcd(r,s)}-1}{\gcd( n , 2^{\gcd(r,s)}-1)}, \dfrac{2^{\gcd(r,s)}-1}{\gcd( n-\ell, 2^{\gcd(r,s)}-1)}  \right)} . 
 $$ 
 Now we analyze the value of $N$  based on the parity of $n,\ell, r$ and $s.$
 \begin{enumerate}
\item \textbf{If $\gcd(r,s)=1$,} then $ N= (2^s-1)^2$ and each $( n,\ell, \Fq^{\sigma})$-equivalence  class contains one element.
\item \textbf{If $\gcd(r,s)= l\neq 1$ and $n,\ \ell$ are even,} then 

$$ N=\dfrac{(2^s-1)^2}{ \lcm\left(\dfrac{2^{\ell}-1}{1}, \dfrac{2^{\ell}-1}{1}  \right)}= \frac{(2^s-1)^2}{2^{\ell}-1}.$$
By Theorem \ref{RankClasses}, we have 
     $$d_0=\gcd( n, 2^{\ell}-1)= 1,\ \ d_{\ell}= \gcd( n-\ell, 2^{\ell}-1)= 1 \text{ and} \ \  d=\gcd\left( \frac{ 2^{\ell}-1}{d_0}, \frac{2^{\ell}-1}{d_{\ell}}\right)=2^{\ell}-1.$$
 It follows that each  skew $(\ell,\sigma)$-trinomial  code associated with a skew  polynomial $f$ of the form  $f =x^{n}- a_{\ell} x^{\ell}-a_0$ is rank equivalent to a skew $(\ell,\sigma)$-trinomial  code  associated with one of the following  skew polynomials: 
  $$f_i(x)= x^n- x^{\ell}- \xi^{\ell n'},$$
  for $\ell \in \{ 0,\ldots, d n'^2-1\},$ with $ n':= \frac{2^s-1}{2^l-1}.$

\item \textbf{If $\gcd(r,s)= \ell\neq 1$ and $n$ and $ \ell$ are  odd,} then
$$ N=\dfrac{(2^s-1)^2}{ \lcm\left(\dfrac{2^{l}-1}{ \gcd(n, 2^{l}-1) }, \dfrac{2^{l}-1}{1}  \right)}.$$
\begin{enumerate}
    \item If $ \gcd(n, 2^{\ell}-1)=1  $ then the number of $( n,\ell, \Fq^{\sigma})$-equivalence classes is $ \displaystyle{N =\frac{(2^s-1)^2}{2^{\ell}-1}.} $
By Theorem \ref{RankClasses}, we have 
     $$d_0=\gcd( n, 2^{\ell}-1)= 1,\ \ d_{\ell}= \gcd( n-\ell, 2^{\ell}-1)= 1 \text{ and} \ \  d=\gcd\left( \frac{ 2^{\ell}-1}{d_0}, \frac{2^{\ell}-1}{d_{\ell}}\right)=
     2^{\ell}-1.$$
 It follows that each  skew $(\ell,\sigma)$-trinomial  code associated with a skew  polynomial $f$ of the form  $f =x^{n}- a_{\ell} x^{\ell}-a_0$ is rank equivalent to a skew $(\ell,\sigma)$-trinomial  code  associated with one of the following  skew polynomials: 
  $$f_i(x)= x^n- x^{\ell}- \xi^{\ell n'},$$
  for $\ell \in \{ 0,\ldots, d n'^2-1\},$ with $ n':= \frac{2^s-1}{2^{\ell}-1}.$
  \item If $ 2^{\ell}-1  $ is a multiple of $n$, then the number of $( n,\ell, \Fq^{\sigma})-$equivalence classes is $ \displaystyle{N =\frac{(2^s-1)^2}{2^{\ell}-1}.} $
By Theorem \ref{RankClasses}, we have 
     $$d_0=\gcd( n, 2^{\ell}-1)= 2^{\ell}-1,\ \ d_{\ell}= \gcd( n-\ell, 2^{\ell}-1)= 1 \text{ and} \ \  d=\gcd\left( \frac{ 2^{\ell}-1}{d_0}, \frac{2^{\ell}-1}{d_{\ell}}\right)=1.$$
     
 It follows that each  skew $(\ell,\sigma)$-trinomial  code associated with a skew  polynomial $f$ of the form  $f =x^{n}- a_{\ell} x^{\ell}-a_0$ is rank equivalent to a skew $(\ell,\sigma)$-trinomial  code  associated with one of the following  skew polynomials: 
  $$f_i(x)= x^n- x^{\ell}- \xi^{i+\ell n'},$$
  for $i \in \{ 0,\ldots,d_0-1\} ;\quad \ell \in \{ 0,\ldots,n'^2-1\},$ with $ n':= \frac{2^s-1}{2^{\ell}-1}.$
\end{enumerate}

 \end{enumerate}
\end{example}

\section*{Conclusion}
In this paper, we investigated bounds on the (rank and Hamming) minimum distance of and equivalences of classes of skew polycyclic codes over a finite field $\mathbb{F}_q$, associated with a skew polynomial $f(x) \in \mathbb{F}_q[x;\sigma]$. First, we proved the Roos-like bound for both the Hamming and the rank metric for this class of codes. Our bound generalized many known bounds on various types of cyclic codes, both in the skew or commutative setting. Moreover, we analyzed constructions for such code with a designed distance, based on the concept of $\mu$-closed sets as an analog of cyclotomic cosets in the commutative case. 

Next, we focused on the equivalence of ambient spaces for skew polycyclic codes with respect to both metrics. We provided general conditions under which two families of skew polycyclic codes are equivalent. This was used to determine the exact number of equivalence classes. In the particular case of skew trinomial codes, we showed which code families are equivalent to the ones defined by the polynomial $x^n-x^\ell -1$, and more generally, how any equivalence class of a given skew trinomial can be determined. Finally, we gave some examples illustrating applications of the theoretical results on equivalence developed in this paper.

In future work, we will consider more general isometries to reduce the number of equivalence classes and use them for a more refined classification of polycyclic codes. Moreover, we plan to introduce the concept of ``skew cyclotomic cosets'' to facilitate the factorization of $x^n-1$ and $x^n-a$ and trinomial polynomials in $\mathbb{F}_q[x;\sigma]$, which may give an efficient way to classify skew polycyclic and skew constacyclic  codes of length $n$ over a finite field $\Fq$.

\printbibliography

\section{Appendix}
In this section, for clarity we provide some proofs regarding rank metric equivalence that were previously omitted due to their similarity to the Hamming case. 

\subsection{ Proof of Proposition \ref{ProofEquiv} }
\begin{proof}
The proof is the same in both cases; therefore, for simplicity, we denote $\cong_{(n,\sigma)}$
  without specifying the type of equivalence used.
\begin{enumerate}
\item  $f_1(x) \cong_{n,\sigma} f_1(x)  ,$ since 
the identity  automorphism of  $\mathbb{F}_q[x;\sigma] /\langle 
f_1(x)\rangle$  is  an $\Fq$-morphism isometry.
\item  If $ f_1(x) \cong_{n,\sigma} f_2(x) $ 
then there exists  $ \varphi_{\alpha}$ an $\Fq$-morphism isometry
 $$\varphi: \mathbb{F}_q[x,\sigma] /\langle f_2(x) \rangle \longrightarrow \mathbb{F}_q[x,\sigma] /\langle f_1(x) \rangle.$$
 
As $ \varphi$ preserves the Hamming (resp. rank) weight, then for all $f\in \mathbb{F}_q[x,\sigma] /\langle f_2(x) \rangle, $   $$ w(f)=w(\varphi(\varphi^{-1}(f)))=w(\varphi^{-1}(f)).$$
Hence, 
$$
\varphi^{-1}: \mathbb{F}_q[x,\sigma] /\langle f_1(x) \rangle \longrightarrow \mathbb{F}_q[x,\sigma] /\langle f_2(x) \rangle.
$$
is also an  $\Fq$-morphism isometry. Hence, $ f_2(x) \cong_{n,\sigma} f_1(x).$

\item Suppose that $ f_1(x) \cong_{n,\sigma} f_2(x) $  and $ f_2(x) \cong_{n,\sigma} f_3(x),$  then there exist two $\Fq$-morphism isometries
$$
\begin{aligned}
& \varphi_1: \mathbb{F}_q[x,\sigma] /\left\langle f_2(x) \right\rangle \longrightarrow \mathbb{F}_q[x,\sigma] /\langle 
f_1(x) \rangle \\
& \varphi_2: \mathbb{F}_q[x,\sigma] /\langle f_3(x)  \rangle \longrightarrow \mathbb{F}_q[x,\sigma] /\langle f_2(x)  \rangle .
\end{aligned}
$$
Therefore
$$
\varphi_{1}  \circ \varphi_{2} : \mathbb{F}_q[x,\sigma] /\langle f_3(x)  \longrightarrow 
\mathbb{F}_q[x,\sigma] /\langle f_1(x)  \rangle
$$
is an $\Fq$-morphism isometry. Hence, $ f_1(x) \cong_{n,\sigma} f_3(x). $
\end{enumerate}
 \end{proof}
 
 \subsection{Proof of Corollary \ref{Th.1FqSigma}}
 \begin{proof}\;
\begin{itemize}
    \item[(1) $\Rightarrow$ (2):] Immediately follows from the Definition \ref{DefFqSigma}.
 \item[(2) $\Rightarrow$ (3):] Suppose that we have an $\Fq$-morphism  isometry $\varphi_{\alpha}$ given by 
$\varphi_{\alpha}(x) = \alpha x$ for some $ \alpha \in (\mathbb{F}_q^{\sigma})^*$. It follows that

$$ 
\varphi_{\alpha}(x^i)= (\alpha x)^i= N_i^{\sigma}(\alpha) x^{ i} =\alpha^i x^i,$$
since  $\alpha \in (\mathbb{F}_q^{\sigma})^* $. Therefore, $ N_i^{\sigma}(\alpha)=\alpha^i\ \forall  \ i=0,1, \ldots, n-1.$

As $\varphi_{\alpha}$ is an $\Fq$-morphism isometry and  $ \varphi_{\alpha}(x^n-b_{\ell}x^{\ell}-b_0)=0 \ \text{mod}_r (x^n- a_{\ell} x^{\ell}-a_0) ,$ we have
\begin{equation}
\varphi( x^n)=   b_{\ell}  \alpha^{\ell} x^{\ell} +b_0.
 \end{equation}

On the other hand,
\begin{equation}
\varphi( x^n)=  N_n^{\sigma}(\alpha) x^n= \alpha^n ( a_{\ell}x^{\ell}+a_0) = \alpha^n  a_{\ell} x^{\ell}+ \alpha^n a_0.
\end{equation}

Comparing term by term, we obtain  that $ a_0 \alpha^n  = b_0   $ and $  a_{\ell} \alpha^n =  \alpha^{\ell} b_{\ell}. $ As $ \alpha \in (\mathbb{F}_q^{\sigma})^*$, we have 
$$ a_0 \alpha^n= a_0 N_n^{\sigma}(\alpha)   = b_0   \quad  \text{and} \quad   a_{\ell} \alpha^{n-\ell} = a_{\ell}N_{n-\ell}^{\sigma}(\alpha) =  b_{\ell}, $$
which means that $\alpha$ is a common right root of the polynomials $a_0 x^n - b_0$ and $a_{\ell} x^{n-\ell} - b_{\ell}$.

    \item[(3) $\Rightarrow$ (4)\;] and (4) $\Rightarrow$ (5) are immediate.
 \item[(5) $\Rightarrow$ (6):]
    Let $ \alpha \in (\mathbb{F}_q^{\sigma})^*$ be a right root of the polynomial $\gcrd(x^{n} - b_0 a_0^{-1}, x^{n-\ell} - b_{\ell} a_{\ell}^{-1})$. Then $\alpha$ is a common root of the polynomials $a_0 x^n - b_0$ and $a_{\ell} x^{n-\ell} - b_{\ell}$. It follows that  $a_i N^{\sigma}_{n-i}(\alpha)=a_i \alpha^{n-i} = b_i,\ \text{for any} \ i\in \{ 0,\ell\}$, i.e.,
$$  (b_0, b_{\ell} )= (\alpha^n a_0, \alpha^{n-\ell} a_{\ell})=(\alpha^n, \alpha^{n-\ell})\star(a_0, a_{\ell} ).$$ 

\item[(6) $\Rightarrow$ (7):] Suppose that

$$  (b_0, b_{\ell} )= \left( \alpha^n a_0, \alpha^{n-\ell} a_{\ell}\right)=\left(\alpha^n, \alpha^{n-\ell})\right)\star (a_0, a_{\ell}).$$ 
It follows that
$$  (a_0,a_{\ell})^{-1}\star (b_0, a_{\ell})=  \left( \alpha^n, \alpha^{n-\ell} \right)= \left( \zeta^{jn}, \zeta^{j(n-\ell)}\right)= 
\left( \zeta^{n}, \zeta^{(n-\ell)}\right)^j .$$
Then $ (a_0,a_{\ell})^{-1}\star (b_0, a_{\ell}) $ belongs to the cyclic subgroup  $H_{\ell,\Fq^{\sigma}}$ generated by $  
\left( \zeta^{n}, \zeta^{(n-\ell)}\right) $  as a subgroup of $ (\Fq^{\sigma})^{*}\times (\Fq^{\sigma})^{*}.$

\item[(7) $\Rightarrow$ (1):]
     Suppose that $ (a_0,a_{\ell})^{-1}\star (b_0, b_{\ell}) $ is an element of the cyclic subgroup  $H_{\ell,\sigma}$ generated by 
     $  (\zeta^n, \zeta^{n-\ell})  $  as a subgroup of $ (\Fq^{\sigma})^{*}\times (\Fq^{\sigma})^{*}.$ Then there exists an integer $ h$ such that 
     $$ (a_0,a_{\ell})^{-1}\star (b_0, b_{\ell})=   \left( \zeta^{nh}, \zeta^{(n-\ell)h}\right).$$
    For $ \alpha=  \zeta^{h}, $ we obtain  that $  a_i \alpha^{n-i}= b_i, $ for any $ i\in \{ 0,\ell\}.$
As in the proof of theorem \ref{Th.1}, we can prove that $ \varphi_{\alpha}$ given by
  \begin{equation}
        \begin{array}{cccc}
        \tilde{\varphi}_{\alpha} : & \mathbb{F}_q[x;\sigma]/\langle x^n - b_{\ell} x^{\ell} - b_0 \rangle,   & \longrightarrow & \mathbb{F}_q[x;\sigma] /\langle x^n - a_{\ell} x^{\ell} - a_0 \rangle, \\ 
        & f(x) & \longmapsto & f(\alpha x),
        \end{array}
    \end{equation} 
    is an $\Fq$-morphism isometry with respect to the rank metric, with particularity that $ \alpha \in (\mathbb{F}_q^{\sigma})^* $ and $ N_i^{\sigma}(\alpha)= \alpha^i.$ 
\end{itemize}
By the equivalence between (1) and (7), we deduce that the number of $(n,\ell,\Fq^{\sigma})$-equivalence classes on $\Fq^*\times \Fq^*$ corresponds to the order of the group $ \left( \Fq^*\times \Fq^* \right)/H_{\ell,\Fq^{\sigma}},$ which equals 
$$ N_{(n,\ell,\Fq^{\sigma})}=\dfrac{ (q-1)^2}{ \lcm(\frac{q_0-1}{ \gcd(n,q_0-1)}, \frac{q_0-1}{ \gcd(n-\ell,q_0-1)})} . $$
 \end{proof}

\subsection{Proof of Theorem  \ref{RankClasses}}
\begin{proof}\;
\begin{enumerate}
    \item  If $ d = 1 $, then the cyclic group $ H_{\ell,\sigma} $ generated by $ \left( \zeta^n, \zeta^{n-\ell}\right) $ is isomorphic to the group $ \langle  N_n^{\sigma}(\zeta) \rangle \times \langle N_{n-\ell}^{\sigma}(\sigma^{\ell}(\zeta))  \rangle $ and has order $ \frac{(q'-1)^2}{d_0 d_{\ell}} $. By Theorem \ref{Th.1}, the number of $ (n,\ell,\sigma, \mathbb{F}_{q'})$-equivalence classes is $ \left(\frac{q-1}{q'-1}\right)^2 d_0 d_{\ell}= n'^2 d_0 d_{\ell} $. 
    Therefore, we can partition $ \Fq^* \times \Fq^* $ as follows

$$
\Fq^* \times \Fq^* = \bigcup_{l =0}^{n'^2} \bigcup_{i=0}^{d} \bigcup_{j=0}^{d_{\ell}-1} ( \xi^{i+ln'}\xi^i, \xi^j ) H_{\ell, \mathbb{F}_{q'} }.
$$
Then any pair $(a_0,a_{\ell})$ is $( n,\ell,\sigma, \mathbb{F}_{q'})$-equivalent to one of the pairs   $ ( \xi^{i+ln' },  \xi^{j } ) , $ for  $i=0,1,\ldots, d_0-1$; \   $j=0,1,\ldots, d_{\ell}-1;  $ and $ l=0,\ldots, n'^2-1.$
 \item 
 If $d\neq 1, $ the number of  $(n,\ell,\sigma, \mathbb{F}_{q'})$-equivalence classes is $ n'^2 d d_0 d_{\ell} ,$ and so we partition $ \Fq^* \times \Fq^* $ as follows:
$$  \Fq^{*}\times \Fq^{*} = 
\bigcup_{h=0}^{d n'^2 -1}  \bigcup_{i=0}^{d_0-1} \bigcup_{j=0}^{d_{\ell}-1} (\xi^{i +hn'} ,\xi^{j}) H_{\ell,\mathbb{F}_{q'}}, $$
which implies the second statement, similarly to the first case.
\end{enumerate}
\end{proof}\;

\subsection{Proof of Proposition \ref{RankEquiv_Consta}}
\begin{proof}
\begin{enumerate}
\item  As $ (a_0, a_{\ell}) \sim_{(n,\ell,\sigma, \mathbb{F}_{q'})} (b_0, b_{\ell})$, by the second  statement of Theorem \ref{Th.1}, 
there exists $\alpha\in \Fq^*$ such that $$ a_0 N_n^{\sigma}(\alpha)  = b_0   \quad  \text{and} \quad   a_{\ell} N^{\sigma}_{n-\ell}(\sigma^{\ell}(\alpha))=  b_{\ell}. $$
This gives
$$ a_0^{-1}  b_0 = N_n^{\sigma}(\alpha)  \quad  \text{and} \quad   a_{\ell}^{-1}   b_{\ell}=  N^{\sigma}_{n-\ell}(\sigma^{\ell}(\alpha)) . $$
Which means that  
$$ a_0^{-1}  b_0 \in   \langle  N_n^{\sigma}(\xi) \rangle  \quad  \text{and} \quad   a_{\ell}^{-1}   b_{\ell} \in  
\langle   N^{\sigma}_{n-\ell}(\sigma^{\ell}(\xi))  \rangle. $$

\item Let   $i\in \{0,\ell\}$ then the order of $ \langle   N^{\sigma}_{n-i}(\sigma^{i}(\xi))  \rangle =
\langle   \xi^{p^{ri}[n]_r}  \rangle $ is  $$d_i=\dfrac{q-1}{\gcd(p^{ri}[n-i]_r, q-1)}= \dfrac{q-1}{\gcd([n-i]_r, q-1)} , \text{ since $ \gcd(p^{ri},q-1)=1.$ }$$  It follows that  $(b_i a_i^{-1})^{d_i} =1 ,\ i\in \{0,\ell\}.$
\item by the second  statement of Theorem \ref{Th.1}, 
there exists $\alpha\in \Fq^*$ such that $$ a_0 N_n^{\sigma}(\alpha)  = b_0   \quad  \text{and} \quad   a_{\ell} N^{\sigma}_{n-\ell}(\sigma^{\ell}(\alpha))=  b_{\ell}. $$  As in the proof of \cite[Theorem 6]{Ouazzou2025}. For $i\in \{0,\ell\},$ we verify
 that the  map  $\varphi_i$ given by 
\begin{equation}
  	\begin{array}{cccc}
  	\varphi_{i}:& \mathbb{F}_q[x] /\langle x^{n-i}-b_0 \rangle  &\longrightarrow & \mathbb{F}_q[x] /\langle x^{n-i}-a_0 \rangle, \\
  	& f(x) & \longmapsto &  f(\sigma^i(\alpha) x).
  	\end{array}
  	\end{equation} 
    is an $\Fq$-morphism isometry with respect to the Hamming  metric.
    \item Follows from the third statement.
\end{enumerate}
\end{proof}

\subsection{More details in relation to Remark \ref{ConstacyclicFq_sigma}}
We prove here an alternative version of Proposition \ref{RankEquiv_Consta}, when we calculate the rank over the fixed subfield.
 \begin{proposition}\label{PropFq_sigma}
Let $ (a_0, a_{\ell}) $ and $ (b_0, b_{\ell}) $ be elements of $ \Fq^{*} \times \Fq^{*} $ such that $ (a_0, a_{\ell}) \sim_{(n, \ell,\Fq^{\sigma)}} (b_0, b_{\ell}) $. Then:
\begin{enumerate}
    \item There exists $\alpha\in (\Fq^{\sigma})^{*} ,$ such that for each $ i \in \{0, \ell\} $, $ a_i^{-1} b_i \in \langle \alpha^{n-i} \rangle.$
    \item For each $ i \in \{0, \ell\} $, $ (a_i^{-1} b_i)^{d_i} = 1 $, where $ d_i := \dfrac{q_0-1}{\gcd(n-i, q_0-1)} $.
    \item For each $ i \in \{0, \ell\} $, the map 
    \begin{equation*}
  	\begin{array}{cccc}
  	\varphi_{i}:& \mathbb{F}_q[x;\sigma] /\langle x^{n-i}-b_i\rangle  &\longrightarrow & \mathbb{F}_q[x;\sigma] /\langle
    x^{n-i}-b_i\rangle, \\ 
  	& f(x)& \longmapsto &  f(\alpha x)  
  	\end{array}
  	\end{equation*} 
is an $\mathbb{F}_q$-morphism isometry  with respect to the rank  distance.
    
    \item For each $ i \in \{0, \ell\} $ the class of skew  $(a_i,\sigma)$-constacyclic codes of length $n-i$ is { rank equivalent} to the class of skew   $(b_i,\sigma)$-constacyclic codes of length $n-i$  over $\Fq.$ 
\end{enumerate}
\end{proposition}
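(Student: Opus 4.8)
The plan is to extract the explicit scalar provided by the rank equivalence and then run, in the constacyclic setting, the same argument used for Theorem~\ref{Th.1}. By hypothesis $(a_0,a_\ell)\sim_{(n,\ell,\Fq^{\sigma})}(b_0,b_\ell)$, so statement~(6) of Corollary~\ref{Th.1FqSigma} furnishes an $\alpha\in(\Fq^{\sigma})^*$ with
$$
a_0\,\alpha^{n}=b_0\qquad\text{and}\qquad a_\ell\,\alpha^{n-\ell}=b_\ell,
$$
that is, $a_i\,\alpha^{n-i}=b_i$ for $i\in\{0,\ell\}$. This single identity drives all four claims, and the crucial simplification throughout is that, since $\alpha$ is fixed by $\sigma$, one has $N^{\sigma}_{j}(\alpha)=\alpha^{j}$ for every $j$; this is precisely what allows the same scalar to work in the rank metric.

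First I would dispatch (1) by rearranging the identity to $a_i^{-1}b_i=\alpha^{n-i}$, which visibly lies in $\langle\alpha^{n-i}\rangle$. For (2) I would use $\alpha\in\mathbb{F}_{q_0}^*$, hence $\alpha^{q_0-1}=1$, and compute
$$
(a_i^{-1}b_i)^{d_i}=\alpha^{(n-i)d_i}=\alpha^{\frac{(n-i)(q_0-1)}{\gcd(n-i,q_0-1)}}=\bigl(\alpha^{q_0-1}\bigr)^{\frac{n-i}{\gcd(n-i,q_0-1)}}=1,
$$
the exponent being an integer multiple of $q_0-1$ because $\gcd(n-i,q_0-1)$ divides $n-i$.

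The substantive part is (3). I would first lift to $\tilde\varphi_i\colon\Fq[x;\sigma]\to\Fq[x;\sigma]/\langle x^{n-i}-a_i\rangle$, $f(x)\mapsto f(\alpha x)$, and verify well-definedness by checking that $\tilde\varphi_i(x^{n-i}-b_i)=\alpha^{n-i}x^{n-i}-b_i\equiv\alpha^{n-i}a_i-b_i=0$ modulo $x^{n-i}-a_i$, using $N^{\sigma}_{n-i}(\alpha)=\alpha^{n-i}$ together with $a_i\alpha^{n-i}=b_i$; the reverse inclusion giving $\ker\tilde\varphi_i=\langle x^{n-i}-b_i\rangle$ is identical to the kernel computation in the proof of Theorem~\ref{Th.1}, so $\varphi_i$ descends to a ring isomorphism by the first isomorphism theorem. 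Preservation of the rank distance then follows because coordinatewise scaling by the powers $\alpha^{j}$ with $\alpha\in\Fq^{\sigma}=\mathbb{F}_{q_0}$ does not change the $\mathbb{F}_{q_0}$-dimension of the span of the coordinates. The remaining $\mathbb{F}_q$-morphism property $\varphi_i(f(x)g(x))=\varphi_i(f(x))\varphi_i(g(x))$ is checked exactly as in Theorem~\ref{Th.1}; here the central identity used there collapses, via $N^{\sigma}_{j}(\alpha)=\alpha^{j}$ and $a_i\alpha^{n-i}=b_i$, to $\sigma^{j}$ applied to the elementary scalar relation $\alpha^{n}a_i=\alpha^{i}b_i$, so the verification becomes routine. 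Finally (4) is simply the translation of (3) into the language of code families: the rank-isometric ring isomorphism $\varphi_i$ identifies the ambient space of $(b_i,\sigma)$-constacyclic codes of length $n-i$ with that of $(a_i,\sigma)$-constacyclic codes of the same length, hence carries one family onto the other as a rank equivalence.

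The only genuinely computational step is the multiplicativity in (3); everything else is bookkeeping on the single relation $a_i\alpha^{n-i}=b_i$. I expect no real obstacle there, since the centrality $\alpha\in\Fq^{\sigma}$ reduces the elaborate norm manipulation of Theorem~\ref{Th.1} to ordinary powers of $\alpha$; the main care needed is to reindex that computation for the binomial modulus $x^{n-i}-a_i$ rather than the trinomial one.
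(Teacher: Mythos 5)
Your proof is correct and follows essentially the same route as the paper's: extract $\alpha\in(\Fq^{\sigma})^{*}$ with $a_i\alpha^{n-i}=b_i$ from Corollary~\ref{Th.1FqSigma}, read off (1) immediately, and verify (3) by the lift/kernel/first-isomorphism-theorem argument of Theorem~\ref{Th.1} (which the paper delegates to the analogous computation in the cited reference), with (4) as a direct translation. Your treatment of (2) by the explicit computation $(a_i^{-1}b_i)^{d_i}=\bigl(\alpha^{q_0-1}\bigr)^{(n-i)/\gcd(n-i,q_0-1)}=1$ is in fact slightly tighter than the paper's appeal to the order of $\langle\alpha^{n-i}\rangle$, since that order only \emph{divides} $d_i$ unless $\alpha$ is primitive in $\mathbb{F}_{q_0}$, but this is a cosmetic difference rather than a different method.
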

\begin{proof}
\begin{enumerate}
\item  As $ (a_0, a_{\ell}) \sim_{(n, \ell,\Fq^{\sigma)}} (b_0, b_{\ell})$, by the third statement of Theorem \ref{Th.1FqSigma}, there
exists  $\alpha\in \Fq^{\sigma}$ which is a common right root of $\displaystyle a_i x^{n-i } - b_i \in \Fq[x;\sigma], i \in \{0,\ell\}. $ It follows that $$ a_0 \alpha^n= a_0 N_n^{\sigma}(\alpha)   = b_0   \quad  \text{and} \quad   a_{\ell} \alpha^{n-\ell} = a_{\ell}N_{n-\ell}^{\sigma}(\alpha) =  b_{\ell}. $$ 
This gives
$$ a_0^{-1}  b_0 = \alpha^n  \quad  \text{and} \quad   a_{\ell}^{-1}   b_{\ell}=  \alpha^{n-\ell} . $$
Which means that  
$$ a_0^{-1}  b_0 \in   \langle \alpha^{n} \rangle  \quad  \text{and} \quad   a_{\ell}^{-1}   b_{\ell} \in   \langle \alpha^{n-\ell} \rangle. $$

\item  Let $\zeta $ be a primitive element of $ \Fq^{\sigma}$ then $\alpha= \zeta^i\in \Fq^{\sigma}$ for some $ 0\leq i\leq q_0-1.$  
Then for each  $i\in \{0,\ell\}$ the order of $ \langle \alpha^{n-i} \rangle $ is  $d_i=\dfrac{q_0-1}{\gcd(n-i, q_0-1)}.$ It follows that  $(b_i a_i^{-1})^{d_i} =1 ,\ i\in \{0,\ell\}.$
\item By Theorem \ref{Th.1FqSigma}, there is  a common root $\alpha$  of $a_0  x^n-b_0$ and $a_1 x^{n-1}-b_1$. Then for 
 any  $i\in \{0,1\}$,  we have $ a_0  \alpha^{n-i}=b_0.$ As in the proof of \cite[Theorem 6]{Ouazzou2025}, we verify
 that the  map  $\varphi_i$ given by 
\begin{equation}
  	\begin{array}{cccc}
  	\varphi_{i}:& \mathbb{F}_q[x] /\langle x^{n-i}-b_0 \rangle  &\longrightarrow & \mathbb{F}_q[x] /\langle x^{n-i}-a_0 \rangle, \\
  	& f(x) & \longmapsto &  f(\alpha x).
  	\end{array}
  	\end{equation} 
    is an $\Fq$-morphism isometry with respect to the rank metric.
    \item Follows from the third statement.
\end{enumerate}
\end{proof}

\end{document}